\definecolor{purple}{HTML}{7068A4}
\definecolor{blue}{HTML}{3F70B2}
\definecolor{orange}{HTML}{A47458}
\definecolor{yellow}{HTML}{9B7A3C}
\crefname{claim}{Claim}{Claims}
\crefname{property}{Property}{Properties}
\crefname{algocf}{Algorithm}{Algorithms}
\Crefname{algocf}{Algorithm}{Algorithms}
\g@addto@macro\bfseries{\boldmath}
\newtheorem{lemma}{Lemma}[section]
\newtheorem{claim}[lemma]{Claim}
\newtheorem{corollary}[lemma]{Corollary}
\newtheorem{observation}[lemma]{Observation}
\newtheorem{proposition}[lemma]{Proposition}
\theoremstyle{definition}
\newtheorem{definition}[lemma]{Definition}
\theoremstyle{remark}
\newtheorem*{remark*}{Remark}
\newcommand{\onlineLOCAL}{\mathsf{Online}$-$\mathsf{LOCAL}}
\newcommand{\dynamicLOCAL}{\mathsf{Dynamic}$-$\mathsf{LOCAL}}
\newcommand{\dynamicLOCALpm}{\mathsf{Dynamic}$-$\mathsf{LOCAL}^\pm}
\newcommand{\LOCAL}{\mathsf{LOCAL}}
\newcommand{\SLOCAL}{\mathsf{SLOCAL}}
\renewcommand{\epsilon}{\varepsilon}
\newcommand{\poly}{\operatorname{poly}}
\title{A Tight Lower Bound for 3-Coloring \\ Grids in the Online-LOCAL Model}
\author{\hspace{1cm} Yi-Jun Chang\footnote{National University of Singapore. Email: cyijun@nus.edu.sg} \and Gopinath Mishra\footnote{National University of Singapore. Email: gopinath@nus.edu.sg} \and Hung Thuan Nguyen\footnote{National University of Singapore. Email: hung@u.nus.edu}\hspace{1cm} \and Mingyang Yang\footnote{National University of Singapore. Email: e0589912@u.nus.edu} \and Yu-Cheng Yeh\footnote{National University of Singapore. Email: yu-cheng@comp.nus.edu.sg}}
\date{}
\begin{document}

\maketitle
\thispagestyle{empty}

\begin{abstract}
  Recently, \citeauthor*{akbari2021locality}~(ICALP 2023) studied the locality of graph problems in distributed, sequential, dynamic, and online settings from a {unified} point of view. They designed a novel $O(\log n)$-locality deterministic algorithm for proper 3-coloring bipartite graphs in the $\mathsf{Online}$-$\mathsf{LOCAL}$ model. In this work, we establish the optimality of the algorithm by showing a \textit{tight} deterministic $\Omega(\log n)$ locality lower bound, which holds even on grids. To complement this result, we have the following additional results:

\begin{enumerate}
    \item We show a higher and {tight} $\Omega(\sqrt{n})$ lower bound for 3-coloring toroidal and cylindrical grids.
    \item  Considering the generalization of $3$-coloring bipartite graphs to $(k+1)$-coloring $k$-partite graphs, %where $k \geq 2$ is a constant, 
    we show that the problem also has $O(\log n)$ locality when the input is a $k$-partite graph that admits a \emph{locally inferable unique coloring}. This special class of $k$-partite graphs covers several fundamental graph classes such as $k$-trees and triangular grids. Moreover, for this special class of graphs, we show a {tight} $\Omega(\log n)$ locality lower bound.
    \item For general $k$-partite graphs with $k \geq 3$, we prove that the problem of $(2k-2)$-coloring $k$-partite graphs exhibits a locality of $\Omega(n)$ in the $\onlineLOCAL$ model, matching the round complexity of the same problem in the $\LOCAL$ model recently shown by \citeauthor*{coiteux2023no}~(STOC 2024). Consequently, the problem of $(k+1)$-coloring $k$-partite graphs admits a locality lower bound of $\Omega(n)$ when $k\geq 3$, contrasting sharply with the $\Theta(\log n)$ locality for the case of $k=2$. 
\end{enumerate}
	%\begin{enumerate}
	%	\item We show a higher and \emph{tight} $\Omega(\sqrt{n})$ lower bound for 3-coloring toroidal and cylindrical grids.
	%	\item We also consider the generalization of $3$-coloring bipartite graphs to $(k+1)$-coloring $k$-partite graphs, where $k \geq 2$ is a constant. For (general) $k$-partite graphs, we prove that the problem of $(2k-2)$-coloring $k$-partite graphs exhibits a locality of $\Omega(n)$ in the $\mathsf{Online}$-$\mathsf{LOCAL}$ model, matching the round complexity of the same problem in the $\mathsf{LOCAL}$ model recently shown by Coiteux-Roy, d’Amore, Gajjala, Kuhn, Le Gall, Lievonen, Modanese, Renou, Schmid, and Suomela (STOC 2024). Consequently, the problem of $(k+1)$-coloring $k$-partite graphs admits a locality lower bound of $\Omega(n)$ when $k\geq 3$, contrasting sharply with the $\Theta(\log n)$ locality for the case of $k=2$.
	%	\item We prove that $(k+1)$-coloring $k$-partite graphs attains $O(\log n)$ locality when the input graph admits a \emph{locally inferable unique coloring}. This special class of $k$-partite graphs covers several fundamental graph classes, including connected bipartite graphs, $k$-trees, and triangular grids. We 
	%\end{enumerate}
\end{abstract}

\newpage
\bigskip
\tableofcontents
\bigskip
\thispagestyle{empty}

% \newpage

% \listoftodos

\newpage
\pagenumbering{arabic}
\section{Introduction}
We focus on the \emph{locality} of graph problems, which has been studied in a variety of models. Throughout the paper, we only consider the \emph{deterministic} setting.

\begin{description}
    \item[Distributed setting:] In the classical $\LOCAL$ model~\cite{linial1992locality, peleg2000distributed} of distributed computing, an algorithm with locality $T$ processes the nodes of a graph simultaneously in parallel, in a way that each node determines its output by examining its neighborhood with a radius of $T$. Intuitively, an algorithm with locality $T$ in the $\LOCAL$ model can be run on a distributed network in $T$ synchronous communication rounds.
    \item[Sequential setting:] In the $\SLOCAL$ model~\cite{ghaffari2017complexity}, an algorithm with locality $T$ processes the nodes in a sequential order that is selected by an adversary. The output of a node may depend on its $T$-radius neighborhood and the outputs of the previously processed nodes. For instance, the well-known \emph{greedy coloring algorithm} solves the $(\text{degree} + 1)$-list coloring problem with locality $1$ in $\SLOCAL$.
    \item[Dynamic setting:] There have been several papers studying local algorithms in the dynamic setting~\cite{assadi2018fully, barenboim2019fully, bhattacharya2018dynamic, dobrev2013independent, gupta2018simple, ivkovic1993fully, neiman2015simple}, where an adversary constructs the graph \emph{dynamically}, adding or removing nodes and edges sequentially. Following each modification, an algorithm with locality $T$ is limited to adjusting the solution within the $T$-radius neighborhood of the point of change. Recently, \citet{akbari2021locality} formalized this setting by defining the two models $\dynamicLOCAL$ and $\dynamicLOCALpm$ to capture the incremental dynamic setting and the fully dynamic setting, respectively.
\end{description}

\citet{ghaffari2017complexity} studied the connections between $\LOCAL$ and $\SLOCAL$. They developed a method of simulating an arbitrary $\SLOCAL$ algorithm in the $\LOCAL$ model using \emph{network decompositions}. Combining this method with the deterministic polylogarithmic-round network decomposition of \citet{RozhonG20}, one can infer that the class of graph problems solvable with polylogarithmic locality deterministically is \emph{identical} in $\LOCAL$ and $\SLOCAL$.

Very recently, \citet{akbari2021locality} studied the locality of graph problems in distributed, sequential, and dynamic settings from a \emph{unified} point of view. They considered a new model $\onlineLOCAL$, which is a variant of $\SLOCAL$ that has a \emph{global memory}. Among all the models $\{\LOCAL$, $\SLOCAL$, $\dynamicLOCAL$, $\dynamicLOCALpm$, $\onlineLOCAL\}$, the $\onlineLOCAL$ model is the \emph{strongest} one: any algorithm in any of these models can be simulated in $\onlineLOCAL$ model with the same asymptotic locality. The $\LOCAL$ model is the \emph{weakest} model in the sense that
any $\LOCAL$ algorithm can be simulated in any of the above models with the same asymptotic locality.
Therefore, all the mentioned models are sandwiched between the $\LOCAL$ and $\onlineLOCAL$ models. Consequently, if one can match a locality lower bound in $\onlineLOCAL$ with a locality upper bound in $\LOCAL$, then it immediately implies a tight locality bound in all of the models.

\citet{akbari2021locality} obtained such a tight locality bound for a wide range of problems. For all \emph{locally checkable labeling} problems in paths, cycles, and rooted regular trees, they have nearly the same locality in all of the models:
\[
    \mathsf{LOCAL} \approx \mathsf{SLOCAL} \approx \mathsf{Dynamic}\text{-}\mathsf{LOCAL}^\pm \approx \mathsf{Dynamic}\text{-}\mathsf{LOCAL} \approx \mathsf{Online}\text{-}\mathsf{LOCAL}.
\]

\paragraph{Coloring bipartite graphs in $\onlineLOCAL$.}

While the above result suggests that these models can be quite similar, \citet{akbari2021locality} demonstrated an \emph{exponential} separation between the $\LOCAL$ and $\onlineLOCAL$ models. They designed a novel $O(\log n)$-locality algorithm for proper 3-coloring bipartite graphs in the $\onlineLOCAL$ model. In contrast, the same problem is known to have locality $\Omega(\sqrt{n})$ in the $\LOCAL$ model~\cite{brandt2017lcl}. Their work left open the following question.

\medskip
\centerline{%
    \parbox{0.7\linewidth}{
        \begin{mdframed}[hidealllines=true,backgroundcolor=gray!25]\begin{center}
                \emph{``Is it possible to find a 3-coloring in bipartite graphs in the $\onlineLOCAL$ model with locality
                    $o(\log n)$?''}
            \end{center}\end{mdframed}
    }%
}
\medskip

In this work, we resolve the question by demonstrating a \emph{tight} $\Omega(\log n)$ locality lower bound in the $\onlineLOCAL$ which holds even on grids, which are bipartite graphs.

\begin{restatable}{theorem}{main}
    In the $\onlineLOCAL$ model, the locality of 3-coloring a $\left(\sqrt{n} \times \sqrt{n}\right)$ grid is $\Omega(\log n)$.
    \label{thm:complexity}
\end{restatable}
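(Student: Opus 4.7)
The plan is to establish the lower bound by an adversarial construction exploiting the height-function representation of 3-colorings on the grid. First I would record the key preliminary fact: any proper 3-coloring $c$ of a connected subregion of the grid lifts to an integer height function $h\colon V \to \mathbb{Z}$, unique up to a global additive constant, with $|h(u)-h(v)|=1$ on every edge and $c(v) \equiv h(v) \pmod 3$. This is well-defined because the signed $\pm 1$ edge-differences around any unit 4-face of a proper 3-coloring must sum to $0$---the only multiple of $3$ that is both even and of absolute value at most $4$. The consequence is the Lipschitz inequality $|h(u) - h(v)| \le \dist(u,v)$, which bounds the spread of $h$ on the $\sqrt n \times \sqrt n$ grid by its diameter $O(\sqrt n)$. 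The adversary's task therefore reduces to forcing an algorithm of locality $T = o(\log n)$ to commit, on some connected subgrid, to colors whose implied heights violate this bound.

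The central construction is a recursive adversary running in $K = \Theta(\log n)$ phases. After phase $k$, the adversary will have revealed a chain $Q_k$ of queries, fitting inside a subgrid region of linear size $O(2^k)$, on which the committed coloring induces a height gap of at least $\Omega(2^k)$ between two designated endpoints. The inductive step from $Q_k$ to $Q_{k+1}$ places a fresh copy of the phase-$k$ construction in a previously untouched portion of the grid, arranged so that all new $T$-balls are disjoint from the vertices revealed so far, and then appends a short bridge joining the two copies. Because the new queries operate on local views disjoint from everything in past views, the adversary can apply a pigeonhole argument---selecting among many essentially equivalent candidate placements and using the fact that the algorithm has only three possible outputs per query---to guarantee that the signed height drift on the new copy reinforces, rather than cancels, that on the existing chain. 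This doubles the gap and maintains the invariant.

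The main obstacle I anticipate is the amplification step, namely making the pigeonhole argument rigorous despite the algorithm's global memory. The plan is to draw each new copy from a pool of $\Omega(n/2^{2k})$ essentially disjoint candidate regions whose interior patches are locally indistinguishable from each other. Within each region the adversary proceeds greedily, query by query, branching on the few distinct outputs available at each step and keeping the branch that preserves the desired drift sign. An area-budget check---$\sum_{k \le K} O(2^{2k}) = O(2^{2K})$ must fit inside the $\sqrt n \times \sqrt n$ grid---limits $K$ to $\tfrac12 \log_2 n + O(1)$, which is still $\Theta(\log n)$. Iterating to this $K$ yields a committed partial coloring whose implied height spread exceeds the grid diameter, contradicting the Lipschitz bound and hence preventing any extension to a proper 3-coloring. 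Since no algorithm with $T = o(\log n)$ can therefore succeed on every instance, the $\Omega(\log n)$ bound follows and, combined with the $O(\log n)$ upper bound of~\citet{akbari2021locality}, pins down the locality exactly.
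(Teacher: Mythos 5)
Your approach parallels the paper's at a structural level: you use an integer-valued invariant that vanishes around grid cycles (your height function, which changes by $\pm 1$ across every edge; the paper's $b$-value instead changes by $\pm 1$ only across $\{1,2\}$-colored edges and by $0$ across edges incident to color~$3$), you amplify it recursively by bridging two disjointly-discovered copies, and you exploit the adversary's freedom to position and reflect components whose local views have not yet met. The height-function framing is legitimate and could be made to work, but two steps in your write-up are incorrect as stated.

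First, the doubling claim does not hold. When two copies with drift $\geq d_k$ are bridged, the algorithm colors the bridge and chooses the bridge drift $y$; the adversary's only control over $y$ is its parity, obtained by picking the bridge length from $\{2,3\}$. The algorithm can choose $y$ to nearly cancel, e.g.\ $y \approx -d_k$, leaving $\max\bigl(|2d_k+y|,\,|y|\bigr)$ only around $d_k+1$. The paper's \Cref{lem:construct} carries out exactly this parity calculation (via \Cref{lem:parity}) and proves an \emph{additive} $+1$ gain per level, not a doubling. Your ``pigeonhole over candidate placements'' / ``branch and keep the good branch'' sketch has no valid implementation in $\onlineLOCAL$: the adversary cannot rewind a query, and physically querying $\Omega(n/2^{2k})$ candidate regions exhausts the grid in a single phase, breaking your own area budget. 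Fortunately, additive amplification suffices: with region size growing by a constant factor per level, $k = \Theta(\log n)$ levels fit in a row of length $\sqrt n$, giving drift $\geq k$.

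Second, the closing contradiction also fails. The Lipschitz bound cannot be violated by a partial coloring on a connected subregion: the height spread over that region is at most its diameter, which is at most the grid diameter, so ``height spread exceeds the grid diameter'' never occurs and yields no contradiction. The contradiction must instead come from closing the amplified path into a \emph{cycle} whose drift sum is forced to be nonzero. The paper does this by coloring a second, disjointly-viewed parallel row at distance $2T+2$, reflecting it (using the adversary's freedom over disconnected components) so that its drift is $\geq 0$, and noting that the two connecting column segments contribute at least $-(2T+2)$ each; since the amplified side has drift $\geq k > 4T+4$, the resulting rectangle has nonzero invariant, contradicting \Cref{lem:cycle}. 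Your proposal omits any such closing step, and without it there is no contradiction even if the amplification were repaired.
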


Combining the above result with the $O(\log n)$ upper bound by \citet{akbari2021locality}, a tight bound is obtained.

\begin{corollary}\label{coro:main}
    In the $\onlineLOCAL$ model, the locality of 3-coloring bipartite graphs is $\Theta(\log n)$.
\end{corollary}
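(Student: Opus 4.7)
The plan is to prove the $\Omega(\log n)$ lower bound by an adaptive adversarial strategy on the $(\sqrt n \times \sqrt n)$ grid. We suppose for contradiction an Online-LOCAL algorithm $A$ with locality $T$, and describe an adversary that produces a query order forcing $A$ to output an invalid 3-coloring whenever $T = o(\log n)$. The core algebraic fact I would exploit is that a proper 3-coloring corresponds to a $\mathbb{Z}/3\mathbb{Z}$-valued function $c$ on the vertices with $c(u) - c(v) \in \{+1, -1\} \pmod 3$ on every edge; consequently, the signed ``color flow'' must vanish modulo $3$ around every cycle of the grid. The adversary will weaponize this cyclic constraint.

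I would design queries in a dyadic hierarchy: phase $i$ (for $i = 0, 1, \ldots, \Theta(\log n)$) places queries on a pattern of spacing $\Theta(2^i)$, with every new query at distance greater than $2T$ from all previously queried nodes. This spacing ensures that the $T$-neighborhood revealed at each query contains no earlier queried node, so $A$'s output at that query is a function of the abstract interaction history together with only a generic local patch of the grid. The adversary then adaptively selects each query's position based on $A$'s prior responses in order to maintain a potential $\Phi$ that measures the set of cyclic-flow residues mod $3$ that remain compatible with the current partial coloring on a carefully chosen family of nested grid loops. The dyadic spacing guarantees that $\Phi$ can shrink by only a constant factor per phase, so after $\Theta(\log n)$ phases a specific nonzero residue is forced on a $\Theta(\sqrt n)$-scale axis-aligned loop traced through the queried nodes.

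The contradiction then follows directly: since any valid 3-coloring must produce zero signed flow around every cycle, the forced nonzero residue means that the partial coloring admits no consistent extension, and so $A$ must have output an invalid color on some queried node. Combined with the $O(\log n)$ upper bound of Akbari et al., this yields the tight bound stated in \Cref{coro:main}.

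The main obstacle I anticipate is ruling out the possibility that $A$'s unbounded global memory lets it preemptively output colors that cancel the adversary's future construction before the potential can be driven down. To handle this I would exploit the adversary's adaptive power together with the disjointness of the $T$-balls: since at each query $A$'s output is determined only by the abstract history and an essentially query-independent local patch, the adversary can, at each step, branch its subsequent plan into several continuations and pick one along which $\Phi$ stays large, regardless of $A$'s committed output. Formalizing this branching / indistinguishability argument and rigorously tracking the potential across all $\Theta(\log n)$ phases is, I expect, the most technically delicate part of the proof.
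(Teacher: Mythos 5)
Your overall plan—force the algorithm to produce a coloring that violates a cycle invariant on the grid—is the same high-level idea as the paper, but the specific invariant you propose does not work, and the crucial mechanism for accumulating a violation is missing. You take as your cycle constraint that ``the signed color flow must vanish modulo $3$ around every cycle,'' i.e., that $\sum_{(u,v)\in C} s(u,v) \equiv 0 \pmod 3$ where $s(u,v)\in\{\pm 1\}$ represents $c(u)-c(v)\bmod 3$. But this is \emph{automatically} true for every cycle and every proper $3$-coloring: the sum telescopes modulo $3$. No adversary can force a nonzero residue mod $3$, so the contradiction you aim for can never occur. The non-trivial constraint you want is that the \emph{integer} sum (a priori any multiple of $3$) is \emph{exactly} zero around contractible cycles of the planar grid, via a face-sum argument; you conflate ``zero mod $3$'' with ``zero over $\mathbb Z$.''

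Even after correcting to the integer winding invariant, the heart of the argument is still missing: how does an adversary with locality budget $o(\log n)$ force a path whose winding contribution is large? The paper supplies exactly this engine. It introduces the $b$-value (a variant of the winding sum restricted to $\{1,2\}$-colored edges, which is more robust to partial colorings), proves a parity lemma (\Cref{lem:parity}) that the parity of $b(P)$ is determined only by the endpoint colors and the path length, and then gives a recursive construction (\Cref{lem:construct}): two path fragments with $b$-value $\geq k-1$ are joined by a connector of length $2$ or $3$, the length chosen via the parity lemma so that some sub-path of the concatenation has $b$-value $\geq k$, with total discovered length $O(5^k T)$. Your ``potential $\Phi$ over nested grid loops that shrinks by a constant factor per phase'' is vague where this mechanism should be; in particular the mod-$3$ residue set has size at most $3$ (or $3^L$ for $L$ loops), which does not support a $\Theta(\log n)$-phase constant-factor-decay accounting. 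Finally, you anticipate needing a branching indistinguishability argument to cope with global memory, but the paper's fix is simpler and geometric: the adversary exploits that it may freely reorient and reposition the as-yet-unconnected fragments of the discovered region, since the algorithm sees only an abstract induced subgraph, not an embedding—this is what lets it reverse a row's direction or separate the two rows before closing the cycle.
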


To establish \Cref{thm:complexity}, a main technical barrier that we overcome is that in $\onlineLOCAL$, we cannot rely on an \emph{indistinguishability} argument due to the presence of global memory. Most of the existing $\LOCAL$ lower bound proofs rely on such kind of an argument.
In particular, the known $\Omega(\sqrt{n})$ lower bound for 3-coloring grids in the $\LOCAL$ model~\cite{brandt2017lcl} works by first proving the lower bound on \emph{toroidal} grids and then using an indistinguishability argument to extend the lower bound to grids. The lower bound argument in~\cite{brandt2017lcl} heavily depends on the assumption that the underlying graph is a toroidal grid with an \emph{odd} number of columns, which is \emph{not} bipartite.

The presence of a global memory in $\onlineLOCAL$ allows the possibility for a graph problem to have different localities in grids and toroidal grids. Indeed, in this work, we show a much higher $\Omega(\sqrt{n})$ lower bound for 3-coloring on toroidal grids in the $\onlineLOCAL$ model. The lower bound does not contradict the $O(\log n)$-locality algorithm of~\cite{akbari2021locality} because toroidal grids are not bipartite in general.

\begin{restatable}{theorem}{torus}
    In the $\onlineLOCAL$ model, the locality of 3-coloring $\left(\sqrt{n} \times \sqrt{n}\right)$ toroidal and cylindrical grids is $\Omega(\sqrt{n})$.
    \label{thm:toroidal}
\end{restatable}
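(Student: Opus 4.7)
The plan is to exploit a global topological invariant — the \emph{winding number} of a proper $3$-coloring — in combination with an adaptive $\onlineLOCAL$ adversary. For any proper $3$-coloring $\chi$, orient each edge $(u,v)$ with $\sigma(u,v) = +1$ if $\chi(v)-\chi(u)\equiv 1\pmod 3$ and $-1$ otherwise. The signed sum along any cycle is a multiple of $3$, while around any $4$-face it lies in $\{-4,-2,0,2,4\}$ and so is forced to be $0$. Telescoping the $4$-face identity across adjacent rows shows that all horizontal cycles share a common signed sum $3w_h$, giving a well-defined winding number $w_h \in [-m/3,m/3]$; likewise $w_v$ for vertical cycles on a torus. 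A $3$-coloring of an $m \times m$ torus is thus characterized in part by a pair $(w_h,w_v)$ drawn from a set of size $\Theta(m^2)$, whereas on a cylinder only $w_h$ is constrained.

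Assuming for contradiction that an algorithm $\cA$ has locality $T = o(\sqrt{n})$ on an $m \times m$ grid with $m = \sqrt{n}$ (taken odd so the torus is non-bipartite), I would construct a family $\{G_w\}_w$ of toroidal/cylindrical grids, one per admissible horizontal winding $w$, satisfying: (i) each $G_w$ has horizontal winding $w_h = w$; (ii) for a probe set $Q$ of $k = \Theta(m/T)$ well-spaced vertices on a single horizontal cycle, the $T$-balls around $Q$, including their identifiers, are isomorphic across all $G_w$; (iii) all topological variation between different $G_w$ is confined to a narrow unprobed twist band. Since $\cA$ is deterministic and both its local view and its memory transcript coincide in every $G_w$, it must output the same colors $c_1,\ldots,c_k$ on $Q$ in every world. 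These fixed colors determine a partial signed sum on the revealed edges, and the remaining contribution to $3w$ must come from the unrevealed gap edges; these number at most $m - (2T+1)k$, bounding the feasible correction budget. When the number of admissible $w$'s, which is $\Theta(m)$, exceeds this budget, some $G_w$ admits no proper $3$-coloring extending $c_1,\ldots,c_k$, a contradiction that forces $T = \Omega(\sqrt{n})$.

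The main obstacle will be step (ii): designing $\{G_w\}_w$ so that every probed $T$-ball is genuinely identical across $w$ in both structure and identifiers, so that the $\onlineLOCAL$ transcripts — not merely the local structures — coincide and must therefore produce identical outputs. This demands a careful stitching construction that concentrates the topological twist entirely in an unprobed band of the torus, in the spirit of the classical $\Omega(\sqrt{n})$ $\LOCAL$ lower bound of~\cite{brandt2017lcl} but strengthened to survive $\onlineLOCAL$'s global memory and adaptive query ordering. The cylindrical case reuses the same framework, invoking only the horizontal winding number; the $4$-face identity remains valid because cylindrical grids still contain all square faces. Together these pieces yield the $\Omega(\sqrt{n})$ locality bound.
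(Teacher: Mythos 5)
Your winding-number invariant is essentially the same object as the paper's $b$-value (a different but equivalent normalization of the signed sum around a cycle), and you correctly extract the properties that matter: the signed sum of every row is a common constant, and for odd $\sqrt{n}$ that constant is odd, hence nonzero. But the adversary strategy you build on top of it has a genuine gap, and it is also much heavier than necessary.

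The gap is in the construction of the family $\{G_w\}_w$. You posit ``one graph per admissible horizontal winding $w$'' with the property that ``each $G_w$ has horizontal winding $w_h = w$.'' But the winding number is a property of a \emph{proper $3$-coloring}, not of the torus: a single $\left(\sqrt{n}\times\sqrt{n}\right)$ toroidal grid admits colorings of many different winding numbers, so ``$G_w$ has winding $w$'' is not well-defined, and it is unclear what the family should even be. You then lean on an indistinguishability argument (identical $T$-balls, identical identifiers, identical transcripts) which is precisely the tool the paper explicitly says fails in $\onlineLOCAL$ because of global memory; you flag this as ``the main obstacle'' but never resolve it. Finally, the closing ``budget'' count is not a contradiction as stated: the uncolored gap edges are the algorithm's to color, not the adversary's, so bounding their total contribution does not by itself eliminate any admissible winding. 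As written, the argument does not close.

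The paper's proof sidesteps all of this with a two-row trick that needs no family of instances and no indistinguishability. The adversary reveals two rows whose $T$-balls are disjoint and non-adjacent (possible once $4T + 4 \le \sqrt{n}$). The algorithm then sees two disconnected cylinders with no information about their relative orientation in the torus. The cancellation-of-cells argument gives $b(C_1)+b(C_2)=0$ for oppositely-oriented rows, and \Cref{lem:parity} with odd $\sqrt{n}$ forces both $b(C_1)$ and $b(C_2)$ to be odd, hence nonzero. Because the algorithm has already committed, and because an odd number and its negation are distinct, the adversary can always choose the two rows' orientations after the fact so that $b(C_1)+b(C_2)\ne 0$. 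This uses exactly the freedom that is special to $\onlineLOCAL$ — the adversary can re-embed disjoint revealed regions — rather than fighting against the model's global memory. If you want to salvage your approach, focus on that single degree of freedom (relative orientation of two committed rows) instead of an entire family of twist-banded tori; the parity of the $b$-value is all the ``admissible winding'' counting you need.
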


Combining the above $\onlineLOCAL$ lower bound with the trivial $O(\sqrt{n})$ locality upper bound in the $\LOCAL$ model, we establish that $\Theta(\sqrt{n})$ is a \emph{tight} locality bound for 3-coloring toroidal and cylindrical grids in all the models discussed above, as they are sandwiched between $\LOCAL$ and $\onlineLOCAL$.

\begin{corollary}\label{cor:lb}
    The locality of 3-coloring $\left(\sqrt{n} \times \sqrt{n}\right)$ toroidal and cylindrical grids is $\Theta(\sqrt{n})$ in all the models $\{\LOCAL$, $\SLOCAL$, $\dynamicLOCAL$, $\dynamicLOCALpm$, $\onlineLOCAL\}$.
\end{corollary}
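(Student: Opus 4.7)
The plan is to design an adversary in the Online-LOCAL model that defeats any 3-coloring algorithm with locality $T=o(\sqrt n)$, by exploiting a topological obstruction that global memory cannot overcome. The key invariant is the \emph{winding number}. For any proper 3-coloring $c$ of a grid, assign each oriented edge $(u,v)$ a signed difference $d(u,v)\in\{+1,-1\}$ with $d(u,v)=+1$ iff $c_v\equiv c_u+1\pmod 3$. A short case analysis shows that the sum of $d$'s around every unit $4$-face is $0$, so the sum around any cycle is a topological invariant depending only on its homotopy class. Consequently, on the $k\times k$ torus with $k=\sqrt n$, any horizontal (resp.\ vertical) non-contractible cycle carries a well-defined winding $W$ satisfying $W\equiv 0\pmod 3$, $W\equiv k\pmod 2$, and $|W|\le k$; choosing $k$ odd and coprime to $3$ forces $|W|\ge 3$ on every such cycle, and in particular rules out the ``no winding'' $W=0$ solution that would be available on an infinite planar grid.

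Against an algorithm $A$ with locality $T=o(\sqrt n)$, the adversary reveals queries on a non-contractible horizontal cycle at $m=\Theta(k/T)$ equally spaced positions, with pairwise graph distance at least $2T+3$. Since $T<k/2$, every query's $T$-neighborhood is an isolated $(2T{+}1)\times(2T{+}1)$ patch of the grid with no wrap-around visible, and is therefore isomorphic as a labelled graph to a patch of $\mathbb{Z}^2$. After a symmetric revelation on a second parallel cycle (say row $k/2$), the adversary eventually completes the remaining intermediate vertices, whose colors are constrained so that both cycles realise the \emph{same} winding $W$ (a consequence of the face-sum identity propagated across the vertical strip between them). I would then argue that the arcs between consecutive queries, each of length at most $2T+3$, can contribute only a bounded winding, and that the algorithm's commitments---determined by $O(1)$ many possible grid-patch view classes---cannot conspire to produce a valid $W$ on both cycles simultaneously. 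A pigeonhole over the committed color patterns, combined with careful bookkeeping of the admissible winding contributions of each bounded-length arc, should yield the contradiction when $k$ is chosen odd and coprime to $3$.

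The main obstacle I anticipate is controlling $A$'s use of global memory to coordinate commitments across distant queries: in principle $A$ could use accumulated history to inject the required nonzero winding ``by hand''. To handle this, the adversary is constructed \emph{adaptively}: the identifier assignment of yet-unrevealed vertices is delayed and chosen after each commitment, so that every new view is fresh and isomorphism-equivalent to a generic grid patch, independent of the history recorded in $A$'s memory. This effectively decouples the global memory from the winding computation, letting the bounded-arc and bounded-view arguments cap the achievable total winding uniformly across all adaptive histories. The cylindrical case then follows by specialising the argument to the single wrap-around direction, with the non-wrapping direction providing the space to place the grid-patch queries.
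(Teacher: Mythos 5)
Your high-level idea (a signed winding invariant that is constant on homotopic cycles, forced to be nonzero on non-contractible cycles when $\sqrt{n}$ is odd) matches the spirit of the paper's $b$-value machinery, and the mod-$3$ refinement you add (forcing $|W|\ge 3$) is a fine alternative to the parity-only argument the paper uses. The statement also requires the trivial $O(\sqrt n)$ LOCAL upper bound and the ``sandwiched between LOCAL and Online-LOCAL'' observation to upgrade the Online-LOCAL lower bound into a $\Theta(\sqrt n)$ bound in all five models; you do not mention this, but it is routine.

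The real problem is the adversary strategy, which is more complicated than the paper's and, more importantly, contains a genuine gap. After you reveal $m=\Theta(k/T)$ spaced query points on each non-contractible cycle and the algorithm commits colors to those points, the remaining intermediate vertices (the arcs) are revealed \emph{later}. At that stage the algorithm has seen the full connected revealed region (through its global memory), knows all endpoint commitments, and has length $\Theta(T)$ of slack on each arc; by the parity lemma the winding of an arc with fixed endpoints can take any value of the correct parity up to the arc length, so the algorithm can freely steer the total winding of each cycle to a single common legal value (e.g., $W=3$). No contradiction is forced. Your ``pigeonhole over the committed color patterns'' and ``$O(1)$ many possible grid-patch view classes'' do not rescue this: in Online-LOCAL the algorithm's output at a node may depend on the entire revealed subgraph and the query order, not only on the isomorphism class of its local view, so view-class counting and delayed-identifier tricks do not bound its behaviour the way an indistinguishability argument would in LOCAL. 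This is exactly the obstacle the paper flags as the reason LOCAL-style indistinguishability arguments do not transfer.

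The paper's argument avoids all of this. It reveals two \emph{entire} rows whose $T$-balls are disjoint cylinders, so the algorithm must commit to two full row colorings while seeing only two isomorphic disjoint components. Each oriented row cycle then has an odd, hence nonzero, $b$-value, and the adversary chooses the two embeddings (in particular, the orientation of each component) \emph{after} the algorithm commits. Since both committed $b$-values are nonzero, at least one of the two possible relative orientations makes $b(C_1)+b(C_2)\ne 0$, contradicting the homotopy invariance. No filling-in step is left to the algorithm, so it never gets the chance to repair the winding. If you want your write-up to close, you should drop the spaced-point/arc bookkeeping and instead reveal whole rows and let the adversary pick the orientations last; your mod-$3$ invariant then slots in directly in place of the paper's parity argument.
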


Interestingly, our results yield two alternative proofs for the $\Omega(\sqrt{n})$ locality lower bound for 3-coloring grids in $\LOCAL$~\cite{brandt2017lcl}, which we briefly explain as follows. By an indistinguishability argument, the $\Omega(\sqrt{n})$ locality lower bound in \Cref{cor:lb} on toroidal and cylindrical grids in $\LOCAL$ automatically implies the same lower bound on grids in $\LOCAL$. Alternatively, the $\Omega(\log n)$ lower bound of \Cref{thm:complexity} in $\onlineLOCAL$ implies the same lower bound in $\LOCAL$, and we can automatically improve such a lower bound to $\Omega(\sqrt{n})$ by the known $\omega(\log^\ast n)$--$o(\sqrt{n})$ complexity gap on grids in $\LOCAL$~\cite{ChangKP16,ChangP17}.

\paragraph{Generalization: $(k + 1)$-coloring $k$-partite graphs in $\onlineLOCAL$.}
Next, we consider a natural generalization of the $3$-coloring problem in bipartite graphs to the task of $(k+1)$-coloring of $k$-partite graphs for any constant $k \geq 2$ within the $\onlineLOCAL$ model. 
Given \Cref{coro:main}, a pertinent question arises:
\begin{itemize}
    \item Is the locality of $(k+1)$-coloring $k$-partite graphs in $\onlineLOCAL$  $\Theta(\log n)$ for all $k$?
\end{itemize}
Interestingly, for $k \geq 3$, we can establish a lower bound of $\Omega(n)$. More specifically, we present a \emph{much stronger} result demonstrating an $\Omega(n)$ lower bound for $(2k-2)$-coloring.

\begin{restatable}{theorem}{higher}
    \label{thm:2k-2}
    Let $k \geq 2$ be a constant. The locality of $(2k-2)$-coloring a $k$-partite graph is $\Omega(n)$ in the $\onlineLOCAL$ model.
\end{restatable}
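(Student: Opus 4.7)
The plan is to proceed by induction on $k$. For the base case $k = 2$, the statement coincides with the classical $\Omega(n)$ lower bound for 2-coloring a bipartite graph, which holds in $\onlineLOCAL$ because an algorithm with locality $o(n)$ cannot determine the parity of a long bipartite path connecting two far-apart queries even with global memory; the adversary can build the path incrementally, ensuring that any color committed at one endpoint can be made inconsistent with the parity of the eventual graph.

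For the inductive step, I would assume that $(2k-4)$-coloring $(k-1)$-partite graphs has locality $\Omega(n)$ and establish the analogous bound for $(2k-2)$-coloring $k$-partite graphs via a locality-preserving reduction. Given a hard $(k-1)$-partite instance $H$ supplied by the inductive hypothesis, the idea is to construct a $k$-partite graph $G$ by attaching a color-blocking gadget $U$ that forms one new part of $G$, with carefully chosen adjacencies to $V(H)$, so that every $(2k-2)$-coloring of $G$ is forced to reserve two specific colors for vertices of $U$ that are adjacent to every vertex of $V(H)$. This would restrict the palette available to $V(H)$ to $2k-4$ colors, and any algorithm for $(2k-2)$-coloring $G$ with locality $T$ would yield an algorithm for $(2k-4)$-coloring $H$ with the same locality, contradicting the inductive hypothesis when $T = o(n)$.

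The main technical obstacle is the design of the gadget $U$. Adding a single universal vertex only reserves one color; putting two non-adjacent universal vertices in the same new part fails to force distinct colors on them (an adversarial coloring can reuse one color for both); and placing two universal vertices in two separate new parts pushes the partiteness of $G$ to $k+1$. The plan is therefore to build $U$ as a small connected subgraph contained in a single new part of $G$, whose internal adjacencies and edges into the other parts of $G$ jointly force every $(2k-2)$-coloring to dedicate two specific colors to those vertices of $U$ that are universal to $V(H)$. Verifying that $G$ remains $k$-partite, that exactly two colors are always reserved by $U$, and that the reduction is simulable in $\onlineLOCAL$ without locality overhead is the core of the argument.

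If this compact gadget cannot be realized while preserving $k$-partiteness, the fallback plan is a direct adversary argument on an explicit family of $k$-partite graphs: a long chain of cell-gadgets such that any valid $(2k-2)$-coloring is forced to choose one of two possible states per cell, with neighboring cells' states linked by a parity-like constraint that naturally generalizes the $k=2$ case. An algorithm with locality $o(n)$ then cannot determine the global state, and the adversary can force an inconsistency by extending the chain beyond the algorithm's locality radius after it commits to a color at some query.
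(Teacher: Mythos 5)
Your main plan (induction on $k$ via a color-reserving gadget $U$) is not the route the paper takes, and it has a fatal flaw that you have not addressed. You require $U$ to contain vertices adjacent to \emph{every} vertex of $V(H)$. Any such universal vertex $u$ collapses the diameter of $G$ to $O(1)$: every pair of vertices of $V(H)$ is then within distance $2$ via $u$. In $\onlineLOCAL$, as soon as a single vertex of $V(H)$ is revealed with locality $T\ge 2$, its $T$-radius neighborhood already contains all of $G$, so the derived algorithm $\mathcal{A}'$ effectively has unbounded locality on $H$ and the inductive hypothesis yields no contradiction. (This is precisely why the $\LOCAL$-model hard instance of \cite{coiteux2023no}, which the paper's construction is modeled on, attaches its color-blocking cliques only to the \emph{two far ends} of a long chain of gadgets rather than universally -- so that the diameter remains $\Omega(n)$.) There is also a structural inconsistency in the gadget itself: $U$ cannot be ``a small connected subgraph contained in a single new part,'' since a part of a $k$-partite graph is an independent set, so $U$ has no internal edges; two non-adjacent universal vertices of $U$ can then legitimately receive the same color, reserving only one color rather than two, and nothing in $V(H)$ can force them apart since both see exactly the same neighborhood there.

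Your fallback -- a direct adversary argument on an explicit chain of cell-gadgets, each forced into one of two states with a parity-like compatibility constraint linking neighbors -- is in fact the paper's actual proof. The paper builds $G^*$ as a chain of $n/k^2$ gadgets, each gadget being a $k\times k$ array with an edge between two nodes iff they share neither a row nor a column; it proves that any proper $(2k-2)$-coloring makes each gadget exactly one of ``row-colorful'' or ``column-colorful,'' and that consecutive gadgets must agree on this binary state. The adversary then reveals $A_1$ and $A_{n'}$ first, far apart so an $o(n)$-locality algorithm sees them in isolation, and exploits its freedom to swap the row/column labeling of the far gadget's neighborhood to force $A_1$ row-colorful and $A_{n'}$ column-colorful, a contradiction. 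If you discard the inductive reduction and flesh out this fallback -- in particular, identifying the ``state'' as the row-/column-colorful dichotomy and proving the two lemmas above -- you will have reconstructed the paper's argument.
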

Observe that $k+1 \leq 2k-2$ for $k \geq 3$, so we have the following corollary.
\begin{corollary}\label{coro:higher}
    Let $k \geq 3$ be a constant. The locality of $(k+1)$-coloring a $k$-partite graph is $\Omega(n)$ in the $\onlineLOCAL$ model.
\end{corollary}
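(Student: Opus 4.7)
The plan is to prove the theorem by induction on $k$, with the base case $k=2$ being the classical parity-of-path lower bound for 2-coloring bipartite graphs in $\onlineLOCAL$, and the inductive step $k \to k+1$ reducing $2k$-coloring of $(k+1)$-partite graphs to $(2k-2)$-coloring of $k$-partite graphs via a locality-preserving gadget that reserves two fresh colors for a new part.

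For the \textbf{base case} $k=2$, I would fix a target locality $T$ and have the adversary present a simple path $P_m$ of length $m = 2T + 3$, querying the two endpoints $v_1$ and $v_m$ first. Because $T < (m-1)/2$, the $T$-neighborhoods of $v_1$ and $v_m$ are disjoint sub-paths whose structure depends only on $T$, not on the parity of $m$. Hence the algorithm's outputs at $v_1$ and $v_m$ coincide across two instances $P_{m_0}$ and $P_{m_0+1}$ for some $m_0 \geq 2T+3$; yet any proper 2-coloring satisfies $\chi(v_1) = \chi(v_m)$ if and only if $m$ is odd, so the algorithm must fail on at least one of the two instances.

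For the \textbf{inductive step}, assume the $\Omega(n)$ lower bound for $(2k-2)$-coloring $k$-partite graphs, witnessed by a hard family $\mathcal{H}$ on $m$ vertices. I would construct a hard family $\mathcal{G}$ of $(k+1)$-partite graphs on $\Theta(m)$ vertices by augmenting each $H \in \mathcal{H}$ with a new part $V_{k+1}$ and attaching a constant-size \emph{color-reserving gadget} near every vertex of $H$. The gadget, which spans $V_1, \ldots, V_{k+1}$, would be designed so that in every proper $2k$-coloring of the augmented graph, its vertices in $V_{k+1}$ use two fresh colors that then become forbidden for the attached vertex of $H$. Linking successive gadgets through shared auxiliary vertices in $V_{k+1}$ would force these two reserved colors to be globally consistent, so that any $2k$-coloring of $G \in \mathcal{G}$ restricts to a proper $(2k-2)$-coloring of the underlying $H$ using the remaining $2k-2$ colors. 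Since the gadgets are $O(1)$-sized and locally attached, the $T$-ball around any $H$-vertex in $G$ is contained in the $(T+O(1))$-ball in $H$ together with $O(T)$ gadget vertices, so a $T$-locality algorithm for $2k$-coloring $G$ induces an $O(T)$-locality algorithm for $(2k-2)$-coloring $H$, and the inductive hypothesis transfers.

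The \textbf{main obstacle} I expect is designing the gadget so that the two reserved colors are chosen \emph{globally} rather than locally. A purely local gadget only forces \emph{some} pair of fresh colors to be reserved near each vertex of $H$, and different gadgets might reserve different pairs, in which case no two colors are globally forbidden. I plan to address this by arranging the gadgets along a ``backbone'' in $V_{k+1}$, where adjacent gadgets share auxiliary vertices in such a way that any locally consistent choice of reserved pair propagates to a global choice, while the backbone is kept sparse enough not to collapse distances below $\Theta(m)$. If this coordination cannot be achieved purely through constant-size gadgets, I would fall back to a direct adversarial construction that extends the $k=2$ parity argument: a rigid $k$-partite ``cycle'' of length $\Theta(n)$ whose $(2k-2)$-colorings are classified by a winding invariant modulo a small integer, so that no local view of size $o(n)$ can determine the invariant consistently at two distant queried vertices.
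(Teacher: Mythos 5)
The statement you were asked to prove is, in the paper, a one-line consequence of Theorem~\ref{thm:2k-2}: since $k+1 \leq 2k-2$ for all $k \geq 3$, any $o(n)$-locality algorithm for $(k+1)$-coloring a $k$-partite graph would also be an $o(n)$-locality algorithm for $(2k-2)$-coloring, contradicting the theorem. You instead set out to prove the stronger theorem itself from scratch, which is a legitimate strategy for a blind proof, but your write-up never states the final reduction $k+1 \le 2k-2$ that closes the corollary; that observation should appear explicitly.

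Your route to the theorem is also genuinely different from the paper's, and it contains a gap at precisely the point you flag. You try to induct on $k$, reducing $2k$-coloring of $(k+1)$-partite graphs to $(2k-2)$-coloring of $k$-partite graphs by a color-reserving gadget that forces two ``fresh'' colors to be used exclusively on the new part. The trouble is that there is no obvious constant-size, locally-attached gadget that selects a \emph{globally consistent} pair of reserved colors: a gadget can only enforce that \emph{some} two colors are forbidden near its attachment point, and nothing prevents different gadgets from selecting different pairs. Your proposed fix --- linking gadgets through shared auxiliary vertices in $V_{k+1}$ --- cannot work in the naive form, because in a $(k+1)$-partite graph the part $V_{k+1}$ is an independent set, so a backbone living in $V_{k+1}$ carries no edges and cannot propagate color constraints directly. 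You acknowledge this obstacle but do not resolve it, so the inductive step remains unproven.

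The paper's direct construction sidesteps the coordination problem by designing gadgets whose $(2k-2)$-colorings carry a \emph{discrete binary invariant} rather than a choice of a color pair. Each gadget is the $k \times k$ node set with edges between pairs that are in neither the same row nor the same column; under any proper $(2k-2)$-coloring, a gadget is provably \emph{exactly one} of ``row-colorful'' or ``column-colorful.'' Chaining gadgets by the same non-row-non-column rule forces all gadgets to have the same binary state. The adversary then queries the first and last gadget while their $T$-balls are disjoint and, by permuting the row/column labels of the far gadget after the algorithm has committed, forces incompatible states --- no need to coordinate specific colors. Your fallback sketch (a rigid structure whose colorings are classified by a winding-type invariant) is spiritually close to this, and developing it into the row/column-colorful dichotomy would give you a working proof; the color-reservation gadget as currently described does not.
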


Recall that for the case of $k=2$, the locality $(k+1)$-coloring a $k$-partite graph is already known to be $\Theta(\log n)$ by \Cref{coro:main}.

\paragraph{Related work.} In the $\LOCAL$ model, the same lower bound of $\Omega(n)$ for $(2k-2)$-coloring $k$-partite graphs has been very recently established in \cite{coiteux2023no}. More generally, they showed that the round complexity of the problem of $c$-coloring $k$-partite graphs is $\widetilde{\Theta}(n^{1/\alpha})$ in $\LOCAL$, where $\alpha=\lfloor \frac{c-1}{k-1} \rfloor$ and the notation $\widetilde{\Theta}(\cdot)$ hides polylogarithmic factors. Notably, this bound is polynomial in $n$ for any constants $k$ and $c$. 

For the case of $(c,k)=(3,2)$, there is a huge gap between the complexities of $3$-coloring bipartite graphs in $\LOCAL$ ($\widetilde{\Theta}(\sqrt{n})$, \cite{coiteux2023no}) and $\onlineLOCAL$ ($\Theta(\log n)$, \Cref{coro:main}). On the other hand, for $k \geq 3$ and $c \leq 2k-2$, \Cref{thm:2k-2} says that the locality of $c$-coloring $k$-partite graphs is $\Omega(n)$ in  $\onlineLOCAL$ -- a match with the round complexity of the same problem in $\LOCAL$.

%When $c=3$ and $k=2$, the $c$ coloring of $k$-partite graphs simplifies to $3$-coloring bipartite graphs, and it achieves a locality of $\Theta(\log n)$ in the $\onlineLOCAL$ model, according to \Cref{coro:main}, with a contrast to the locality of $\widetilde{\Theta}(\sqrt{n})$ of the problem in the $\LOCAL$ model. On the other hand, for $c=2k-2$ (and hence $c \leq 2k-2$), \Cref{thm:2k-2} says that the locality of $c$-coloring $k$-partite graphs is $\Omega(n)$ -- a match with the round complexity of the problem in the $\LOCAL$ model.

\paragraph{Improved upper bounds for special graph classes.}
For the problem of $(k+1)$-coloring of $k$-partite graphs in $\onlineLOCAL$, the locality is $\Theta(\log n)$ for $k=2$ (\Cref{coro:main}) and is  $\Omega(n)$ for $k\geq 3$ (\Cref{coro:higher}). A plausible explanation for this notable shift in the locality bound when transitioning from $k=2$ to $k\geq 3$ is as follows: While the bipartition (i.e., 2-coloring) of a (connected) bipartite graph is unique, a $k$-partite graph might not admit a unique $k$-coloring in general. Intuitively, this means that for $k\geq 3$, the colorings of two graph fragments computed by an $\onlineLOCAL$ algorithm may be completely incompatible, requiring more adjustments to combine them, thus leading to a locality of $\Omega(n)$.

We show that it is possible to break the $\Omega(n)$ locality lower bound of \Cref{thm:2k-2} if the underlying $k$-partite graph admits a unique $k$-coloring satisfying the following desirable property: For each graph fragment seen by an $\onlineLOCAL$ algorithm, the unique $k$-coloring restricted to that fragment can be inferred correctly, up to a permutation of the $k$ colors. As we will later see, examples of such graphs include all (connected) \emph{bipartite graphs}, \emph{triangular grids}, and \emph{$k$-trees}.

% Yi-Jun: The paragraph below is a bit hard to understand, so I temporarily commented it out...

%Let the subgraph revealed in the $\onlineLOCAL$ model so far be $G' = (V', E')$, then for each partial $k$-coloring $c \colon V' \to [k]$, we can determine whether it properly colors $G'$, i.e., $c(v) \neq c(u)$ whenever $vu \in E'$. Thus, we can determine the set of all the proper partial $k$ colorings $\mathcal{C}$ of $G'$. When we gradually learn more about the main graph $G$, some colorings $c \in \mathcal{C}$ may no longer admit a proper coloring and are eliminated from $\mathcal{C}$.

%We show that $O(\log n)$ locality is sufficient to $(k + 1)$-color $G$ as long as for each connected subgraph $G'$ of $G$, knowing the $O(1)$-radius neighborhood of $G'$ allows us to keep the colorings of $\mathcal{C}$ that differ in the permutation of the colors only. We formalize the description below.

Let $\ell\in \mathbb{N}$ be a non-negative integer.
Let $G$ be a $k$-partite graph.
Let $G' = (V', E')$ be a connected subgraph of $G$.
Define $\mathcal{C}^\ast(G', \ell)$ as the set of all proper $k$-colorings of the subgraph of $G$ induced by the  $\ell$-radius neighborhood $\mathcal{B}(V', \ell)$ of $V'$.
Define $\mathcal{C}(G', \ell)$ as the set of all proper $k$-colorings of $G'$ resulting from restricting some coloring $c \in \mathcal{C}^\ast(G', \ell)$ to the domain $V'$. We say that two $k$-colorings $c$ and $c'$ are identical \emph{up to a permutation} if there exists a permutation $\phi \colon [k] \to [k]$ such that $c' = \phi \circ c$. 

\begin{definition}[Locally inferable unique colorings]
    \label{def:lpcc}
    We say that a $k$-partite graph $G$ has a locally inferable unique $k$-coloring with radius $\ell$ if all colorings in $\mathcal{C}(G', \ell)$ are identical up to a permutation for any connected subgraph $G'$ of $G$.
\end{definition}

To put it another way, $G$ admits a locally inferable unique $k$-coloring with radius $\ell$ if the $k$-color partition of $G$ is unique and can be inferred locally in the following way: For any connected subgraph $G'$ of $G$, the unique $k$-color partition restricted to $G'$ can be inferred by inspecting its $\ell$-radius neighborhood. 
Since the bipartition of a bipartite graph is unique, every bipartite graph has a locally inferable unique $k$-coloring with radius $\ell = 0$. We show that triangular grids and $k$-trees also admit locally inferable unique colorings with $\ell \in O(1)$.

\begin{description}
    \item[Triangular grids:] The triangular grid $G=(V,E)$ of side length $d$ is defined as follows.
    \begin{itemize}
        \item $V = \left\{ (x,y)\in \mathbb{N}^2 \, : \,  0 \leq x+y \leq d \right\}$.
        \item For any two nodes $(x,y)$ and $(x',y')$ in $V$, add an edge between them to $E$ if $|x-x'| + |y-y'| = 1$ or $x-x' = y-y' \in \{-1, 1\}$.
    \end{itemize}
    Consider a connected subgraph $G'=(V',E')$ of a triangular grid $G$. See \Cref{fig:triangular_oracle} for an illustration. For any node $u \in V'$, there is a triangle $\Delta_u$ containing $u$ in the $1$-radius neighborhood of $V'$ in $G$.
    Observe that for any two nodes $u$ and $v$ in $V'$, they are connected by a sequence of triangles $(\Delta_u = \Delta_1, \Delta_2, \ldots, \Delta_t = \Delta_v)$ in the $1$-radius neighborhood of $V'$ in $G$: For each $i \in [t - 1]$, $\Delta_i$ and $\Delta_{i + 1}$ shares an edge. If the nodes of $\Delta_i$ and $\Delta_{i + 1}$ are $\{a, b, c\}$ and $\{b, c, d\}$, respectively, then the color of $a$ and $d$ must be identical under any $3$-coloring. In particular, fixing the coloring of $\Delta_u$ uniquely determines the color of $v$. Therefore, for any given $3$-coloring of the $1$-radius neighborhood of $V'$ in $G$, its restriction to $V'$ must be unique up to a permutation. In other words, $G$ admits a locally inferable unique $3$-coloring with radius $\ell=1$.
  
    \item[$k$-trees:] The graphs that can be constructed as follows are called $k$-trees: Start with a $(k + 1)$-clique, and then iteratively add nodes such that each newly added node $v$ is adjacent to exactly $k$ existing nodes that form a $k$-clique. The construction of a $k$-tree $G$ naturally yields a tree $H$ so that each node $h \in V(H)$ corresponds to a $(k + 1)$-clique of $G$ and two nodes of $H$ are adjacent if their corresponding cliques share $k$ nodes. Consider a connected subgraph $G'=(V',E')$ of a $k$-tree $G$. Similar to the case of triangular grids, by querying the $1$-radius neighborhood of $V'$ in $G$, we learn all the $(k + 1)$-cliques containing at least one node in $V'$ and their adjacency relations in $H$. Therefore, $k$-trees admit a locally inferable unique $(k+1)$-coloring with radius $\ell=1$, as fixing the color assignment of any one of these $(k + 1)$-cliques fixes the color of all other nodes in $G'$.
\end{description}

\begin{figure}
    \centering
    \includegraphics[width=0.3\linewidth]{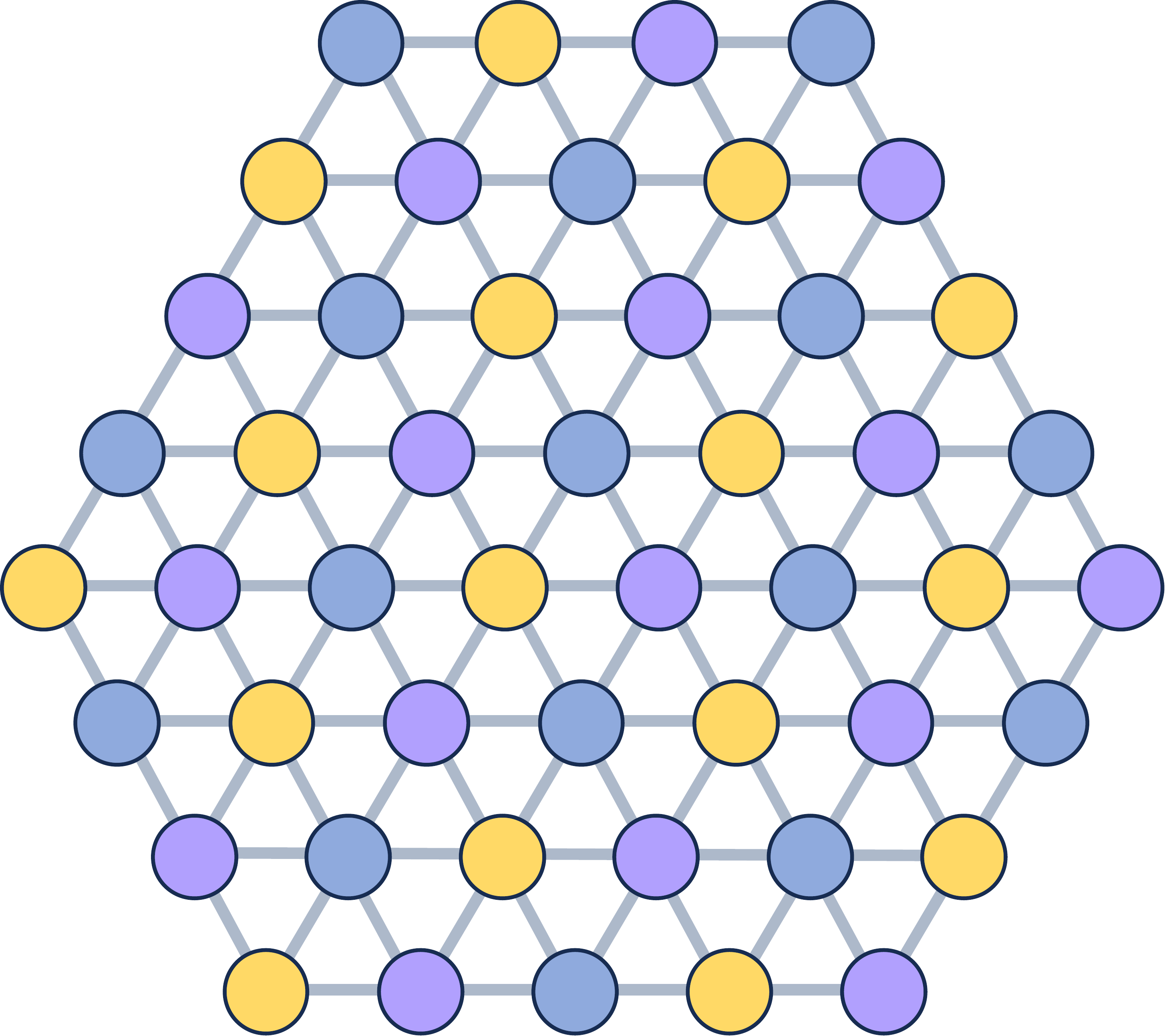}
    \caption{The unique tripartition of a triangular grid restricted to a connected subgraph.}
    \label{fig:triangular_oracle}
\end{figure}

For notational simplicity, from now on, we write $\mathcal{L}_{k,\ell}$ to denote the set of all $k$-partite graphs having a locally inferable unique $k$-coloring with radius $\ell$. 
%That is, all bipartite graphs are included in $\mathcal{L}_2$, the triangular grid is included in $\mathcal{L}_3$, and $k$-trees are included in $\mathcal{L}_{k+1}$.\yijun{The current definition of $\mathcal{L}_{k,\ell}$ is problematic unless you explicitly include $\ell$ in the definition.}
We show that a $(k+1)$-coloring for any graph in $\mathcal{L}_{k,\ell}$ with $k \in O(1)$ and $\ell \in O(1)$ can be computed with $O(\log n)$ locality in the $\onlineLOCAL$ model.

\begin{restatable}{theorem}{lpccupper}
    \label{thm:lpccupper}
    Let $k \geq 2$ be any constant. There is an $\onlineLOCAL$ algorithm with locality $O(\log n)$ that $(k+1)$-colors any graph in $\mathcal{L}_{k,\ell}$ with $\ell \in O(1)$.
\end{restatable}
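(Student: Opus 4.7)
The plan is to generalize the $O(\log n)$-locality $\onlineLOCAL$ algorithm of Akbari et al.~\cite{akbari2021locality} for $3$-coloring bipartite graphs. We will use the extra color $k+1$ as a \emph{rescue color} on a sparse set of separator nodes; every other node will receive a color from $\{1,\dots,k\}$ consistent with a proper $k$-coloring of $G$ on each connected region delimited by rescue nodes. The property $G \in \mathcal{L}_{k,\ell}$ is precisely what enables this approach online: whenever the algorithm explores a connected subgraph $G' \subseteq G$, by inspecting the $\ell$-radius neighborhood $\mathcal{B}(V(G'),\ell)$ it can compute a canonical proper $k$-coloring $\chi$ of $G'$ that is unique up to a permutation $\phi \colon [k] \to [k]$.

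On a query of node $v$, the algorithm explores $\mathcal{B}(\{v\},T)$ with $T = \Theta(\log n)$ and computes, via inferability, a canonical coloring $\chi$ of the induced subgraph on $\mathcal{B}(\{v\},T)$. It then identifies the connected components of previously colored non-rescue nodes that lie in $\mathcal{B}(\{v\},T)$. For each such component $C$, the colors already written on $C$ equal $\phi_C \circ \chi$ restricted to $C$ for a unique permutation $\phi_C$, and $C$ thus casts a vote $\phi_C(\chi(v))$ for the color of $v$. If all visible components cast the same vote, $v$ receives that color; otherwise $v$ receives the rescue color $k+1$.

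Correctness reduces to two invariants: (a) no two adjacent nodes both receive the rescue color; (b) adjacent non-rescue nodes receive distinct colors. Invariant (b) is immediate, since a maximal connected set of non-rescue nodes is colored by $\phi \circ \chi$ for a single permutation $\phi$, which is itself a proper $k$-coloring of that set. Invariant (a) will follow from a cascade argument analogous to Akbari et al.: if adjacent nodes $u$ and $v$ are both rescue, then at their queries each of them witnessed a permutation disagreement among components inside their respective balls; stitching the two disagreements along the edge $uv$ and invoking inferability on the union of the two balls forces the existence of a rescue node at an earlier cascade depth, yielding a contradiction once $T$ is a sufficiently large constant times $\log n$.

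The main obstacle is porting this cascade/charging argument from the bipartite case---where only two permutations of $[2]$ are available---to the present setting with arbitrary permutations of $[k]$. I expect only constant-factor overhead: since $k = O(1)$, the permutation group has bounded size $k!$, so each cascade step can still be charged to one of finitely many permutation-conflict types. The inferability radius $\ell = O(1)$ contributes only an additive constant to every radius involved, so neither extension affects the final $O(\log n)$ locality bound.
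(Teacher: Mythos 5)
Your lazy ``rescue-color'' approach is genuinely different from what the paper does, but it has a gap I do not see how to close, and it is precisely the one the paper's design is built to avoid. The paper (following \citet{akbari2021locality}) \emph{eagerly} resolves a type conflict at the moment two groups merge: when a newly revealed node connects groups $C_1,\dots,C_t$ with incompatible types, the algorithm actively \emph{recolors} a multi-layer barrier around each smaller group via repeated index swaps (their \Cref{alg:swap_indices}), cycling through the colors in $[k+1]$, so that after the merge only one type remains and no conflict is visible any more. Your algorithm instead colors only the queried node and leaves the conflict in place, marking the query node with $k+1$ when the visible components disagree. This breaks your invariant~(a). Take two large colored regions $A,B$ carrying incompatible permutations with an uncolored corridor between them. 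Once a corridor node $m$ sees both $A$ and $B$ in its ball and observes disagreeing votes, it receives $k+1$. If the adversary then queries $m$'s neighbor $m'$, the ball around $m'$ still contains $A$ and $B$ as two distinct \emph{non-rescue} components ($m$ is excluded from them because it is rescue), and they still disagree, so $m'$ also receives $k+1$ --- a monochromatic edge. A single rescue node is not a separator, and one cannot build a thick separator using only the single extra color $k+1$ without creating adjacent $k+1$-nodes; the barrier the paper builds necessarily cycles through several colors, which requires actively committing colors on a whole neighborhood rather than marking one node. The ``cascade/charging'' argument you sketch is not the $\log n$ argument of \citet{akbari2021locality} --- theirs is a doubling argument (a node's group at least doubles each time its type is flipped) applied to an eager recoloring scheme, and it has no analogue for your lazy marking.

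There is also a secondary problem underlying your invariant~(b). You assert that the colors written on a visible component $C$ equal $\phi_C\circ\chi|_C$ for a unique permutation $\phi_C$. But when a queried node $v$ is adjacent to two components $C_1,C_2$ that happen to agree on the single vote $\phi_1(\chi(v))=\phi_2(\chi(v))$ while $\phi_1\neq\phi_2$ (easily arranged for $k\ge 3$), the merged component $C_1\cup\{v\}\cup C_2$ is no longer of the form $\phi\circ\chi$ for any single permutation; and if $C$ is small, $\phi_C$ need not be unique, so ``the vote of $C$'' is not well-defined. The paper avoids both issues by storing a type explicitly in global memory for each group and never re-deriving it from the written colors. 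To repair your argument you would at minimum have to compare full permutations rather than single votes and replace the lazy marking by an active, multi-layer retyping of the smaller group --- at which point you have reconstructed the paper's algorithm.
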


The following corollary is an immediate consequence of \Cref{thm:lpccupper}.

\begin{corollary} 
    The locality of 3-coloring triangular grids and $(k+2)$-coloring $k$-trees with $k \in O(1)$ is $O(\log n)$ in the $\onlineLOCAL$ model.
\end{corollary}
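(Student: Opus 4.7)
The plan is to invoke \Cref{thm:lpccupper} after verifying that triangular grids and $k$-trees lie in the class $\mathcal{L}_{k',\ell}$ for suitable constants $k'$ and $\ell$. Both inclusions are already sketched in the discussion immediately preceding \Cref{thm:lpccupper}, so the proof of the corollary amounts to matching up parameters and, in the triangular-grid case, a short remark about color counts.

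For $k$-trees, I would recall the tree structure $H$ from the preceding discussion: the nodes of $H$ are the $(k+1)$-cliques of $G$, and two nodes are adjacent in $H$ when their cliques share $k$ vertices. For any connected subgraph $G'$, querying the $1$-radius neighborhood of $V(G')$ already reveals every $(k+1)$-clique that touches $V(G')$ together with the relevant adjacencies in $H$, so fixing the colors of any one such clique propagates uniquely along $H$ to determine the colors of all vertices of $V(G')$. This exhibits the locally inferable unique $(k+1)$-coloring with radius $\ell=1$, i.e.\ $k$-trees lie in $\mathcal{L}_{k+1,1}$. Instantiating \Cref{thm:lpccupper} with the theorem's parameter set to $k+1$ and $\ell=1$ yields an $\onlineLOCAL$ algorithm with locality $O(\log n)$ that outputs a $(k+1)+1 = (k+2)$-coloring, matching the corollary.

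For triangular grids, I would similarly recall from the preceding discussion that every vertex of a connected subgraph $G'$ sits in a triangle $\Delta_u$ contained in the $1$-radius neighborhood of $V(G')$, that any two such triangles are connected by an edge-sharing chain of triangles in the same neighborhood, and that sharing an edge between two triangles forces the color of the remaining vertex on each side. Consequently, fixing the colors of any single triangle determines the coloring of all of $V(G')$ up to a permutation of the three colors, so triangular grids lie in $\mathcal{L}_{3,1}$. Applying \Cref{thm:lpccupper} with $k=3$ and $\ell=1$ gives an $O(\log n)$-locality algorithm.

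The only non-routine point, and what I expect to be the main obstacle, is that \Cref{thm:lpccupper} generically outputs $(k+1)=4$ colors for triangular grids, whereas the corollary asks for $3$. This is handled by the observation that the unique $3$-partition of a connected triangular grid is globally consistent with any locally inferred partition: since the permutation ambiguity is transported unambiguously along every edge-sharing chain of triangles and the grid is connected, committing to the permutation inferred at the first queried fragment fixes a single global $3$-coloring, and the extra fourth color used by the generic algorithm to patch incompatible permutations is never activated. Thus $3$ colors suffice on triangular grids, while $k$-trees genuinely use $k+2$ colors, completing the derivation of the corollary from \Cref{thm:lpccupper}.
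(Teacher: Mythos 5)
Your derivation for $k$-trees is correct and matches the paper's intent: $k$-trees lie in $\mathcal{L}_{k+1,1}$, and instantiating \Cref{thm:lpccupper} with the theorem's parameter set to $k+1$ yields a $(k+2)$-coloring algorithm with $O(\log n)$ locality.

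For triangular grids, your application of the theorem correctly produces a \emph{four}-coloring (triangular grids lie in $\mathcal{L}_{3,1}$, and the theorem gives $k+1 = 4$ colors). The corollary as printed says ``3-coloring,'' but this appears to be a typo in the paper for ``4-coloring'': the algorithm section explicitly states ``We illustrate our algorithm \ldots by considering $4$-coloring a triangular grid,'' and ``4-coloring'' is the actual immediate consequence of \Cref{thm:lpccupper} with $k=3$.

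Your final paragraph, which tries to strengthen the conclusion to a genuine 3-coloring, is the part that fails. You claim that once the permutation is committed on the first queried fragment, a single global 3-coloring is fixed and the fourth color ``is never activated.'' That reasoning overlooks the defining difficulty of the $\onlineLOCAL$ model: the permutation can only be transported along seen edge-sharing chains of triangles, and when the adversary later reveals a second fragment that is disconnected (within the seen region) from the first, the algorithm has no information about the relative placement of the two fragments and hence no way to infer which of the $3! = 6$ permutations of the tripartition applies there. A deterministic algorithm must commit to a 3-coloring of the second fragment before knowing how it will eventually be glued to the first, and the adversary can precompute the algorithm's choice (determinism makes adaptive and oblivious adversaries equivalent here) and place the second fragment so that the two colorings are incompatible. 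This is precisely the situation Case~3 of the algorithm, and \Cref{alg:swap_indices}, must handle using color $k+1$, and it is exactly analogous to the bipartite case, where the same obstruction forces 3 colors (not 2) for bipartite graphs. So the fourth color genuinely can be activated, and no $3$-coloring with locality $O(\log n)$ falls out of \Cref{thm:lpccupper}. Remove the last paragraph and replace ``3-coloring'' by ``4-coloring'' for triangular grids, and the proof is correct.
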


%To discuss the optimality of \Cref{thm:lpccupper}, we prove a matching lower bound by utilizing the hard instance from \Cref{thm:complexity}. The result (in \Cref{thm:complexity}) says that the locality of $3$-coloring $\left(\sqrt{n} \times \sqrt{n}\right)$ a grid is $\Omega(\log n)$. Denoting a $\left(\sqrt{n} \times \sqrt{n}\right)$ grid by $G_2$, then we can observe that $G_2 \in \mathcal{L}_2$. Starting from $k = 2$, we can inductively construct a $(k + 1)$-partite graph $G_{k + 1}$ such that $G_{k + 1} \in \mathcal{L}_{k + 1}$ and $(k + 1)$-coloring $G_{k + 1}$ requires locality $\Omega(\log n)$. This implies \Cref{thm:lpcclower}:

We complement \Cref{thm:lpccupper} with a matching lower bound by extending the lower bound of \Cref{thm:complexity} from $k=2$ to higher values of $k$.

\begin{restatable}{theorem}{lpcclower}
    \label{thm:lpcclower}
 Let $k \geq 2$ be any constant. For any $\onlineLOCAL$ algorithm that $(k+1)$-colors any graph in $\mathcal{L}_{k,\ell}$ with $\ell \in O(1)$, its locality must be $\Omega(\log n)$.
\end{restatable}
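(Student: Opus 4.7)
The base case $k = 2$ follows immediately: every connected bipartite graph has a unique 2-coloring up to a color swap, so it lies in $\mathcal{L}_{2, 0}$; in particular, the $\sqrt{n} \times \sqrt{n}$ grid belongs to $\mathcal{L}_{2, 0}$, and \Cref{thm:complexity} already yields the $\Omega(\log n)$ lower bound.

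For $k \geq 3$ the plan is to reduce to \Cref{thm:complexity}. Assume for contradiction that some $\onlineLOCAL$ algorithm $A$ has locality $T = o(\log n)$ and $(k+1)$-colors every graph in $\mathcal{L}_{k, \ell}$ for some constant $\ell$. Starting from a $\sqrt{n} \times \sqrt{n}$ grid $H$, I would construct a $k$-partite graph $G_k$ on $O(n)$ vertices satisfying four properties: (i) $H$ appears as a subgraph of $G_k$ on the same vertex set $V(H) \subseteq V(G_k)$; (ii) $G_k \in \mathcal{L}_{k, \ell'}$ for some constant $\ell'$; (iii) distances in $G_k$ between pairs of vertices in $V(H)$ are within a constant factor of their distances in $H$; and (iv) there is a local decoder, inspecting only an $O(1)$-radius $G_k$-neighborhood of each vertex, that turns any proper $(k+1)$-coloring of $G_k$ into a proper 3-coloring of $H$. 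The natural candidate gadget attaches to each $v \in V(H)$ a private $(k-2)$-clique $U_v = \{u_v^3, \ldots, u_v^k\}$ so that $\{v\} \cup U_v$ forms a $(k-1)$-clique, places each $u_v^i$ in a new color class $V^i$ (making $G_k$ $k$-partite with parts $L, R, V^3, \ldots, V^k$, where $L, R$ is $H$'s bipartition), and propagates color constraints by adding cross-clique edges $u_v^i \sim u_w^j$ for $i \neq j$ whenever $\{v, w\} \in E(H)$. The resulting clique rigidity pins down the unique $k$-coloring of $G_k$ (up to permutation) from any $O(1)$-radius neighborhood, so (ii) holds; since all added vertices attach locally to individual $H$-vertices, (iii) also holds.

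Given such a $G_k$, an $\onlineLOCAL$ algorithm $A'$ of locality $O(T)$ for 3-coloring $H$ is obtained by simulating $A$ on $G_k$: on a query $v \in V(H)$, $A'$ reconstructs the $T$-radius $G_k$-neighborhood of $v$ from the $O(T)$-radius $H$-neighborhood (using (iii)), runs $A$ to obtain a color in $\{1, \ldots, k+1\}$, and applies the local decoder of (iv) to produce a color in $\{1, 2, 3\}$. This would give locality $o(\log n)$ for 3-coloring $H$, contradicting \Cref{thm:complexity} and forcing $T = \Omega(\log n)$.

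The main obstacle is establishing (iv): guaranteeing that the extra color available in a $(k+1)$-coloring of $G_k$ (beyond the $k$ colors used by the canonical $k$-coloring) can always be handled by a local decoder. Under the canonical $k$-coloring, the auxiliary classes $V^3, \ldots, V^k$ are pinned by the locally inferable structure, leaving only two colors for $V(H)$; under a $(k+1)$-coloring, however, the free color may leak into these classes and into $V(H)$, potentially admitting more than three colors on $V(H)$. Overcoming this may require a denser cross-clique edge pattern forcing any such leakage to propagate globally and thus be locally detectable, or replacing the simple clique gadget by a more rigid $\mathcal{L}_{k, \ell}$-structure modeled on the $k$-tree or triangular-grid constructions introduced earlier, whose uniqueness property is robust under a superfluous color. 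An alternative route is a direct adversary argument generalizing the one used for \Cref{thm:complexity}, tailored to $k$-partite graphs equipped with locally inferable rigidity; verifying any such construction simultaneously with (i)-(iii) is the crux of the argument.
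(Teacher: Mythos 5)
Your base case and overall reduction strategy (turn a low-locality $(k+1)$-colorer of an $\mathcal{L}_{k,\ell}$ graph into a low-locality $3$-colorer of the grid, contradicting \Cref{thm:complexity}) match the paper's plan, but the gap you flag in property (iv) is real, and the private-clique-with-cross-edges gadget you propose does not obviously close it. The paper instead uses a simpler, one-step-at-a-time gadget: define $G_2$ to be the grid and build $G_{i+1}$ from $G_i$ by adding, for each node $u \in V(G_i)$, a single ``duplicate'' $u^\ast$ adjacent to $u$ and to all of $u$'s neighbors in $G_i$. With this choice the local decoder is trivial: to simulate an $(i+1)$-colorer $\mathcal{A}'$ of $G_i$ from an $(i+2)$-colorer $\mathcal{A}$ of $G_{i+1}$, on a query for $u$ ask $\mathcal{A}$ for $u$'s color; if it is in $[i+1]$ use it, and if it is the extra color $i+2$ query $u^\ast$ and use its color, which is in $[i+1]$ because $u^\ast$ is adjacent to $u$. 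Correctness is then a two-line case check using the fact that $u^\ast$ sees exactly $u$'s closed neighborhood: a monochromatic edge $\{u,v\}$ with color $c$ in $\mathcal{A}'$ forces $u,v$ to be colored from $\{c, i+2\}$ by $\mathcal{A}$, and (WLOG $v$ colored $i+2$) makes $\{u, v^\ast\}$ monochromatic in $G_{i+1}$, a contradiction. The induction then descends layer by layer from $G_k$ to $G_2$, losing only a constant factor in $n$ at each step since $k$ is constant. You would also still need to prove that your gadget lands in $\mathcal{L}_{k,\ell}$ for constant $\ell$, which is not immediate; the paper proves this for the duplication construction via three clique-chaining claims (essentially showing that any two $k$-cliques through adjacent base-layer nodes are connected by a chain of $k$-cliques overlapping in $k-1$ nodes, all within constant radius). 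As written, your proposal correctly identifies the crux — a gadget for which the decoder works and which is locally inferable — but leaves it open, and that crux is precisely the paper's contribution here.
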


Combining \Cref{thm:lpcclower,thm:lpccupper}, we obtain a tight bound for $(k + 1)$-coloring for $\mathcal{L}_{k,\ell}$ with $\ell \in O(1)$.

\begin{restatable}{corollary}{lpctight}
    For any constant $k \geq 2$, in the $\onlineLOCAL$ model, the locality of $(k+1)$-coloring graphs in $\mathcal{L}_{k,\ell}$  with $\ell \in O(1)$ is $\Theta(\log n)$.
\end{restatable}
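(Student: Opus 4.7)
The plan is to derive this corollary as an immediate consequence of combining the matching upper and lower bounds already stated in the excerpt. The class $\mathcal{L}_{k,\ell}$ is defined once and for all in \Cref{def:lpcc}, and both \Cref{thm:lpccupper} and \Cref{thm:lpcclower} quantify over the same family of graphs under the same assumptions (any constant $k \geq 2$ and any $\ell \in O(1)$), so the two bounds compose cleanly.

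Concretely, I would proceed in two steps. First, invoke \Cref{thm:lpccupper} to obtain an explicit $\onlineLOCAL$ algorithm that produces a proper $(k+1)$-coloring for every graph in $\mathcal{L}_{k,\ell}$ with locality $O(\log n)$; this directly certifies the $O(\log n)$ upper bound on locality. Second, invoke \Cref{thm:lpcclower} to conclude that every $\onlineLOCAL$ algorithm solving the same task on this class must incur locality $\Omega(\log n)$. Taken together, these sandwich the exact locality at $\Theta(\log n)$, which is the statement of the corollary.

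There is essentially no technical obstacle here beyond matching the quantifiers. The only point deserving a brief sanity check is that the upper bound is an existence statement (some algorithm achieves $O(\log n)$) while the lower bound is a universal statement (every algorithm requires $\Omega(\log n)$); since the two refer to the same model, the same problem $(k+1)$-coloring, and the same graph class $\mathcal{L}_{k,\ell}$, these compose into the tight $\Theta(\log n)$ bound claimed. No separate indistinguishability, combinatorial, or algorithmic work is needed, since all of it is already encapsulated in the two cited theorems.
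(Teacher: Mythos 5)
Your proof is correct and matches the paper's approach exactly: the corollary is stated immediately after the sentence ``Combining \Cref{thm:lpcclower,thm:lpccupper}, we obtain a tight bound,'' so it is indeed just the conjunction of the $O(\log n)$ upper bound and the $\Omega(\log n)$ lower bound over the same class $\mathcal{L}_{k,\ell}$. Your added sanity check on the existential-versus-universal quantifier structure is a reasonable remark but not needed beyond what the two theorem statements already guarantee.
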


\paragraph{Related models.} Subsequent to the initial publication of this paper, \citet{akbari2024online} studied the \emph{randomized} variant of the $\onlineLOCAL$ model, extending our $\Omega(\log n)$ lower bound to the randomized setting. In addition to the $\onlineLOCAL$ model, researchers have studied other hybrids of shared-memory and message-passing models, including the \emph{message-and-memory} (m\&m) model~\cite{aguilera2018passing} and the \emph{computing-with-the-cloud} (CWC) model~\cite{afek2024distributed}.

\subsection{Technical Overview}

We give an overview of the proofs of \Cref{thm:complexity,thm:toroidal,thm:2k-2,thm:lpccupper,thm:lpcclower}.

\subsubsection{Lower Bound for 3-Coloring Bipartite Graphs}

We start by reviewing the $\Omega(\sqrt{n})$ locality lower bound for 3-coloring grids in $\LOCAL$~\cite{brandt2017lcl}. First of all, by an indistinguishability argument, they can focus on toroidal grids and not grids. Without loss of generality, they assumed that the coloring has a special property that any node colored with 3 is adjacent to a node colored with 1 and a node colored with 2. They showed that for any proper 3-coloring of a toroidal grid satisfying the above condition, it is possible to orient the diagonals between any two nodes colored with 3 in such a way that yields an Eulerian graph over the set of all nodes colored with 3.

They made a key observation that the difference between the number of times a directed edge in the Eulerian graph passes a row in the upward direction and in the downward direction is \emph{invariant} of the choice of the row. Let $s(\ell)$ denote this value, where $\ell$ is the number of nodes in a row. They showed that when $\ell$ is odd, $s(\ell)$ must be an odd number within the range $[-\ell/2, \ell/2]$.  Based on this property, they obtained the desired lower bound by a reduction to the following problem on an $n$-cycle, which requires $\Omega(n)$ rounds to solve: Each node outputs a number in such a way that all the numbers sum up to exactly $q$, where $q$ is any given odd integer in the range $[-n/2, n/2]$.

The above lower bound proof does \emph{not} apply to $\onlineLOCAL$ because it highly depends on the assumption that the underlying graph is a toroidal grid, which is fine in the $\LOCAL$ model due to an indistinguishability argument. This is not possible for $\onlineLOCAL$ due to the presence of a global memory. To extend the lower bound to $\onlineLOCAL$, new ideas are required.

Our $\onlineLOCAL$ lower bound starts from the following informal observation. The notion of $s(\ell)$ can be extended beyond rows. Informally, for any directed path or directed cycle, we may define its $s$-value as the difference between the number of times a directed edge in the Eulerian graph passes it in the rightward direction and the leftward direction. Similarly, it is possible to argue that if two directed paths or two directed cycles are \emph{homotopic} to each other, then their $s$-values must be identical. Moreover, this observation still makes sense even if the underlying graph is a grid and not a toroidal grid, although the directed graph defined on the set of all nodes colored with 3 might not be Eulerian in that case.

For the case of grids, one can observe that the $s$-value of any directed cycle must be zero. This gives rise to the following proof idea in the $\onlineLOCAL$ model. Suppose we have an adversary strategy that can force an algorithm to create a directed path in a row on the grid that has an $s$-value of at least $t$. In that case, such an algorithm must have locality $\Omega(t)$, since otherwise using such a directed path construction, the adversary can force a cycle to have an $s$-value that is nonzero, which is a contradiction.

A technical difficulty in realizing the above proof idea is that in the informal definition of the $s$-value of a directed path or directed cycle suggested above, the $s$-value not only depends on the colors on the path or cycle under consideration but also depends on the colors of all nodes adjacent to the path or cycle. Moreover, there are subtle issues about how we deal with an intersection that occurs at the endpoints of a path. This makes the informal definition of $s$-value inconvenient to use, especially because we have to consider partial coloring to prove lower bounds in the $\onlineLOCAL$ model.

To deal with the above issue, we develop the notion of $b$-value that depends \emph{only} on the coloring of the considered path or cycle. The notion of $b$-value  can be seen as a simplification of the notion of $s$-value that still enjoys all the nice properties of $s$-value that we need. It is noteworthy that our notion of $b$-value does not require the assumption that any node colored with 3 is adjacent to a node colored with 1 and a node colored with 2.

To realize the above proof idea, we demonstrate a recursive strategy that constructs a directed path with length at most $T \cdot c^k$ whose $b$-value is at least $k$, for some constant $c > 0$ and any given $T$-locality $\onlineLOCAL$ algorithm. By setting both $T$ and $k$ to be $\Theta(\log n)$ with a sufficiently small leading constant, the path length can be made smaller than $\sqrt{n}$, so we can fit the path in a $\left(\sqrt{n} \times \sqrt{n}\right)$ grid. This gives us the desired $\Omega(\log n)$ locality lower bound. A key idea in the recursive construction is to utilize a property of $b$-value that whether it is even or odd is determined by the parity of the path length and the colors of the two endpoints.

To show the higher lower bound of $\Omega(\sqrt{n})$ in cylindrical and toroidal grids, we again utilize the property that the $b$-value of each row is invariant of the choice of the row and is odd given that each row contains an odd number of nodes. Therefore, as an adversary, we may just ask the algorithm to produce a coloring of two rows, and then we set their directions in a way to force them to have different $b$-values.

\subsubsection{Lower Bound for \texorpdfstring{$(2k-2)$}{(2k - 2)}-Coloring \texorpdfstring{$k$}{k}-Partite Graphs}
To establish the $\Omega(n)$ lower bound for $(2k-2)$-coloring $k$-partite graphs in the $\LOCAL$ model, \cite{coiteux2023no} considered an instance $G^\#$ such that the chromatic number of $G^\#$ is $2k-1$, while the subgraphs induced by the $o(n)$-radius neighborhood of any node of $G^\#$  are $k$-partite. It is crucial to note that $G^{\#}$ lies outside the input family. Despite being outside the input family, the instance is locally solvable, signifying that the subgraph induced by the $o(n)$-neighborhood of any node can be colored with $k$ colors. The lower bound in the $\LOCAL$ model is established through an indistinguishability argument by \citet{linial1992locality}, which proves lower bounds by employing hard instances taken from outside the input family:  To show a $T$-round $\LOCAL$ lower bound for $c$-coloring for $G$ by this approach, one just needs to construct a graph $G^\#$ that cannot be $c$-colored and is indistinguishable to $G$ by a $T$-round $\LOCAL$ algorithm.

The construction of the hard instance $k$-partite graph $G^*$, to establish the lower bound for $(2k-2)$-coloring $k$-partite graphs in the $\onlineLOCAL$ model, is inspired by the construction of $G^{\#}$. However, the application of Linial's technique, designed for $\LOCAL$, is not straightforward in the $\onlineLOCAL$ model possibly due to the presence of global memory. Despite the inspiration of our hard instance being inspired from the hard instance in the $\LOCAL$ model, our arguments to establish the lower bound in the $\onlineLOCAL$ model are different, as sketched in the subsequent paragraphs, and accommodate the unique features of the $\onlineLOCAL$ setting, including the utilization of global memory.

In the construction of the hard instance graph $G^*$, we consider $n/k^2$ gadgets, each comprising $k^2$ nodes. The node set of a gadget is represented as elements in a $k \times k$ matrix. Two nodes in a gadget are connected by an edge if and only if they are neither in the same row nor the same column. Under a proper coloring of a gadget, the gadget is said to be \emph{row-colorful} (\emph{column-colorful}) if there exists a row (column) with all the $k$ nodes being colored by distinct colors. We show that a gadget is exactly one out of row-colorful and column-colorful when we have a proper coloring of the gadget with $2k-2$ colors. $G^*$ is formed by assembling $n/k^2$ gadgets, connecting nodes in consecutive gadgets as follows: for any two consecutive gadgets, two nodes (one from each) are connected by an edge if and only if they are neither in the same row nor the same column. We prove that, in any proper coloring of $G^*$ with $2k-2$ colors, either all the gadgets are row-colorful or all of them are column-colorful.

The basis of our lower bound argument is as follows: if the locality of the algorithm in the $\onlineLOCAL$ model is $o(n)$, then the adversary can ask the algorithm to color $G^*$ in a specific order such that the first gadget is row-colorful and the last gadget is column-colorful. This scenario results in the impossibility of properly coloring some nodes, as all the gadgets must be either row-colorful or column-colorful.

In \cite{coiteux2023no}, the authors consider the problem of $c$-coloring $k$-partite graphs, where $c$ and $k$ are constants. This is a more general version than both $3$-coloring bipartite graphs and $(2k-2)$-coloring $k$-partite graphs. Specifically, \cite{coiteux2023no} establishes that the round complexity of this generalized problem is essentially $\widetilde{\Theta}(n^{1/\alpha})$, where $\alpha=\lfloor \frac{c-1}{k-1} \rfloor$. We have discussed the extension of the hard instance construction for $c=2k-2$ (and thus $c\leq 2k-2$) to demonstrate an $\Omega(n)$ lower bound for $(2k-2)$-coloring $k$-partite graphs in the $\onlineLOCAL$ model. It is natural to consider the possibility of such an extension for all $c$ and $k$. However, if such a general extension is possible for all $c$ and $k$, we would have achieved $\Omega(\sqrt{n})$ lower bound on the locality of $3$-coloring bipartite graphs in the $\onlineLOCAL$ model, which is impossible as the locality of $3$-coloring bipartite graphs is $\Theta(\log n)$ due to \Cref{coro:main}. Resolving the locality of $c$-coloring $k$-partite graphs for all values of $c$ and $k$ in the $\onlineLOCAL$ model remains an intriguing open problem.

\subsubsection{Coloring Graphs with Locally Inferable Unique Colorings}\label{sec:overview-perm}

From the upper bound side, \citet{akbari2021locality} presented an $O(\log n)$-locality $\onlineLOCAL$ algorithm of $3$-coloring a bipartite graph. Their algorithm is based on a key observation that the bipartition for any (connected) bipartite graph $G$ is unique, so there are exactly two different ways to 2-color $G$. In the subsequent discussion, we refer to these two possibilities as two \emph{parities}.

The algorithm of \citet{akbari2021locality}, in a nutshell, tries to 2-color the graph fragments seen by the $\onlineLOCAL$ algorithm using the colors in $\{1, 2\}$. A technical challenge arises when two graph fragments with different parities become connected, as their 2-coloring are incompatible with each other. The issue is resolved by \emph{flipping} the parity of one graph fragment by adding a layer of the third color on the boundary. The flipping operation requires $O(1)$ locality. 
By always flipping the graph fragment with a smaller size, it is guaranteed that at most $O(\log n)$ flipping operations are needed from the perspective of each node. Therefore, the total required locality is $O(\log n)$.

In this work, we extend their algorithm to $(k+1)$ coloring all $k$-partite graphs having a locally inferable unique coloring with radius $\ell \in O(1)$. For these $k$-partite graphs, there is a unique partition into $k$ parts that can be inferred locally: Given a connected node set $S$, the partition restricted to $S$ can be inferred by information within the $\ell$-radius neighborhood of $S$. Therefore, we can assume that there is an oracle $\mathcal{O}$ that returns such a $k$-partition for any graph fragment seen by the $\onlineLOCAL$ algorithm, as the cost of implementing such an oracle is just an extra $\ell \in O(1)$ locality. Our $(k+1)$ coloring algorithm is based on the generalization of the notion of parity into the assignment of the \emph{indices} $\{1,2, \ldots, k\}$ to the $k$ parts in the partition, which is possible due to the oracle $\mathcal{O}$.
Similar to the algorithm of \citet{akbari2021locality}, we try to $k$-color the graph fragments seen by the $\onlineLOCAL$ algorithm using the colors in $[k]$. When two graph fragments with incompatible $k$-coloring become connected, we show that with $O(1)$ locality, it is possible to unify their coloring by iteratively swapping the indices with the help of the extra color $k+1$. As such, we can $(k + 1)$-color any graph in $\mathcal{L}_{k,\ell}$ for any constants $k$ and $\ell$ with $O(\log n)$ locality.

Now, we discuss the idea behind $\Omega(\log n)$ lower bound on the locality of $(k+1)$-coloring $k$-partite graphs when the input graph is from $\mathcal{L}_{k,\ell}$ with $\ell \in O(1)$. 
Recall that \Cref{thm:complexity} already establishes this result for the case of $k = 2$, as grids are bipartite graphs, which belong to $\mathcal{L}_{2,0}$. To extend the result to higher values of $k$, we construct a sequence of graphs $(G_2, G_3, \ldots, G_k)$ recursively such that $G_k \in \mathcal{L}_{k,\ell}$ with $\ell \in O(1)$. For the base case, $G_2$ is a $\left(\sqrt{n} \times \sqrt{n}\right)$ grid. For the inductive step, $G_{i+1}$ is constructed from $G_i$ by augmenting each node $u$ in $G_{i}$ with a new node $u'$ that is connected to $u$ and its (original) neighbors.  We show that an $o(\log n)$-locality algorithm that properly colors $G_{i}$ with $i+1$ colors can be obtained from an $o(\log n)$-locality algorithm that properly colors $G_{i+1}$ with $i+2$ colors, so the  $\Omega(\log n)$ lower bound for the case of $k=2$ (\Cref{thm:complexity}) implies the same asymptotic lower bound for any constant $k \geq 2$.

%Specifically, we construct a graph $G_k \in \mathcal{L}_{k,\ell}$ and devise an adversary strategy $A_k$ to select an order of the nodes in $G_k$ to color, illustrating the infeasibility of any algorithm to properly $(k + 1)$-color it within a locality of $o(\log n)$. As we have already noted, denoting a $\left(\sqrt{n} \times \sqrt{n}\right)$ grid by $G_2$, $G_2 \in \mathcal{L}_2$. Starting from $k = 2$, we can inductively construct a $(k + 1)$-partite graph $G_{k + 1}$  such that $G_{k + 1} \in \mathcal{L}_{k + 1}$ and $(k + 2)$-coloring $G_{k + 1}$ requires locality $\Omega(\log n)$. The construction of $G_{k+1}$ from $G_k$ is as follows: by augmenting each node $u$ in $G_{k}$ with an \emph{additional} node $u'$ connected to it and its neighbors, we form $G_{k+1}$. We argue that, if there is an $o(\log n)$ locality algorithm to properly color $G_{k+1}$ with $k+2$ colors, then we have an $o(\log n)$ locality algorithm to properly color the nodes in $G_{k}$ with $k+1$ colors. By induction and due to \Cref{thm:complexity}, this is impossible.

\subsection{Roadmap}

In \Cref{sect:prelim}, we introduce essential graph terminology and define the computation models. 
In \Cref{sec:main-lower}, we present our main lower bound results: an $\Omega(\log n)$ locality lower bound for $3$-coloring in simple grids and a higher $\Omega(\sqrt{n})$ lower bound for toroidal and cylindrical grids (\Cref{thm:complexity,thm:toroidal}).
In \Cref{sec:2k-2}, we prove the $\Omega(n)$ locality lower bound for $(2k-2)$-coloring of $k$-partite graphs (\Cref{thm:2k-2}). 
In \Cref{sec:lpcc}, we establish a tight $\Theta(\log n)$ bound for $(k + 1)$-coloring in locally inferable unique coloring graphs (\Cref{thm:lpccupper,thm:lpcclower}).

\section{Preliminaries}\label{sect:prelim}

Throughout this paper, the input graph $G=(V, E)$ is simple, connected, undirected, and finite unless explicitly specified otherwise. We write $n = |V|$ to denote the number of nodes in $G$. For a subset of nodes $U \subseteq V$ and a natural number $T$, we write $\mathcal{B}(U, T)$ to denote the set of all nodes within the $T$-radius neighborhood of any $v \in U$. If there is only one node $v$ in $U$, we may write $\mathcal{B}(v, T)$ instead of $\mathcal{B}(U, T)$. We write $G[U]$ to denote the subgraph of $G$ induced by the set of nodes $U \subseteq V$. In particular, $G[\mathcal{B}(v, T)]$ is the subgraph of $G$ induced by the $T$-radius neighborhood of $v$. For any graph $H$, we write $V(H)$ and $E(H)$ to denote its node set and edge set, respectively. For a subgraph $H \subseteq G$ and a node $u$, we write $H - u$ to denote the induced subgraph $H[V(H) \setminus \{u\}]$. For $x \in \mathbb{N}$, we write $[x]$ to denote $\{1,2, \ldots, x\}$.

\subsection{Grids}

We investigate local algorithms in three types of grid topologies.

\begin{description}
    \item[Simple grids:] A simple grid of dimension $(a \times b)$ is a grid with $a$ rows and $b$ columns. More formally, an $(a \times b)$ grid is defined by the set of nodes $\{ (i,j) \colon i\in [a] \text{ and } j \in [b]\}$ such that two nodes $(i,j)$ and $(i', j')$ are adjacent if and only if $|i-i'| + |j-j'| = 1$. For each $i \in [a]$, the set of nodes $\{ (i,j) \colon j \in [b]\}$ forms a \emph{row}. For each $j \in [b]$, the set of nodes $\{ (i,j) \colon i \in [a]\}$ forms a \emph{column}. In a simple grid, each row and each column induce a path.
    \item[Cylindrical grids:] A cylindrical grid of dimension $(a \times b)$ is the result of adding the edges in $\{ \{(i,1), (i,b)\} \colon i \in [a]\}$ to an $(a \times b)$ grid. In other words, a cylindrical grid is the result of connecting the left border and the right border of a simple grid. In a cylindrical grid, each row induces a cycle and each column induces a path.
    \item[Toroidal grids:] A toroidal grid of dimension $(a \times b)$ is the result of adding the edges in $\{ \{(i,1), (i,b)\} \colon i \in [a]\}$ and $\{ \{(1, j), (a, j)\} \colon j \in [b]\}$ to an $(a \times b)$ grid. In other words, a toroidal grid is the result of connecting the left and right borders and connecting the upper and lower borders of a simple grid. In a toroidal grid, each row and each column induce a cycle.
\end{description}

In \Cref{thm:complexity}, we consider simple grids of dimension $\left(\sqrt{n} \times \sqrt{n}\right)$. In \Cref{thm:toroidal}, we consider toroidal and cylindrical grids of dimension $\left(\sqrt{n} \times \sqrt{n}\right)$. It is worth noting that toroidal grids and cylindrical grids with an odd number of columns are not bipartite graphs.

\subsection{Models}\label{sect:models}

We proceed to define two computation models: $\LOCAL$ and $\onlineLOCAL$. In these models, the complexity of an algorithm is measured by its \emph{locality}, which informally refers to the minimum number $T$ such that the output of a node is determined by its  $T$-radius neighborhood. Generally, the locality of an algorithm can be a function of the number $n = |V|$  of nodes in the input graph $G=(V, E)$. It is assumed that the algorithm knows the value of $n$. Throughout the paper, we only consider \emph{deterministic} algorithms, so it does not matter whether the adversary is oblivious or adaptive in the following definitions. In both $\LOCAL$ and $\onlineLOCAL$, the adversary assigns unique identifiers from the set $\{1, 2, \ldots, \poly(n)\}$ to the nodes.

\paragraph{The $\LOCAL$ model.} In an algorithm with locality $T$ in the $\LOCAL$ model~\cite{linial1992locality, peleg2000distributed}, each node independently determines its part of the output based on the information within its $T$-radius neighborhood. For example, for the proper $k$-coloring problem, each node $v \in V$ just needs to output its own color $c(v) \in [k]$ in such a way that no two adjacent nodes output the same color. The output of a node may depend on factors such as the graph structure, input labels, and unique identifiers.

There is an alternative way of defining the $\LOCAL$ model from the perspective of distributed computing by viewing the input graph $G=(V, E)$ as a communication network, where each node $v \in V$ is a machine and each edge $e \in E$ is a communication link. The communication proceeds in synchronous rounds. In each round, each node can communicate with its neighbors by exchanging messages of unlimited size. The locality of an algorithm is the number of communication rounds.

\paragraph{The $\onlineLOCAL$ model.} In an algorithm of locality $T$ in the $\onlineLOCAL$ model~\cite{akbari2021locality}, nodes are processed sequentially according to an adversarial input sequence $\sigma = (v_1, v_2, \ldots, v_n)$. Let $G_i = G\left[\bigcup_{j=1}^i \mathcal{B}(v_j, T)\right]$ denote the subgraph induced by the $T$-radius neighborhoods of the first $i$ nodes in $\sigma$. When the adversary presents a node $v_{i}$, the algorithm must assign an output label to $v_{i}$ based on the information from the subsequence $(v_1, v_2, \ldots, v_{i})$ and the induced subgraph $G_i$.

In the related $\SLOCAL$ model~\cite{ghaffari2017complexity}, the assignment of the output label of $v_{i}$ can \emph{only} be based on $G\left[\mathcal{B}(v_i, T)\right]$ and the output labels that were already assigned to the nodes in $\mathcal{B}(v_i, T)$, so $\onlineLOCAL$ can be seen as a variant of $\SLOCAL$ that has a global memory.

\section{Hardness of 3-Coloring in Grids}\label{sec:main-lower}

In this section, we establish our lower bound results for $3$-coloring in a simple grid, as well as toroidal and cylindrical grids. In \Cref{sec:properties}, we discuss some groundwork to prove the lower bounds presented in the subsequent sections. Moving on to \Cref{sec:lower}, we prove \Cref{thm:complexity}, showing the $\Omega(\log n)$ locality lower bound for $3$-coloring in a simple grid. Subsequently, in \Cref{sec:lower2}, we prove \Cref{thm:toroidal}, establishing the $\Omega(\sqrt{n})$ locality lower bound for $3$-coloring in toroidal and cylindrical grids.

\subsection{Properties of 3-Coloring in Grids}
\label{sec:properties}
In this section, we focus on analyzing the properties of proper 3-coloring and introducing the notion of $b$-value, which plays a critical role in the proof of \Cref{thm:complexity,thm:toroidal}. Although some of the results in this section apply to general 3-partite graphs, they are particularly interesting when we restrict our focus to grid topologies. Throughout this section, we let $c \colon V \rightarrow \{1, 2, 3\}$ be any proper 3-coloring of any 3-partite graph $G=(V,E)$.

\begin{definition}[$a$-value]
    Let $\{u,v\} \in E$ be an edge in $G$. Define the $a$-value of $(u,v)$ by
    \begin{equation*}
        a(u, v)=
        \begin{cases}
            c(u) - c(v) & \text{if $c(u) \neq 3$ and $c(v) \neq 3$,} \\
            0           & \text{otherwise.}
        \end{cases}
    \end{equation*}
\end{definition}

For any edge $\{u,v\}\in E$, note that $a(u,v)\in \{-1,0,1\}$. Moreover, the $a$-value of $(u,v)$ is non-zero if and only if one of $\{u,v\}$ is colored with $1$ and the other one is colored with $2$. Observe that we always have $a(u, v) + a(v, u) = 0$ for any edge $\{u,v\}\in E$.

We define the $b$-value of a directed path $P$ or a directed cycle $C$ by the sum of $a(u,v)$ over all directed edges $(u,v)$ in $P$ or $C$, respectively.

\begin{definition}[$b$-value]
    Let $H$ be a directed path or a directed cycle. Define the $b$-value of $H$ by
    \[
        b(H) = \sum_{(u,v) \in H} a(u, v).
    \]
\end{definition}

For the special case of a zero-length directed path $P$, we define $b(P) = 0$, as it does not contain any directed edge.

We observe that $b(C) = 0$ for any 4-node directed cycle $C$. This observation is particularly relevant to grid topologies, as they consist of ``cells'' that are 4-node cycles.

\begin{lemma}[Cancellation of the $a$-values within a cell]
    \label{lem:cell}
    For any 4-node directed cycle $C$, $b(C) = 0$.
\end{lemma}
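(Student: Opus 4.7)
The plan is to prove $b(C) = 0$ by a short case analysis on the number $t$ of vertices of $C$ coloured $3$. Write the cycle as $C = v_1 \to v_2 \to v_3 \to v_4 \to v_1$. Any three vertices of a $4$-cycle contain an adjacent pair, so properness of $c$ immediately rules out $t \geq 3$; it therefore suffices to handle $t \in \{0, 1, 2\}$.

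The underlying mechanism is the same in every case: an edge $(u,v)$ contributes $a(u,v) = c(u) - c(v)$ exactly when $\{c(u), c(v)\} = \{1, 2\}$, and contributes $0$ as soon as either endpoint is coloured $3$. The vertices of $C$ that are not coloured $3$ always form a properly $2$-coloured sub-path of $C$ (possibly empty, or the whole cycle), so their $c$-values alternate in $\{1, 2\}$ and the corresponding $a$-values telescope.

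I would then execute the case analysis in order. For $t = 0$, all four $a$-values are of the form $c(u) - c(v)$ and telescope around the entire cycle, summing to $0$. For $t = 2$, the two $3$-coloured vertices must be antipodal on $C$ (otherwise the coloring fails to be proper), so every edge of $C$ is incident to a $3$ and every $a$-value vanishes. The only mildly non-trivial case is $t = 1$: if say $v_1$ is coloured $3$, the two edges at $v_1$ contribute $0$, while the remaining length-$2$ sub-path $v_2 v_3 v_4$ is $2$-coloured with $\{1,2\}$, forcing $c(v_2) = c(v_4)$, and the two remaining $a$-values telescope to $c(v_2) - c(v_4) = 0$. Summing gives $b(C) = 0$ in all cases.

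I do not anticipate any real obstacle. The entire argument is a three-way exhaustive check, and the only step worth double-checking is the combinatorial claim that $t \geq 3$ is impossible and that, when $t = 2$, the two $3$-coloured vertices must be opposite — both are immediate consequences of the fact that $C$ has length $4$ and $c$ is proper.
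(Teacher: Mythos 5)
Your proof is correct, but it takes a genuinely different route from the paper's. You organize the argument as an exhaustive case analysis on the number $t$ of vertices colored $3$, ruling out $t\geq 3$ by adjacency, handling $t=2$ by noting every edge is incident to a $3$, and handling $t\in\{0,1\}$ via telescoping. The paper instead makes a single pigeonhole observation: in a proper $3$-coloring of a $4$-cycle, at least one antipodal pair must receive the same color (since the only non-adjacent pairs are the two diagonals, and four vertices cannot all be colored distinctly from three colors). Writing the cycle as $u\to v\to s\to t\to u$ and assuming WLOG $c(u)=c(s)$, the four $a$-values pair up as $a(u,v)+a(v,s)=a(u,v)+a(v,u)=0$ and $a(s,t)+a(t,u)=a(s,t)+a(t,s)=0$, using only that $a$ depends on colors and is antisymmetric. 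The paper's version is shorter (one case after WLOG), while yours is more mechanical and avoids the small leap of spotting the antipodal-pair coincidence. One cosmetic imprecision in your write-up: the vertices not colored $3$ do not always form a sub-path of $C$ --- when $t=2$ they are two antipodal, hence non-adjacent, vertices --- but this does not affect the validity of your $t=2$ case, which you dispatch separately.
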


\begin{proof}
    Let  $C = \begin{array}{ccc}
            t          & \leftarrow  & s        \\
            \downarrow &             & \uparrow \\
            u          & \rightarrow & v
        \end{array}$, so we have $b(C) = a(u,v) + a(v,s) + a(s,t) + a(t,u)$.
    Since there are only three colors, we have $c(u) = c(s)$ or $c(v) = c(t)$ or both. Without loss of generality, we assume $c(u) = c(s)$, so
    \[
        a(u,v) + a(v,s) = a(u,v) + a(v,u) = 0 \ \ \ \text{and} \ \ \ a(s, t) + a(t, u) = a(s,t) + a(t,s) = 0,
    \]
    implying that $b(C) = a(u,v) + a(v,s) + a(s,t) + a(t,u) = 0$.
\end{proof}

We say that a cycle is \emph{simple} if it does not have repeated nodes. As a consequence of \Cref{lem:cell}, we show that $b(C) = 0$ for any simple directed cycle $C$ in a grid.

\begin{lemma}[$b$-value of a cycle]
    For any simple directed cycle $C$ in a grid, $b(C) = 0$.
    \label{lem:cycle}
\end{lemma}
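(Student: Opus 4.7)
The plan is to prove this by a discrete Stokes-type argument: decompose the region enclosed by $C$ into unit cells, sum the $b$-values over those cells (each of which vanishes by \Cref{lem:cell}), and show that the sum collapses to $b(C)$ because contributions from interior edges cancel in pairs.

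Concretely, I would first observe that since the grid is embedded in the plane, the simple directed cycle $C$ bounds a polygonal region $R$, which is a union of finitely many unit cells. Without loss of generality, assume $C$ is oriented so that $R$ lies to its left (the opposite orientation just negates $b(C)$, and we want to show it is $0$). Let $\mathcal{D}$ be the collection of all unit cells inside $R$, each oriented counterclockwise so that its interior also lies to its left. By \Cref{lem:cell}, $b(D) = 0$ for every $D \in \mathcal{D}$, so
\[
    \sum_{D \in \mathcal{D}} b(D) = 0.
\]

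Next, I would unpack the left-hand side edge by edge. Every directed edge $(u,v)$ appearing in some $D \in \mathcal{D}$ corresponds to an (undirected) grid edge that is either interior to $R$ or lies on the boundary $C$. If $\{u,v\}$ is interior, it is shared by exactly two cells $D_1, D_2 \in \mathcal{D}$, and the counterclockwise orientations of $D_1$ and $D_2$ traverse it in opposite directions; therefore its contribution $a(u,v) + a(v,u) = 0$ cancels in the sum. If $\{u,v\}$ is on the boundary, it lies in exactly one cell of $\mathcal{D}$, and the counterclockwise orientation of that cell traverses it in the same direction as $C$ (since $R$ is to the left of both). Hence
\[
    \sum_{D \in \mathcal{D}} b(D) \;=\; \sum_{(u,v) \in C} a(u,v) \;=\; b(C),
\]
and combining with the previous display yields $b(C) = 0$.

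The main obstacle is the planar-topology bookkeeping: verifying rigorously that a simple grid cycle bounds a region decomposable into unit cells, and that boundary edges of cells inherit the orientation of $C$ while interior edges are traversed with opposite orientations by their two incident cells. This is a standard Jordan-curve / face-enumeration argument, and I expect it to be straightforward but slightly tedious; the rest of the proof is just the cancellation computation, which is immediate from the identity $a(u,v) + a(v,u) = 0$ and \Cref{lem:cell}.
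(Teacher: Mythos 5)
Your proposal is correct and follows essentially the same approach as the paper's proof: decompose the enclosed region into unit cells oriented consistently with $C$, invoke \Cref{lem:cell} to kill each cell's $b$-value, and argue that interior-edge contributions cancel while boundary-edge contributions reassemble into $b(C)$. The paper is slightly terser about the orientation bookkeeping and refers to a figure instead, but the argument is the same.
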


\begin{proof}
    We can calculate $b(C) = \sum_{(u, v) \in C} a(u, v)$ alternatively by summing up the $b$-value of each cell that is inside the cycle, where each cell is seen as a 4-node directed cycle that has the same orientation as that of $C$, so $b(C) = 0$ by \Cref{lem:cell}. The validity of the calculation is due to the cancellation of the $a$-values associated with the edges that are strictly inside $C$, as illustrated in \Cref{fig:grid}. For each directed edge $(u,v)$ on $C$, $a(u,v)$ appears exactly once in the calculation. For each edge $\{u,v\}$ that is inside $C$, both $a(u,v)$ and $a(v,u)$ appear exactly once in the calculation, so they cancel each other.
\end{proof}

\begin{figure}[ht]
    \centering
    \includegraphics[scale=0.55]{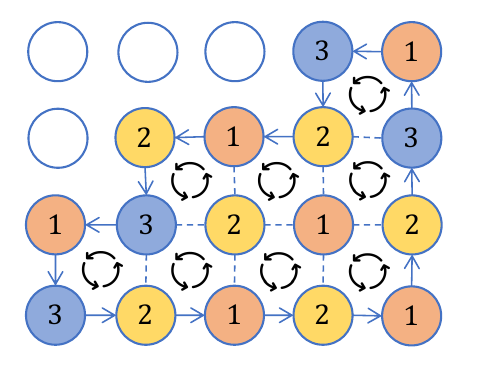}
    \caption{Cancellation of $a$-values.}
    \label{fig:grid}
\end{figure}

For each node $u \in V$, we define
\begin{equation*}
    i(u)=
    \begin{cases}
        1 & \text{if } c(u) = 3, \\
        0 & \text{otherwise.}
    \end{cases}
\end{equation*}
as the indicator variable for $c(u) = 3$. We show that the parity of the $b$-value of a directed path is uniquely determined by the colors of the two endpoints and the parity of the path length.

\begin{lemma}[Parity of $b$-value]\label{lem:parity}
    Let $C$ be any directed {cycle} of length $\ell$ and let $P=(u, \ldots, v)$ be any directed path of length $\ell$.
    We have
    \[b(C) \equiv  \ell \pmod{2}  \qquad \text{and} \qquad b(P) \equiv i(u) + i(v) + \ell \pmod{2}.\]
\end{lemma}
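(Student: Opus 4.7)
The plan is to reduce the whole statement to a single edgewise parity identity and then sum it along the cycle or path, using that each interior vertex is counted twice (even contribution) while the two endpoints of a path are counted only once.

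\textbf{Step 1: An edgewise parity identity.} I would first show that for every edge $\{u,v\}$ of $G$,
\[a(u,v) \equiv 1 + i(u) + i(v) \pmod{2}.\]
This is a short case analysis on the colors. Since $c$ is proper, we cannot have $c(u) = c(v) = 3$. If neither $c(u)$ nor $c(v)$ equals $3$, then $a(u,v) = \pm 1$ is odd and $i(u) = i(v) = 0$, so the right-hand side is $1$. If exactly one of the endpoints, say $u$, has color $3$, then $a(u,v) = 0$ and $i(u) + i(v) = 1$, so the right-hand side is $2 \equiv 0 \pmod 2$. Both cases match.

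\textbf{Step 2: Summing over the cycle.} For any directed cycle $C$ of length $\ell$, I would sum the identity from Step 1 over the $\ell$ directed edges of $C$:
\[b(C) = \sum_{(u,v) \in C} a(u,v) \equiv \ell + \sum_{(u,v) \in C} \bigl(i(u) + i(v)\bigr) \pmod 2.\]
Every vertex $w$ of $C$ appears exactly once as a tail and exactly once as a head of a directed edge of $C$, so $i(w)$ appears twice in the right-hand sum. Hence the whole sum is even, and $b(C) \equiv \ell \pmod 2$.

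\textbf{Step 3: Summing over the path.} For a directed path $P = (u_0, u_1, \ldots, u_\ell)$ with $u_0 = u$ and $u_\ell = v$, the same telescoping applied to
\[b(P) \equiv \ell + \sum_{j=0}^{\ell-1}\bigl(i(u_j) + i(u_{j+1})\bigr) \pmod 2\]
shows each interior vertex $u_j$ with $0 < j < \ell$ contributes $2\,i(u_j)$ (even), while $u_0$ and $u_\ell$ each appear in exactly one directed edge, contributing $i(u) + i(v)$. Therefore $b(P) \equiv \ell + i(u) + i(v) \pmod 2$, as required.

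The substantive content is really in Step 1; once the edgewise identity is established, Steps 2 and 3 are just a double-counting / telescoping argument that is standard for sums over closed versus open walks. The only thing to be careful about is using propriety of the coloring to rule out the impossible case $c(u) = c(v) = 3$, which is what makes the identity of Step 1 cover all remaining cases uniformly.
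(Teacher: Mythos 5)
Your proof is correct and follows essentially the same route as the paper's: establish the single-edge parity identity $a(u,v) \equiv 1 + i(u) + i(v) \pmod 2$ by case analysis, then sum it over the edges of $P$ (or $C$), observing that interior vertices are counted twice and hence vanish modulo $2$. The only cosmetic difference is that the paper derives the cycle case from the path case by treating a cycle as a closed path, whereas you sum directly over the cycle; both are immediate once the edgewise identity is in hand.
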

\begin{proof}
    We just need to focus on proving $b(P) \equiv i(u) + i(v) + \ell \pmod{2}$, as $b(C) \equiv  \ell \pmod{2}$ follows from $b(P) \equiv i(u) + i(v) + \ell \pmod{2}$ by viewing $C$ as a directed path that starts and ends at the same node $u$ and observing that $i(u) + i(u) \pmod{2} = 0$ regardless of the color of $u$.

    \begin{description}
        \item[Base case.] Suppose $\ell = 1$. We claim that $b(P) = a(u, v) \equiv i(u) + i(v) + \ell \pmod{2}$. To see this, we divide the analysis into two cases.
            \begin{itemize}
                \item If $c(u) \neq 3$ and $c(v) \neq 3$, then $a(u, v) \in \{-1, 1\}$, so $a(u, v)  \equiv i(u) + i(v) + \ell  \equiv 1 \pmod{2}$, as $i(u) = i(v) = 0$.
                \item If $c(u) = 3$ or $c(v) = 3$, then $0 = a(u, v) \equiv i(u) + i(v) + \ell  \equiv 0 \pmod{2}$, as $i(u) + i(v) = 1$.
            \end{itemize}
        \item[Summation.] Consider the case where $\ell > 1$.
            \begin{align*}
                b(P) & = \sum_{(s, t) \in P} a(s, t)                                                                                                          \\
                     & \equiv \sum_{(s, t) \in P} (i(s) + i(t) + 1) \pmod{2}                                              & \text{(the base case)}            \\
                     & \equiv - i(u) - i(v) + 2 \cdot \sum_{(s, t) \in P} (i(s) + i(t)) + \sum_{(s, t)  \in P} 1 \pmod{2}                                     \\
                     & \equiv i(u) + i(v) + \ell \pmod{2}.                                                                &                        & \qedhere
            \end{align*}
    \end{description}
\end{proof}

We emphasize that the claim regarding $b(C)$ in \cref{lem:parity} does \emph{not} follow immediately from \cref{lem:cycle}, as \cref{lem:cycle} only considers grids.

Before we continue, we briefly discuss the intuition behind our notion of $b$-value. This notion intuitively captures the level of difficulty in completing the coloring of the rest of the grid given a colored path. In a proper 3-coloring of a grid, the set of nodes with color $3$ separates the remaining nodes into connected regions. In the subsequent discussion, a region refers to a connected set of nodes with colors $1$ and $2$.

Let us start by considering a directed path $3 \rightarrow 2 \rightarrow 1 \rightarrow 2 \rightarrow 1 \rightarrow 2 \rightarrow 3$ whose $b$-value is zero. Starting from the coloring of this path, it is easy to extend the partial coloring in a way that the region of nodes colored with 1 and 2 is enclosed by a cycle of nodes colored with 3, see \Cref{fig:b-value=0}.

\begin{figure}[ht]
    \centering
    \includegraphics[scale=0.55]{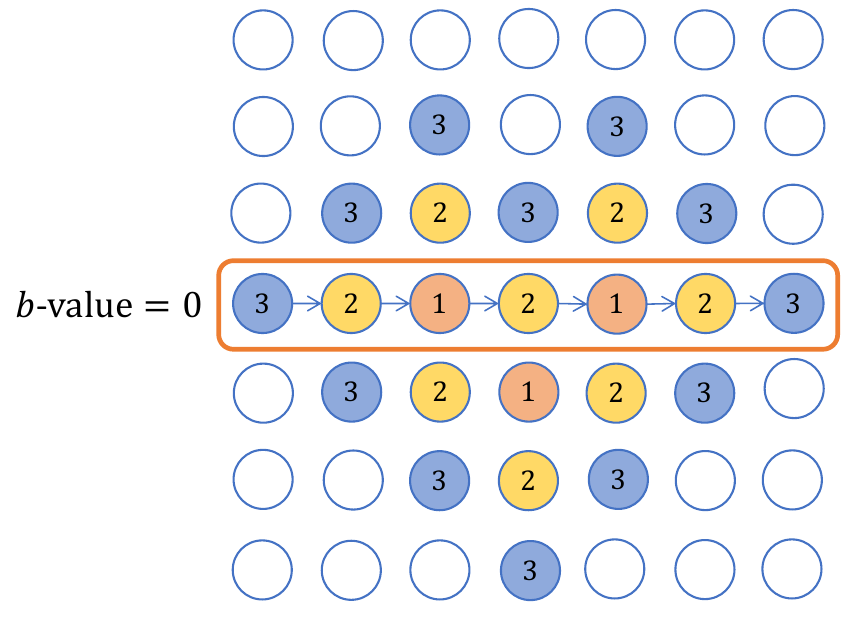}
    \caption{A directed path whose $b$-value is zero.}
    \label{fig:b-value=0}
\end{figure}

Consider a directed path $3 \rightarrow 2 \rightarrow 1 \rightarrow 2 \rightarrow 1 \rightarrow 3$  whose a $b$-value is 1. In this case, we cannot close the region with \emph{one} cycle of nodes colored 3. The region of nodes colored with 1 and 2 must continue unless it closes with itself or reaches the boundary of the grid.

Consider any directed cycle $C$ that contains the directed path  $3 \rightarrow 2 \rightarrow 1 \rightarrow 2 \rightarrow 1 \rightarrow 3$ as a subpath. Observe that $C$ must intersect with the region of nodes colored with 1 and 2 \emph{again} via a directed path whose $b$-value is $-1$, as illustrated in \Cref{fig:b-value=1}.

Intuitively, the $b$-value of a directed path counts the difference between the number of occurrences of $3 \rightarrow 2 \rightarrow \cdots \text{(colors $1$ and $2$ appear alternatively)} \cdots \rightarrow 1 \rightarrow 3$ and the number of occurrences of $3 \rightarrow 1 \rightarrow \cdots \text{(colors $1$ and $2$ appear alternatively)} \cdots \rightarrow 2 \rightarrow 3$. The above informal discussion suggests that each occurrence $3 \rightarrow 2 \rightarrow \cdots \rightarrow 1 \rightarrow 3$ has to be matched with an occurrence of $3 \rightarrow 1 \rightarrow \cdots \rightarrow 2 \rightarrow 3$. Therefore, if we can force an algorithm to create a directed path $P$ with a large value of $|b(P)|$, then we may obtain a high locality lower bound for the algorithm.

\begin{figure}[ht]
    \centering
    \includegraphics[scale=0.55]{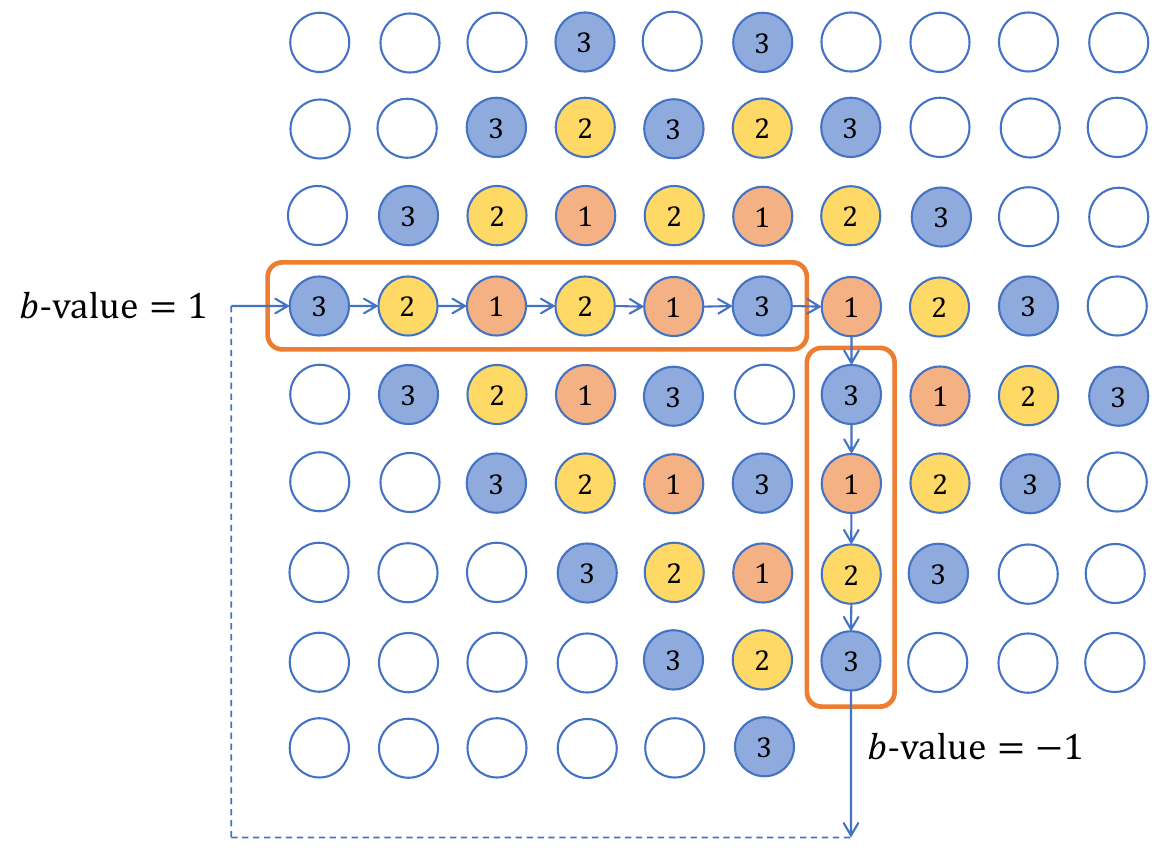}
    \caption{Intersection of a directed cycle and a region of nodes with colors $1$ and $2$.}
    \label{fig:b-value=1}
\end{figure}

\subsection{Hardness of 3-Coloring in Simple Grids}
\label{sec:lower}

In this section, we prove the main result of the paper, i.e., \Cref{thm:complexity}. Our proof uses the following properties of $b$-values in grids.
\begin{itemize}
    \item The $b$-value of any directed cycle is \emph{always zero} (\Cref{lem:cycle}).
    \item The parity of the $b$-value of any directed path is \emph{solely determined} by the parity of the path length and the color of the two endpoints $u$ and $v$ (\Cref{lem:parity}).
\end{itemize}
To establish \Cref{thm:complexity}, we prove that any algorithm designed for 3-coloring grids, operating with a locality of $T(n) \in o(\log n)$, can be strategically countered by an adversary capable of forcing a directed cycle with a non-zero $b$-value. Throughout this section, let $\mathcal{A}$ be any algorithm for 3-coloring a $\left(\sqrt{n} \times \sqrt{n} \right)$ grid $G=(V,E)$ with a locality of $T(n) \in o(\log n)$. We analyze the coloring function $c \colon V \rightarrow \{1, 2, 3\}$ generated by algorithm $\mathcal{A}$.

The core strategy of our proof is to create a directed path with a substantial $b$-value, making it impossible for the algorithm to complete the coloring with a small locality. The interaction between the algorithm and the adversary in $\onlineLOCAL$ can be seen as a 2-player game as follows.

\begin{itemize}
    \item The algorithm's task is to label each node $v_i$ based on the current discovered region $G_i$ and the sequence $(v_1, v_2, \ldots, v_i)$. The algorithm wins if the final coloring of $G$ is proper.
    \item The adversary's task is to select the nodes $v_i$ in the sequence $\sigma=(v_1, v_2, \ldots, v_n)$. Moreover, the adversary has the liberty to adjust how the current discovered region $G_i$ fits into $G$ as an induced subgraph. Informally, suppose $G_i$ consists of several connected components. The adversary has the flexibility to adjust the \emph{directions} of these components and the \emph{distances} between these components, as the algorithm is unaware of the precise location of these components in  $G$. The adversary wins if the final coloring of $G$ is not proper.
\end{itemize}

Next, we show that there is an adversary strategy to construct a directed path with a large $b$-value within a row while keeping the discovered region small. Here we only allow our adversary strategy to select nodes within \emph{one} row, so in the subsequent discussion, we measure the \emph{length} of the current discovered region $G_i$ by the distance between the two farthest nodes in $G_i$ restricted to the row, where the distance is measured with respect to the row.

\begin{lemma}\label{lem:construct}
    Let $k$ be a non-negative integer such that $5^{k+1} \cdot T(n) < \sqrt{n}$.
    There is an adversary strategy to construct a directed path with its $b$-value of at least $k$ within a row while keeping the length of the discovered region at most $5^{k+1} \cdot T(n)$.
\end{lemma}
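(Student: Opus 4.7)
The plan is to proceed by induction on $k$. The base case $k = 0$ is immediate: the adversary exposes any single node in the target row, producing a zero-length directed path whose $b$-value is $0$ by convention, and the discovered region has length at most $2T(n) + 1 \leq 5 \cdot T(n)$.

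For the inductive step, assume a strategy $\mathcal{S}_{k-1}$ exists that, when applied to any fresh sub-row, forces a directed path with $b$-value at least $k - 1$ inside a discovered region of length at most $5^{k} T(n)$. To build $\mathcal{S}_k$, I would apply $\mathcal{S}_{k-1}$ twice in the target row, inside two disjoint sub-windows of length $5^{k} T(n)$ each, separated by a gap of more than $2T(n)$ so that the algorithm's $T(n)$-radius view from one window cannot reach into the other. This yields two directed paths $P_1 = (u_1, \ldots, v_1)$ and $P_2 = (u_2, \ldots, v_2)$, each with $b$-value at least $k - 1$. The adversary then reveals a short \emph{bridge} $B$ of row nodes between $v_1$ and $u_2$, producing a single directed path $P = P_1 \cdot B \cdot P_2$ in the row with $b(P) = b(P_1) + b(B) + b(P_2)$.

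The heart of the argument is to show that the adversary's residual freedom, namely orienting each sub-path (by applying $\mathcal{S}_{k-1}$ with either endpoint serving as start, which simply negates $b(P_i)$ upon reversal) and choosing the precise length of $B$, suffices to force $b(P) \geq k$. By flipping $P_1$ or $P_2$ if needed, I arrange $b(P_1) + b(P_2) \geq 2(k - 1)$. For the bridge, I would keep $|B|$ a small constant so that $|b(B)|$ is tiny; crucially, \Cref{lem:parity} pins down the parity of $b(B)$ in terms of $|B|$ and the colors $c(v_1), c(u_2)$ that the algorithm has already committed to, so the adversary can tune $|B|$ to force $b(B) \geq 0$. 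This gives $b(P) \geq 2(k-1) \geq k$ for $k \geq 2$, with $k = 1$ folded into the base case by a direct short construction exploiting \Cref{lem:parity}. Length bookkeeping is straightforward: two sub-windows of size $5^{k} T(n)$ plus a gap and bridge of combined size $O(T(n))$ fit comfortably within $5 \cdot 5^{k} T(n) = 5^{k+1} T(n)$.

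The main obstacle is the parity/sign case analysis verifying that $b(B) \geq 0$ can always be arranged. Concretely, for each possible pair of committed colors $(c(v_1), c(u_2)) \in \{1,2,3\}^2$, one must exhibit a short bridge length for which every proper coloring of $B$ by the algorithm yields $b(B) \geq 0$; I expect this to reduce to a small table of cases. A secondary subtlety to handle carefully is that the adversary commits to sub-path orientations and the bridge's length only \emph{after} observing the algorithm's full response to the prior exposures — a flexibility granted by the $\onlineLOCAL$ rules, but one whose legitimacy must be invoked explicitly in the formal proof.
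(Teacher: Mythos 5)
There is a genuine gap in the inductive step: you claim the adversary can ``tune $|B|$ to force $b(B)\ge 0$,'' but \Cref{lem:parity} only pins down the \emph{parity} of $b(B)$, not its sign. The algorithm has already committed the colors of the two bridge endpoints $v_1, u_2$ during the earlier sub-constructions and will color the bridge interior adversarially, so $b(B)$ can be forced to be strictly negative. Already with a single-edge bridge ($|B|=1$), if the algorithm has set $c(v_1)=1$ and $c(u_2)=2$ (which you cannot prevent), then $b(B)=a(v_1,u_2)=-1$ with no further choice available; longer bridges only give the algorithm more room to drive $b(B)$ toward $-|B|$. Your $k=2$ case then already fails, since $b(P)\ge 2(k-1)+b(B)$ could be as low as $1$. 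There is also an internal inconsistency in your bookkeeping: you separate the sub-windows by a gap of more than $2T(n)$ during construction, yet insist the bridge $B$ joining them has only small constant length; the middle segment $P_{v_1, u_2}$ necessarily spans that whole gap.

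The paper's proof resolves exactly this obstruction with a different and subtler idea: it does \emph{not} try to make the middle's $b$-value nonnegative or small. Letting $h=b(P_{v,s})$ be the middle segment's $b$-value, it uses the adversary's freedom over the gap length $\ell\in\{2,3\}$ to force the \emph{parity} of $h$ to differ from $k-1\bmod 2$. Then, because $b(P_{u,t})=2(k-1)+h$, the triangle inequality $|2(k-1)+h|+|h|\ge 2|k-1|$ would hold with equality only if $|h|=|k-1|$, which the parity mismatch rules out. Hence one of $|2(k-1)+h|$ and $|h|$ is strictly greater than $k-1$, and the adversary simply selects whichever of the four oriented paths $P_{v,s},P_{s,v},P_{u,t},P_{t,u}$ achieves $b$-value at least $k$. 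This is the key trick your proposal is missing: the bridge's contribution is allowed to be arbitrary (even large and negative), and the adversary recovers by having the option to output the middle segment by itself, oriented suitably. To repair your proof you would need to replace the ``$b(B)\ge 0$ via a small case table'' step with this parity-plus-triangle-inequality argument.
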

\begin{proof}
    We prove the lemma by an induction on $k$.

\paragraph{Base case.} For the case where $k = 0$, the adversary can just reveal any node $v$. We view node $v$ as a directed path with a $b$-value of zero. The length of the discovered region is $2 T(n) < 5^{k+1} \cdot T(n)$, as the algorithm has locality $T(n)$.

\paragraph{Inductive step.} Consider the case where $k \geq 1$. In the subsequent discussion, we write $P_{x,y}$ to denote the directed path starting from $x$ and ending at $y$ along the row. By the induction hypothesis, we construct two directed paths $P_{u,v}$ and $P_{s,t}$ with a $b$-value of at least $k-1$. The construction of both directed paths requires a discovered region of length at most $5^{k} \cdot T(n)$.

            We now concatenate these two directed paths together into a directed path $P_{u,t}$ by concatenating the two discovered regions via a path of length $\ell \in \{2,3\}$ on the row. See \Cref{fig:concatenate} for an illustration. After the concatenation, the length of the discovered region is at most $\ell + 2 \cdot 5^{k} \cdot T(n) \leq 5^{k+1} \cdot T(n)$. Here we use the fact that $T(n) \geq 1$, since $3$-coloring cannot be solved with zero locality.

            To finish the proof, we just need to select $\ell \in \{2,3\}$ in such a way that we can find a directed path in the row using the nodes between $u$ and $t$ whose $b$-value is at least $k$. If the $b$-value of one of the paths $P_{u,v}$ and $P_{s,t}$ is already at least $k$, then we are done, so from now on we assume that their $b$-values are precisely $k-1$.

            By \Cref{lem:parity}, the parity of the $b$-value of any directed path is \emph{solely determined} by the parity of the path length and the color of the two endpoints.
            Therefore, we may select $\ell \in \{2,3\}$ in such a way that the parity of the $b$-value of the path $P_{v,s}$  differs from $k-1$, i.e.,~$b(P_{v,s}) \not\equiv k-1 \pmod{2}$.

            From now on, we write $h = b(P_{v,s})$, so we have $b(P_{u,t}) = 2(k-1)+h$. We claim that at least one of $|2(k-1) + h|$ or $|h|$ is greater than $|k-1|$. The claim holds true due to the inequality \[|2(k-1) + h| + |h| = |2(k-1) + h| + |-h| \geq |2(k-1)| = 2 \cdot |k-1|\]
            and the observation that the inequality becomes equality \emph{only} when $|h| = |k-1|$, which is impossible because $h \neq k-1$. We conclude that the $b$-value of at least one of the directed paths $P_{v,s}$, $P_{s,v}$, $P_{u,t}$, and $P_{t,u}$ is at least $k$, as required.
\end{proof}

\begin{figure}[ht]
    \centering
    \includegraphics[scale=0.8]{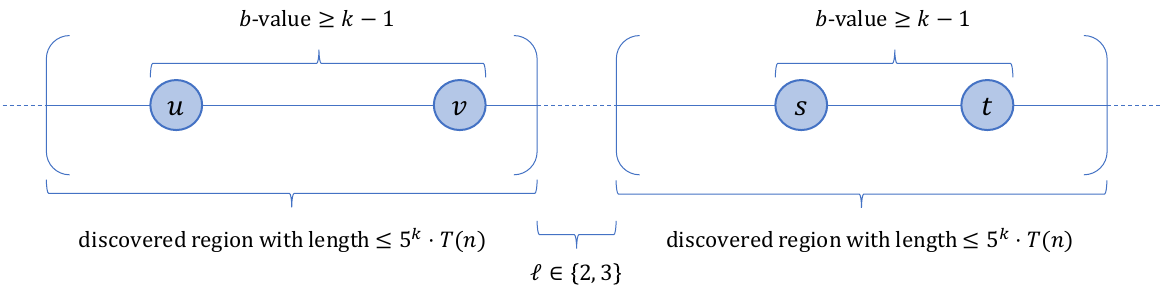}
    \caption{Concatenating $P_{u,v}$ and $P_{s,t}$.}
    \label{fig:concatenate}
\end{figure}

We are prepared to prove \Cref{thm:complexity}.

\main*

\begin{proof}
    To prove the theorem, we just need to show that any algorithm $\mathcal{A}$ with locality $T(n) \in o(\log n)$ for $3$-coloring a $\left(\sqrt{n} \times \sqrt{n}\right)$ grid is incorrect.
    As long as $n$ is at least a sufficiently large constant, we may find an integer $k$ to satisfy the two conditions: $k > 4T(n) + 4$ and $5^{k+1} \cdot T(n) < \sqrt{n}$.

    In the subsequent discussion, if $x$ and $y$ are two nodes that belong to the same row or the same column, then we write $P_{x,y}$ to denote the directed path starting from $x$ and ending at $y$ along the row or the column.

    In the $\left(\sqrt{n} \times \sqrt{n}\right)$ grid, we apply the adversary strategy of \Cref{lem:construct} to force the algorithm $\mathcal{A}$ to construct a directed path $P_{u,v}$ in a row $R_1$ starting from $u$ and ending at $v$ with $b(P_{u,v})\geq k$. After that, we consider the row $R_2$ that is above the row $R_1$ by a distance of $2 T(n) + 2$.
    Let $t$ be the node in $R_2$ that belongs to the same column as $u$.
    Let $s$ be the node in $R_2$ that belongs to the same column as $v$. See \Cref{fig:rectangle} for an illustration.

    As the adversary, we ask the algorithm to color all the nodes in $P_{s,t}$.
    We may assume that $b(P_{s,t}) \geq 0$. In case $b(P_{s,t}) < 0$, we can simply reverse the direction of $P_{s,t}$. This is possible because the discovered region associated with $P_{s,t}$ is not connected to the discovered region associated with $P_{u,v}$ from the viewpoint of the algorithm, as the distance between $R_1$ and $R_2$ is at least $2 T(n) + 2$.

    We let the algorithm finish the coloring of the remaining nodes arbitrarily.
    Now, consider the directed cycle $C$ formed by concatenating the four directed paths $P_{u,v}$, $P_{v,s}$, $P_{s,t}$, and $P_{t,u}$.
    By its definition, the absolute value of the $b$-value of one path $P$ is at most its length, so the $b$-value of both $P_{v,s}$ and $P_{t,u}$ is at least $-(2T(n)+2)$, so
    \[b(C) =b(P_{u,v}) + b(P_{v,s}) + b(P_{s,t}) + b(P_{t,u}) \geq k -(2T(n)+2) + 0 -(2T(n)+2) > 0,\]
    which is impossible due to \Cref{lem:cycle}, so the coloring produced by the algorithm is incorrect.
\end{proof}

We remark that if we consider a general $(a \times b)$ grid, then the above proof yields an $\Omega\left(\min\{ \log \max\{a,b\}, \min\{a,b\}\}\right)$ locality lower bound. To see this, without loss of generality, assume $b \geq a$, and observe that the proof above works so long as the locality $T$ of the algorithm satisfies $T \in o(\log b)$ and $T \in o(a)$.

\begin{figure}[ht]
    \centering
    \includegraphics[scale=0.8]{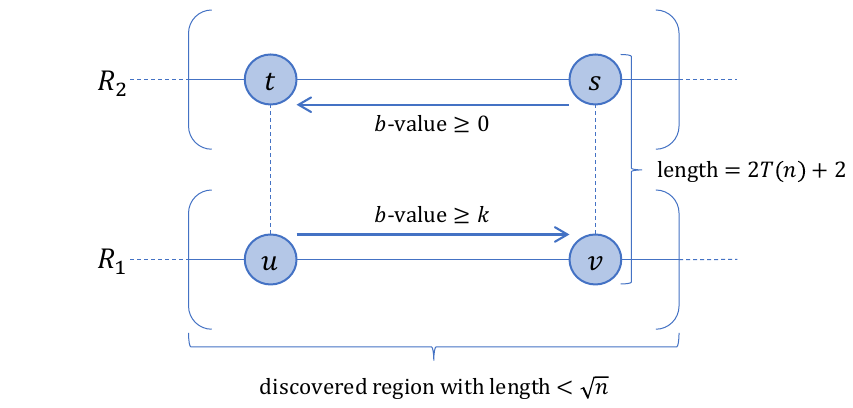}
    \caption{Construction of a directed cycle with non-zero $b$-value}
    \label{fig:rectangle}
\end{figure}

\subsection{Hardness of 3-Coloring in Toroidal and Cylindrical Grids}\label{sec:lower2}

In this section, we prove \Cref{thm:toroidal}.

\torus*

\begin{proof}
    We start by discussing the properties of proper 3-coloring of toroidal and cylindrical grids. Consider any two directed cycles $C_1$ and $C_2$ corresponding to orienting two rows with \emph{different directions}. Same as the proof of \Cref{lem:cycle}, we may calculate $b(C_1) + b(C_2)$ alternatively by summing up the $b$-value for the cells located between $C_1$ and $C_2$, so we infer that
    \begin{equation}
        b(C_1) + b(C_2) = 0. \label{eq:1}
    \end{equation}
    Furthermore, if the number of columns is \emph{odd}, then the lengths of the two directed cycles $C_1$ and $C_2$ are odd, so \Cref{lem:parity} implies that both $b(C_1)$ and $b(C_2)$ are odd numbers.

    Consider an arbitrary algorithm $\mathcal{A}$ designed for 3-coloring $\left(\sqrt{n} \times \sqrt{n} \right)$ cylindrical or toroidal grids with locality $T(n) \in o(\sqrt{n})$. Let us choose a sufficiently large value of $n$ such that  $\sqrt{n}$ is an odd number and $4 T(n) + 4 \leq \sqrt{n}$. Therefore, we may select two rows such that their $T(n)$-radius neighborhoods induce \emph{non-adjacent} cylindrical grids with $2T(n)+1$ rows and $\sqrt{n}$ columns.

    As the adversary, we instruct algorithm $\mathcal{A}$ to color these two rows. From the perspective of the algorithm, it sees precisely two disjoint cylindrical grids with $2 T(n)+1$ rows and $\sqrt{n}$ columns. Furthermore, the algorithm is unaware of their positions and directions in the input graph. As the adversary, we have the
    liberty to choose their directions \emph{after} the algorithm fixes the coloring of the two rows. Since the $b$-value of an odd-length directed cycle is odd, we can always select their directions to violate \Cref{eq:1}. Hence the algorithm $\mathcal{A}$ cannot correctly 3-color the graph.
\end{proof}

We remark that the $\Omega(\sqrt{n})$ locality lower bound in the proof above comes from the fact that the number of rows is $\sqrt{n}$. If we consider a general $(a \times b)$ toroidal or cylindrical grid, then the above proof yields an $\Omega(a)$ locality lower bound whenever the number of columns $b$ is an odd number.

\section{Hardness of \texorpdfstring{$(2k-2)$}{(2k - 2)}-Coloring of \texorpdfstring{$k$}{k}-Partite Graphs}
\label{sec:2k-2}

In this section, we consider the problem of coloring a $k$-partite graph using  $2k-2$ colors and establish a lower bound of $\Omega(n)$ on the locality in the $\onlineLOCAL$ model, where $k \geq 2$ is a constant. Thus we prove \Cref{thm:2k-2}. We introduce a specific type of gadget. Following that, we will proceed with constructing the hard instance graph using these gadgets to establish the lower bound.

\paragraph{Gadget definition.} A gadget $A(k)$ is represented as a graph with a set of nodes denoted by $[k] \times [k]$, where each node corresponds to an ordered pair $(i, j)$. Two nodes, $(i, j)$ and $(i', j')$, are connected by an edge if and only if $i \neq i'$ and $j \neq j'$. Furthermore, for any $i \in [k]$, the set of nodes $\{(i, j) \colon j \in [k]\}$ defines the $i$-th row of the gadget $A(k)$. Similarly, for $j \in [k]$, the set of nodes $\{(i, j) \colon i \in [k]\}$ defines the $j$-th column of the gadget. Therefore, an edge exists between two nodes in a gadget if and only if they are neither in the same row nor the same column.

\paragraph{Construction of the hard instance graph $G^*$.} $G^*$ comprises of $n'=n/k^2$ gadgets denoted as $A_1, \ldots, A_{n'}$, where the node set of the $\ell$-th gadget $A_\ell$ can be represented as $\{\ell \} \times [k] \times [k]$. Consequently, the node set of $G^*$ is $[n'] \times [k] \times [k]$. The edge set of $G^*$ is defined as follows.
\begin{description}
    \item[Edges within the gadgets:] For each $\ell \in [n']$, there is an edge between $(\ell, i, j)$ and  $(\ell, i', j')$ if and only if $i \neq i'$ and $j \neq j'$.
    \item[Edges between consecutive gadgets:] For each $\ell \in [n'-1]$, there is an edge between $(\ell, i, j)$ and $(\ell+1, i', j')$ if and only if $i \neq i'$ and $j \neq j'$.
\end{description}

\begin{proposition}
    \label{prop:k-color}
    The graph $G^*$ is $k$-partite.
\end{proposition}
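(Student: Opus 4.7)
The plan is to exhibit an explicit proper $k$-coloring of $G^*$, which immediately witnesses $k$-partiteness. The natural candidate is to color each node by its row index inside its gadget, that is, to define $c \colon V(G^*) \to [k]$ by $c(\ell, i, j) = i$. Since this partitions $V(G^*)$ into $k$ classes $V_1, \ldots, V_k$ where $V_i = \{ (\ell, i, j) : \ell \in [n'], j \in [k]\}$, I just need to verify that each $V_i$ is an independent set in $G^*$.

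To carry this out, I would take any two distinct nodes $(\ell, i, j)$ and $(\ell', i', j')$ with $c(\ell, i, j) = c(\ell', i', j')$, i.e., $i = i'$, and argue they are non-adjacent. I split into the two edge-types in the construction. For intra-gadget edges ($\ell = \ell'$), adjacency requires both $i \neq i'$ and $j \neq j'$; since $i = i'$, this fails. For inter-gadget edges ($|\ell - \ell'| = 1$), the adjacency rule is identical in its dependence on the second and third coordinates, so again $i = i'$ prevents adjacency. If $|\ell - \ell'| \geq 2$, there is no edge by construction. Hence every monochromatic pair is non-adjacent and the coloring is proper.

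There is essentially no obstacle here: the gadget edge rule was deliberately defined so that sharing a row (or, symmetrically, a column) forbids adjacency, so the coloring $c(\ell, i, j) = i$ (or equally $c(\ell, i, j) = j$) trivially works. The only thing to be careful about is to confirm that the construction's inter-gadget rule matches the intra-gadget rule, which it does by direct reading of the definition. Thus the proposition reduces to a one-line check once the coloring is named, and I would keep the write-up to that level of brevity.
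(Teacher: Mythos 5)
Your proof is correct and takes essentially the same approach as the paper: both color each node by its row index within its gadget and observe that every edge (intra- or inter-gadget) joins nodes with distinct row indices, so the coloring is proper. The paper's write-up is equally brief and verifies the same fact.
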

\begin{proof}
    We will demonstrate a proper coloring of $G^*$ with a set of $k$ colors $\{1,\ldots,k\}$. Recall the construction of $G^*$ that consists of gadgets $A_1,\ldots,A_{n'}$. For any gadget, color all the nodes in the $i$-th row using color $i$, where $i \in [k]$. This coloring is proper due to the following reason: An edge between two nodes, either in the same gadget or in consecutive gadgets, can only exist if they are neither in the same row nor the same column.
\end{proof}

We first sketch the high-level idea behind the lower bound proof.
%, with the formal proof to follow in subsequent subsections. 
Given a proper coloring of a gadget, it is said to be \emph{row-colorful} if it has a row where all nodes in the row are colored with distinct colors. Similarly, we can define a \emph{column-colorful} gadget. We show (in \Cref{sec:prop-gadgets}) that, for a proper coloring of a gadget with $2k-2$ colors, the gadget is precisely one out of row-colorful and column-colorful. Furthermore, leveraging the above properties of the gadgets, we establish (in \Cref{sec:prop-hardgraph}) a crucial property of the graph  $G^*$ with gadgets $A_1,\ldots,A_{n'}$: For any proper $(2k-2)$-coloring of $G^*$, all the gadgets are either row-colorful or all of them are column-colorful. To establish the lower bound, consider a scenario where the adversary asks the algorithm to first color the nodes in $A_1$, followed by the nodes in $A_{n'}$, and then the remaining nodes. Regardless of how the algorithm, in the $\onlineLOCAL$ model with locality $o(n)$, colors the nodes in $A_1$ and $A_{n'}$, we argue that the adversary can modify the input graph in a specific way. In particular, the modification results in an input graph that is isomorphic to $G^*$, consistent with the previously explored part of the graph, and crucially ensures that $A_1$ is row-colorful while $A_{n'}$ is column-colorful. However, this leads to a failure to color some nodes in the future, as either all the gadgets in the input graph are row-colorful or all of them are column-colorful.

\paragraph{Comparison with the lower bound proof in the $\LOCAL$ model.} In \cite{coiteux2023no}, the authors showed that $(2k-2)$-coloring a $k$-partite graph requires $\Omega(n)$ rounds in the $\LOCAL$ model. Our hard instance $G^*$, designed to establish the lower bound in the $\onlineLOCAL$ model, is inspired by the hard instance in \cite{coiteux2023no}. Specifically, their hard instance, denoted as $G^{\#}$, comprises $G^*$ and two cliques with $k$ nodes each, denoted as $X$ and $Y$, as its subgraphs. Let the node sets in $X$ and $Y$ be $\{x_1,\ldots,x_k\}$ and $\{y_1,\ldots,y_k\}$, respectively. Furthermore, each $x_i$ is connected to every node in the gadget $A_1$ that is not in the $i$-th row, and each $y_j$ is connected to every node in the gadget $A_{n'}$ that is not in the $j$-th column. It is crucial to highlight that the chromatic number of $G^{\#}$ is $2k-1$. However, the subgraph induced by the $o(n)$ neighborhood of any node is $k$-partite. Note that $G^{\#}$ is a graph outside the input family. However, the instance is locally solvable, meaning the subgraph induced by the $o(n)$-radius neighborhood of any node can be colored with $k$ colors. The lower bound in the $\LOCAL$ model can be established using an indistinguishability argument by \citet{linial1992locality}, which proves lower bounds where the hard instance is taken from outside the input family. While the hard instance $G^*$ in the $\onlineLOCAL$ model draws inspiration from the $G^{\#}$ hard instance developed for the $\LOCAL$ model, the application of Linial's technique, designed for $\LOCAL$, is not straightforward in the $\onlineLOCAL$ model possibly due to the presence of global memory. Despite the inspiration from the $\LOCAL$ model, our arguments to establish the lower bound in the $\onlineLOCAL$ model are different and accommodate the unique features of the $\onlineLOCAL$ setting, including the utilization of global memory. 

\subsection{Properties of the Gadgets}\label{sec:prop-gadgets}
In this section, we discuss some properties of gadgets, specifically establishing that a gadget is exactly one out of row-colorful and column-colorful in a proper coloring with $2k-2$ colors.
We progress towards the proof by introducing the concept of ``confining a color to some row/column'' and discussing a related claim.

\begin{definition}[Confinement of a color]
    Consider a proper coloring of a gadget. A color $c$ is labeled as \emph{confined} to a row (column) if there are at least two nodes within that row (column) colored with the same color $c$.
\end{definition}

\begin{claim}\label{obs:confine}
    Consider a proper coloring of a gadget. The following statements hold.
    \begin{enumerate}
        \item[(i)]  A color can be confined to at most one row (column).
        \item[(ii)]  Moreover, a color cannot be confined to both a row and a column.
    \end{enumerate}
\end{claim}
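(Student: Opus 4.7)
The plan is to reduce both parts to a single structural observation about the gadget $A(k)$: two distinct nodes are non-adjacent if and only if they share a row or share a column, so in any proper coloring two same-colored nodes must share a row or a column. With this in hand, I intend to prove a single strengthening that implies (i) and (ii) simultaneously, namely: if a color $c$ is confined to row $i$, then every node colored $c$ lies in row $i$ (and symmetrically for columns).

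To establish the strengthening, I would fix two witnesses $(i, j_1)$ and $(i, j_2)$ with $j_1 \neq j_2$ in row $i$ both colored $c$, and then take any other node $(i', j')$ colored $c$. By the observation above, $(i', j')$ must share a row or a column with each of the two witnesses. If $i' \neq i$ it cannot share a row with either witness, so it must share a column with both, forcing $j' = j_1$ and $j' = j_2$, a contradiction. Hence $i' = i$, and the strengthening follows.

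Given the strengthening, part (i) is immediate: if $c$ were confined to two distinct rows, each would force all $c$-nodes into itself, which is impossible. Part (ii) is equally direct: if $c$ is confined to row $i$, then all $c$-nodes lie in row $i$, so no column contains two $c$-nodes, ruling out column-confinement. The analogous claims with the roles of rows and columns swapped follow by the symmetry of the adjacency rule. I do not anticipate any real obstacle; the only subtlety is to use the non-adjacency characterization symmetrically so that the row-version and column-version of both statements are covered by one argument.
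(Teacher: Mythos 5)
Your proof is correct. The paper proves (i) and (ii) with two separate ad hoc case analyses, each splitting on whether certain column indices coincide; you instead prove a strictly stronger lemma---if a color $c$ is confined to row $i$ then \emph{every} $c$-colored node lies in row $i$---and then read off (i) and (ii) as one-line corollaries (two disjoint rows cannot both contain all $c$-nodes, and a single row meets each column in only one node). The underlying combinatorial content is the same in both cases (two same-colored nodes must be non-adjacent, i.e., must share a row or share a column), and your argument for the strengthening is essentially the paper's argument for (i) restated with one witness pair instead of two. What your route buys is a single uniform statement that replaces two separate contradictions, removes the explicit $j_1 = j_2$ versus $j_1 \neq j_2$ case split, and makes the row/column symmetry manifest. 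One small point worth spelling out when you write it up: when you pick ``any other node $(i',j')$ colored $c$'' you should note that if $(i',j')$ coincides with a witness the conclusion $i'=i$ is immediate, and otherwise non-adjacency applies; this is implicit in your sketch but should be made explicit.
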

\begin{proof}
    To establish (i), let us assume that color $c$ is confined to both the $i_1$-th and $i_2$-th rows. In this scenario, there exist four nodes (two in each row) colored with $c$, denoted as $(i_1,j_1)$, $(i_1,j_1')$, $(i_2,j_2)$, and $(i_2,j_2')$. Note that $i_1 \neq i_2$, $j_1 \neq j_1'$, and $j_2 \neq j_2'$. If there is an edge between $(i_1,j_1)$ and $(i_2,j_2)$ (i.e., $j_1 \neq j_2$), then both nodes cannot share the same color $c$. Alternatively, if $j_1 = j_2$ (implying $j_1 \neq j_2'$), there must be an edge between $(i_1,j_1)$ and $(i_2,j_2')$. Hence, coloring them with the same color is not feasible. This leads to a contradiction!

    To establish (ii), let us assume that color $c$ is confined to both the $i$-th row and $j$-th column. Let $(i, j_1)$ and $(i, j_1')$ be nodes in the $i$-th row colored with color $c$. Additionally, there are two nodes in the $j$-th column colored with the same color. Note that, out of these two nodes in the $j$-th column, at least one must not be in the $i$-th row. Let this node be $(i', j)$ with $i' \neq i$. If there is an edge between $(i, j_1)$ and $(i', j)$ (i.e., $j_1 \neq j$), both nodes cannot share the same color $c$. Alternatively, if $j_1 = j$ (implying $j_1' \neq j$), there must be an edge between $(i, j_1')$ and $(i', j)$. Hence, coloring them with the same color is not feasible. This leads to a contradiction!
\end{proof}

Now, let us provide a formal definition of the concept of a gadget being row-colorful or column-colorful. Subsequently, in \Cref{cl:gadget-color}, we will prove the desired property that each gadget is exactly one out of row-colorful and column-colorful.

\begin{definition}[Row-colorful and column-colorful gadgets]
    Consider a proper coloring of a gadget. A row  (column) of the gadget is \emph{colorful} if all its $k$ nodes have distinct colors. In other words, a row (column) is colorful if no color is confined to the row (column). A gadget is categorized as {row-colorful} ({column-colorful}) if it contains at least one colorful row (column).
\end{definition}
\begin{claim}\label{cl:gadget-color}
    \label{clm:no_colorful_column_and_row}
    Let us consider a proper coloring of gadget $A$ with $2k-2$ colors. Then  $A$ is exactly one out of row-colorful and column-colorful.

\end{claim}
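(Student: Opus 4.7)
The plan is to prove the two directions (at least one of row-colorful / column-colorful, and not both) separately, using \Cref{obs:confine} for the first direction and a direct counting argument using the adjacency structure of $A$ for the second.

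For the first direction (at least one of row-colorful or column-colorful), I will argue by contradiction. Suppose $A$ is neither. Then every row fails to be colorful, meaning every row contains a color that is confined to it, and likewise every column contains a confined color. Let $S_R$ be the set of colors confined to some row and $S_C$ the set of colors confined to some column. By part (i) of \Cref{obs:confine}, the map from rows to their confined colors is injective, so $|S_R| \geq k$; symmetrically $|S_C| \geq k$. By part (ii) of \Cref{obs:confine}, $S_R \cap S_C = \varnothing$. Hence the proper coloring uses at least $|S_R| + |S_C| \geq 2k$ distinct colors, contradicting the assumption that only $2k-2$ colors are available.

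For the second direction (not both), suppose for contradiction that the $i$-th row and the $j$-th column are both colorful. Then the $k$ colors appearing in row $i$ are pairwise distinct, and the $k$ colors appearing in column $j$ are pairwise distinct. These two sets share the color of $(i,j)$, which contributes one common color. The key observation is the cross-adjacency in $A$: for any $i' \neq i$ and any $j' \neq j$, the nodes $(i, j')$ and $(i', j)$ differ in both coordinates, so they are adjacent in $A$ and therefore must have different colors. This forces the set $\{c(i, j') : j' \neq j\}$ of $k-1$ colors appearing in row $i$ away from column $j$ to be disjoint from the set $\{c(i', j) : i' \neq i\}$ of $k-1$ colors appearing in column $j$ away from row $i$. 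Since both of these sets also avoid the color $c(i,j)$ (each row/column is itself colorful, hence uses $k$ distinct colors), the row and column together exhibit at least $(k-1) + (k-1) + 1 = 2k-1$ distinct colors, contradicting the bound of $2k-2$.

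Combining the two directions yields the claim. The main obstacle is the second direction: one has to be careful about counting shared colors between a colorful row and a colorful column, since only the single intersection node $(i,j)$ sits in both, while cross positions $(i,j')$ and $(i',j)$ look like they could repeat colors but are in fact forbidden from doing so by the edge set of $A$. Once this cross-adjacency is exploited, the counting becomes routine and tight: $2k-1$ exceeds $2k-2$ by exactly one, which is what makes $2k-2$ the threshold at which the dichotomy holds.
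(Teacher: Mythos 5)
Your proof is correct and follows essentially the same approach as the paper: part (i) counts confined colors via \Cref{obs:confine} to reach a contradiction with the $2k-2$ budget, and part (ii) examines the $2k-1$ nodes in the union of a colorful row and colorful column, using the cross-adjacency of $A$ to rule out repeated colors. The only cosmetic difference is that the paper phrases part (ii) as a pigeonhole argument on those $2k-1$ nodes, while you directly exhibit the three pairwise-disjoint color sets; the underlying reasoning is identical.
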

\begin{proof}
    We divide the proof into two parts as follows: (i) $A$ is row-colorful or column-colorful, and (ii) $A$ cannot simultaneously be row-colorful and column-colorful.

    To establish (i), assume, by contradiction, that $A$ is neither row-colorful nor column-colorful. This implies that for each row (column), a color is confined to that specific row (column). Applying \Cref{obs:confine}, we find $2k$ colors, each confined to some row or column. However, this scenario implies an impossibility as we have $2k-2$ colors.

    For (ii), assume by contradiction that the $i$-th row of gadget $A$ is colorful, and simultaneously, the $j$-th column of $A$ is also colorful. Consider the set comprising $(2k-1)$ nodes, which are in the $i$-th row or the $j$-th column. Since we are using $2k-2$ colors to color gadget $A$, there must be two nodes from the set of $(2k-1)$ nodes colored with the same color. If these two nodes are in the $i$-th row, it implies that the $i$-th row is not colorful, leading to a contradiction. Similarly, these two nodes cannot be in the $j$-th column. The only possible scenario is that one of them is in the $i$-th row but not in the $j$-th column, and the other is in the $j$-th column but not in the $i$-th row. That is, these two nodes take the form $(i, j_1)$ and $(i_1, j)$, satisfying $i \neq i_1$ and $j \neq j_1$. Due to the construction of gadget $A$, nodes $(i, j_1)$ and $(i_1, j)$ cannot be colored with the same color, as there is an edge between them.
\end{proof}

\subsection{Properties of \texorpdfstring{$G^*$}{G*} and the Lower Bound Proof}\label{sec:prop-hardgraph}
We first show that all gadgets in $G^*$ are of the same category, either row-colorful or column-colorful.
This, in turn, allows us to prove the desired lower bound on the locality of $(2k-2)$-coloring for $k$-partite graphs in \Cref{thm:2k-2}.

\begin{lemma}\label{lem:all-row-all-col}
    Consider the graph $G^*$ with gadgets $A_1,\ldots,A_{n'}$ and a proper coloring of $G^*$ using $2k-2$ colors. Then, either all the gadgets are row-colorful or all of them are column-colorful.
\end{lemma}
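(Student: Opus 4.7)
The plan is to prove the lemma by induction on $\ell$, showing that any two consecutive gadgets $A_\ell$ and $A_{\ell+1}$ must share the same colorfulness type. Combined with \Cref{cl:gadget-color} (which assigns each gadget a unique type), this gives the conclusion. It suffices to rule out the mixed configuration ``$A_\ell$ row-colorful and $A_{\ell+1}$ column-colorful'', since the reverse mixed configuration follows by the rows/columns symmetry of the gadget adjacency rule. So I will assume such a configuration and derive a contradiction.

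Suppose $i^*$ indexes the colorful row of $A_\ell$ and $j^*$ the colorful column of $A_{\ell+1}$. Set $R_{i^*} = \{c(\ell, i^*, j) : j \in [k]\}$, $C_{j^*} = \{c(\ell+1, i, j^*) : i \in [k]\}$, $\alpha = c(\ell, i^*, j^*)$, and $\beta = c(\ell+1, i^*, j^*)$. The first key step is to extract a complete bipartite substructure: the sets $X = \{(\ell, i^*, j) : j \neq j^*\}$ and $Y = \{(\ell+1, i, j^*) : i \neq i^*\}$, each of size $k-1$, induce $K_{k-1, k-1}$ between them, since any $x \in X$ and $y \in Y$ differ in both row and column indices. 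Thus the $k-1$ colors on $X$ (namely $R_{i^*} \setminus \{\alpha\}$) must be disjoint from the $k-1$ colors on $Y$ (namely $C_{j^*} \setminus \{\beta\}$), and together they already exhaust all $2k-2$ colors. This forces $R_{i^*} \cap C_{j^*} = \{\alpha, \beta\}$, and in particular $\alpha \neq \beta$.

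The second key step is to locate a vertex with no valid color. Since $\alpha \in C_{j^*}$ and column $j^*$ of $A_{\ell+1}$ is colorful, there is a unique $i_\alpha \neq i^*$ with $c(\ell+1, i_\alpha, j^*) = \alpha$; symmetrically, there is a unique $j_\beta \neq j^*$ with $c(\ell, i^*, j_\beta) = \beta$. I will then consider $v = (\ell+1, i_\alpha, j_\beta)$. Because $i_\alpha \neq i^*$ and $j_\beta \neq j^*$, $v$ is adjacent to every $(\ell, i^*, j')$ with $j' \neq j_\beta$ (contributing the colors $R_{i^*} \setminus \{\beta\}$) and to every $(\ell+1, i'', j^*)$ with $i'' \neq i_\alpha$ (contributing the colors $C_{j^*} \setminus \{\alpha\}$). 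Since $\alpha \neq \beta$, the union of forbidden colors is
\[
(R_{i^*} \setminus \{\beta\}) \cup (C_{j^*} \setminus \{\alpha\}) = R_{i^*} \cup C_{j^*} = [2k-2],
\]
leaving no admissible color for $v$, the desired contradiction.

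The main obstacle I anticipate is the first step: pinning down $R_{i^*} \cap C_{j^*}$ to be exactly the pair $\{\alpha, \beta\}$ rather than an arbitrary two-element intersection. This is where the $K_{k-1, k-1}$ observation does the real work, by forcing the colors on $X$ and $Y$ to be simultaneously disjoint and exhaustive. Once that identification is in hand, $i_\alpha$ and $j_\beta$ are canonically determined, and the uncolorable vertex $v = (\ell+1, i_\alpha, j_\beta)$ is essentially forced by the definitions.
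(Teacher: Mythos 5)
Your argument is correct, and it takes a genuinely different route from the paper's.

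The paper's proof of this lemma uses the \emph{confinement} machinery (\Cref{obs:confine}) together with \Cref{cl:gadget-color}: it deduces that $A_{\ell+1}$ is not row-colorful, so every row of $A_{\ell+1}$ has a confined color $c_j$; it then shows (via a small case analysis) that $c_j$ must avoid the $k$ colors in the colorful row of $A_\ell$ for all $j$ other than the colorful-row index, and finishes with a pigeonhole count: there are $k-1$ such indices but only $k-2$ colors outside $C$, so some color is confined to two rows of $A_{\ell+1}$, contradicting \Cref{obs:confine}. Your proof instead makes no use of confinement at all inside the argument: you isolate the complete bipartite graph $K_{k-1,k-1}$ between the punctured colorful row of $A_\ell$ and the punctured colorful column of $A_{\ell+1}$, observe that the $2(k-1) = 2k-2$ colors appearing there must all be distinct and hence exhaust the palette, deduce the precise identity $R_{i^*}\cap C_{j^*}=\{\alpha,\beta\}$ with $\alpha\neq\beta$, and then explicitly name an uncolorable vertex $v=(\ell+1,i_\alpha,j_\beta)$. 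Both routes are sound; your version is somewhat more constructive and self-contained (it bypasses the confinement apparatus, using \Cref{cl:gadget-color} only at the very end to upgrade ``consecutive gadgets match'' to ``all gadgets match''), and it produces an explicit witness of infeasibility rather than a counting contradiction, which makes the mechanism of the obstruction more transparent. Your appeal to the row/column transposition symmetry to dispose of the reverse mixed configuration is also valid, since transposing every gadget is an automorphism of $G^*$.
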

\begin{proof}
    By \Cref{clm:no_colorful_column_and_row}, each gadget $A_i$ is exactly one out of row-colorful or column-colorful. Therefore, for any $\ell \in [n'-1]$, it is sufficient to demonstrate that either both $A_\ell$ and $A_{\ell+1}$ are row-colorful or both are column-colorful.

    By contradiction, assume that $A_\ell$ is row-colorful and $A_{\ell+1}$ is column-colorful. Let $i \in [k]$ be such that the $k$ nodes of the $i$-th row of $A_\ell$ are colored with distinct colors. Let $C$ be the set of $k$ colors used by the nodes in the $i$-th row of $A_\ell$. By \Cref{clm:no_colorful_column_and_row}, $A_{\ell+1}$ is not row-colorful. Thus, for each of the $k$ rows of $A_{\ell+1}$, there exists a color confined to the row. Let $c_j$ denote the color confined to the $j$-th row of $A_{\ell+1}$, where $j \in [k]$. In the following paragraph, we show that the colors confined to any of the $k-1$ rows of $A_{\ell+1}$, excluding the $i$-th row of $A_{\ell+1}$, are distinct from the $k$ colors utilized by the nodes in the $i$-th row of $A_\ell$.

    By contradiction, assume that $c_j$ is in $C$ for some $j\neq i$. Let $(i,j_1)$ be the node in the $i$-th row of $A_\ell$ that is colored with color $c_j$. As $c_j$ is confined to the $j$-th row of $A_{\ell+1}$, consider two nodes in it that are colored with color $c_j$. Let the two nodes be $(j,j_2)$ and $(j,j_2')$. Note that $j\neq i$ and $j_2 \neq j_2'$. If there is an edge between $(i, j_1)$ and $(j, j_2)$ (i.e., $j_1 \neq j_2$), both nodes cannot share the same color $c_j$. Alternatively, if $j_1 = j_2$ (implying $j_1 \neq j_2'$), there must be an edge between $(i, j_1)$ and $(j, j_2')$. Hence, coloring them with the same color is not feasible. This leads to a contradiction!

    Recall that we are using $2k-2$ colors. Additionally, as previously argued, for each $j \neq i$, the color $c_j$ is distinct from the $k$ colors present in $C$. Therefore, there are at most $k-2$ colors in the set $\{c_1,\ldots,c_k\} \setminus \{c_i\}$, implying the existence of two rows in $A_{\ell+1}$ where a specific color is confined to both of them. However, this is impossible due to \Cref{obs:confine}.
\end{proof}

We are ready to prove the  $\Omega(n)$ locality lower bound of $(2k-2)$-coloring for $k$-partite graphs.

\higher*

\begin{proof}

    Consider the graph $G^*$ with gadgets $A_1, A_2, \dots, A_{n'}$. As discussed in \Cref{prop:k-color}, $G^*$ is  $k$-partite. For our lower bound proof, we assume that the input graph provided to the algorithm is isomorphic to $G^*$. Notably, the algorithm operates without any knowledge of the gadget information, as well as the row/column details of the nodes.

    Assuming, by contradiction, the existence of an algorithm that properly colors the input graph with $2k-2$ colors in the $\onlineLOCAL$ model with $T \in o(n)=o(n')$ locality. Note that $n \in \Theta(n')$ as $k$ is a constant. Let the adversary ask the algorithm to color the nodes in gadget $A_1$ first, followed by the nodes in gadget $A_{n'}$, and then the remaining nodes. For each $k$, we select a sufficiently large $n$ such that the $T$-radius neighborhood of nodes in gadget $A_1$ is non-adjacent to the $T$-radius neighborhood of nodes in gadget $A_{n'}$. Consequently, when the algorithm colors nodes in $A_1$ and $A_{n'}$, it lacks information about the other gadgets.

    Without loss of generality, assume that when the algorithm colors the nodes in $A_1$, $A_1$ is row-colorful. Now, we discuss how the adversary transforms the input graph into one isomorphic to $G^*$, forcing gadget $A_{n'}$ to be column-colorful.

    Upon coloring the nodes in $A_{n'}$, assume $A_{n'}$ is not column-colorful. By \Cref{clm:no_colorful_column_and_row}, $A_{n'}$ is row-colorful. Consider the set of gadgets in the $T$-radius neighborhood of the nodes in $A_{n'}$. The adversary alters the row/column information of all the nodes in these gadgets, making $A_{n'}$ column-colorful. Importantly, this modification does not affect the $T$-radius neighborhood of nodes in gadget $A_1$. Additionally, the algorithm is oblivious to anything beyond the $T$-radius neighborhood of the nodes in $A_1$ or $A_{n'}$. Consequently, the remaining part of the graph (outside the $T$-radius neighborhood of the nodes in $A_1$ or $A_{n'}$) can be suitably modified to be isomorphic to $G^*$.

    As the algorithm proceeds to color nodes outside the gadgets $A_1$ and $A_{n'}$, it fails to properly color certain nodes. This follows from the fact that either all the gadgets are row-colorful or all of them are column-colorful, as established in \Cref{lem:all-row-all-col}.
\end{proof}

\section{Coloring Graphs with Locally Inferable Unique Colorings}
\label{sec:lpcc}

In this section, we prove \Cref{thm:lpccupper,thm:lpcclower}: For $k \in O(1)$, the locality of $(k + 1)$-coloring $k$-partite graphs with locally inferable unique colorings in the $\onlineLOCAL$ model is $\Theta(\log n)$. The upper bound is shown in \Cref{sec:k+1_algo} and the lower bound is shown in \Cref{sec:k+1_lb}. 

% 

%In \Cref{sec:k+1_lb}, we present the proof for the lower bound. Subsequently, from \Crefrange{sec:k+1_algo}{subsec:locality_lpcc}, we discuss the algorithm for $(k+1)$-coloring the graphs in $\mathcal{L}_{k,\ell}$ and its analysis. Recall that $\mathcal{L}_{k,\ell}$ denotes the class of $k$-partite graphs that admit locally inferable unique colorings. Notably, triangular grids are included in $\mathcal{L}_3$. In \Cref{sec:triangulargrid}, we showcase our $(k+1)$-coloring algorithm for graphs in $\mathcal{L}_{k,\ell}$, using $4$-coloring triangular grids as an example.

\subsection{Upper Bound}
\label{sec:k+1_algo}

In this section, we present an $\onlineLOCAL$ algorithm with $O(\log n)$ locality for $(k+1)$-coloring the graphs in $\mathcal{L}_{k,\ell}$ with $\ell \in O(1)$.

\subsubsection{Previous Approach}
We first review the $O(\log n)$-locality $\onlineLOCAL$ algorithm by \citet{akbari2021locality} that $3$-colors bipartite graphs. %Since $\mathcal{L}_2$ is precisely the set of all bipartite graphs, their result achieves our goal for the case of $k = 2$.\yijun{A note: the presentation for both our algorithm and the review of the prior work is too similar to the prior work. Need to paraphrase more.} 
%Note that all bipartite graphs are included in . Therefore, the starting point of our algorithm is the $O(\log n)$-locality algorithm for $3$-coloring bipartite graphs by \citet{akbari2021locality}.
Consider the color set $\{1,2,3\}$. When a node $u$ is revealed by the adversary, its $T$-radius neighborhood $\mathcal{B}(u, T)$ is \emph{seen} by the algorithm, where $T \in O(\log n)$ is the locality of the $\onlineLOCAL$ algorithm. Following  \citet{akbari2021locality}, each connected component of the subgraph induced by the set of all nodes seen by the algorithm so far is called a \emph{group}.
How the algorithm of \citet{akbari2021locality} colors a newly revealed node $u$ depends on the status of the nodes in $\mathcal{B}(u, T + 1)$ \emph{right before} $u$ is revealed. The algorithm tries to 2-color the nodes using $\{1,2\}$ in most cases and only uses color $3$ in one exceptional case.
\begin{enumerate}
    \item \emph{If all nodes in $\mathcal{B}(u, T + 1)$ are unseen}, then $\mathcal{B}(u, T)$ forms a new group and the algorithm colors $u$ with $1$. Observe that fixing the color of $u$ determines a \emph{unique} way to complete the coloring of $\mathcal{B}(u, T)$ using $\{1,2\}$, so in this sense the \emph{parity} of the group $\mathcal{B}(u, T)$ is fixed.
    \item \emph{If some nodes in $\mathcal{B}(u, T + 1)$ are seen and they are from the same group $C$}, then $\mathcal{B}(u, T)$ is merged into $C$ to form a larger group $C \cup \mathcal{B}(u, T)$. If $u$ is still uncolored, then the algorithm assigns a color from $\{1,2\}$ to $u$ based on the parity of $\mathcal{B}(u, T)$, as this parity determines a {unique} way to complete the coloring of $C \cup \mathcal{B}(u, T)$ using $\{1,2\}$.
    \item \emph{If some nodes in $\mathcal{B}(u, T + 1)$ are seen and they belong to multiple groups $C_1, C_2, \ldots, C_t$}, then $\mathcal{B}(u, T)$ and these groups are merged into a group $C_1 \cup C_2 \cup \cdots \cup C_t \cup \mathcal{B}(u, T)$. The algorithm first unifies the parities of these groups iteratively and then colors $u$ according to the unified parity if $u$ is still uncolored. The unification of the parities of two groups $A$ and $B$ is done as follows.  If their parities are already consistent, then nothing needs to be done. If their parities are inconsistent, then the parity of one group, say $A$, is \emph{flipped} by utilizing color $3$ in three steps.
    \begin{enumerate}
        \item For all uncolored neighbors of all nodes in $A$ colored with $1$, set their colors to be $2$.
        \item For all uncolored neighbors of all nodes in $A$ colored with $2$, set their colors to be $3$.
        \item For all uncolored neighbors of all nodes in $A$ colored with $3$, set their colors to be $1$.
    \end{enumerate}
          %\begin{itemize}
           %   \item If their parities are consistent, then we do nothing.
           %   \item If their parities are inconsistent, then we will flip the parity of one group (say $A$) by utilizing color $3$. The approach of \citet{akbari2021locality} ensures that this can be done by committing (pre-setting) some nodes in $A$ to a color and creating a layer of color $3$.\yijun{Can say it in a more precise manner?} Moreover, the process ensures that all nodes committed to color $3$ will have no uncolored neighbors, so the extra color will not affect future group joining. After this step, the parities of $A$ and $B$ are consistent.
          %\end{itemize}
          %After all groups agree with their parities, we will merge them into a larger group. Deciding the color of $u$ then becomes the same as the second case. \yijun{These are really not different cases: Case 1 is a special case of Case 2, which is a special case of Case 3, so this is more like you gradually introduce more details about the algorithm.}
\end{enumerate}

Intuitively, flipping the parity of a group is achieved by constructing a barrier using the third color, which requires a locality of $3$. For the flipping operation to work correctly, all the nodes colored during the operation must be within the group under consideration. If we always flip the parity of the smaller-sized group when unifying the parities of two groups, then it is guaranteed that the total number of flips performed, from the perspective of a node, is at most $\log n$. Thus, setting the locality of the algorithm to be $T = 3 \log n$ is sufficient.

%When two incompatible groups are to be joined, we always change the parity of the group whose parity has been flipped the less time. This ensures that we change the parity of a group for $O(\log n)$ times, which means that we may need to see the colors of the pre-colored nodes with a distance of $O(\log n)$ from the node revealed\footnote{Here, ``revealed'' means that the node is officially presented by the model and not just seen because it is near a node revealed by the model.}. As we need to see all nodes when we set their colors\footnote{Otherwise, two groups may commit an unseen node to different colors.}, we need $O(\log n)$ locality.\yijun{The footnotes are somewhat confusing.}

%The algorithm above simply constructs $2$-colorings locally, then gradually extends and merges the groups, using color $3$ to adjust the parities when necessary. One key property for this to work is that we can always make two components compatible by adjusting the colors. Intuitively, this is also the reason why we need the condition in \Cref{thm:lpccupper}: if the color partitions of the two components are consistent but the colors (indices) they assign to their components are not the same, then we can rectify this within a constant neighborhood by using the extra $(k + 1)$-th color.

% Yi-Jun: Move the above discussion to the below section.

\subsubsection{Our Algorithm}
Let $k \geq 2$ be any constant. Consider any graph $G = (V,E) \in \mathcal{L}_{k,\ell}$ with $\ell \in O(1)$. To $(k + 1)$-color $G$ with locality $O(\log n)$ in $\onlineLOCAL$, we follow the recipe of the bipartite case described above.

\paragraph{Oracle.} The assumption that  $G \in \mathcal{L}_{k,\ell}$ implies that $G$ admits a \emph{unique} $k$-coloring up to permutation. This can be seen by setting $G' = G$ in \Cref{def:lpcc}.
Moreover, for any connected subset of nodes $C \subseteq V$ of $G$,~\footnote{A subset $C \subseteq V$ is said to be connected subset if the subgraph induced by $C$ is connected.} this unique partition of $V$ into $k$ parts restricted to $C$ can be inferred by the subgraph of $G$ induced by $\mathcal{B}(C, \ell)$, as follows: Take any $k$-coloring of $G[\mathcal{B}(C, \ell)]$ and restrict the coloring to $C$. \Cref{def:lpcc} guarantees that the resulting partition of $C$ is invariant of the choice of the $k$-coloring of $G[\mathcal{B}(C, \ell)]$, so  the partition of $C$ must be consistent with the unique $k$-coloring of $G$ up to a permutation.
Therefore, in the subsequent discussion, we may assume that our $\onlineLOCAL$ algorithm is equipped with an oracle $\mathcal{O}$ such that for any connected subset of nodes $C \subseteq V$ \emph{seen} by the algorithm, the oracle returns a partition $\mathcal{O}(C)$ of $C$ into $k$ parts consistent with the unique $k$-coloring of $G$ up to a permutation. The oracle can be implemented with an extra locality of $\ell \in O(1)$.

\paragraph{Type.} Recall that a group is a connected component of the subgraph induced by the set of nodes seen by the $\onlineLOCAL$ algorithm so far. While oracle $\mathcal{O}$ allows each group $C$ to locally infer the unique $k$-coloring of $G$ restricted to $C$ up to a permutation, different groups can still disagree on the assignment of the $k$ colors $[k]$ to the $k$ parts of the $k$-partite graph $G$. Analogous to the parity for the bipartite case, we call each of the $k!$ possible color assignments a \emph{type}.

\paragraph{Algorithm.} We are prepared to present our $(k+1)$-coloring algorithm given the oracle $\mathcal{O}$. Set $T = 3(k-1) \log n$ 
%\yucheng{Should this be $3(k - 1)$ or $3(k - 1) + \ell$ if we count the cost of the oracle?}\yijun{fixed.} 
as the locality of our algorithm. Throughout the algorithm, each group $C$ is associated with a type in such a way that the following induction hypothesis is maintained:
\begin{itemize}
    \item For each colored node $v \in C$ that is adjacent to an uncolored node in $G$, the color of $v$ is consistent with the type of $C$.
\end{itemize}
Consequently, the current partial coloring of $C$ can be completed using the colors in $[k]$ by adapting the color assignment given by the type for the remaining uncolored nodes in $C$.

Similar to the algorithm of \citet{akbari2021locality}, how our algorithm colors a newly revealed node $u$ depends on the status of the nodes in $\mathcal{B}(u, T + 1)$ \emph{right before} $u$ is revealed. The algorithm tries to $k$-color the nodes using $[k]$ in most cases and only uses color $k+1$ in one exceptional case.

We illustrate our algorithm in \Cref{fig:triangular_grid_case_1,fig:triangular_grid_case_2,fig:triangular_grid_case_3,fig:swap} by considering $4$-coloring a triangular grid. The colors in the figures have the following meanings: Gray indicates ``unseen,'' white indicates ``seen,'' and the remaining colors indicate ``seen and committed to a specific color.''

\begin{enumerate}
    \item \emph{If all nodes in $\mathcal{B}(u, T + 1)$ are unseen}, then $\mathcal{B}(u, T)$ forms a new group and the algorithm colors $u$ with $1$. We query the oracle to obtain the partition $\mathcal{O}(\mathcal{B}(u, T))$ and assign a type to the group  $\mathcal{B}(u, T)$ such that the type is consistent with the color of $u$. See \Cref{fig:triangular_grid_case_1} for an illustration.

\begin{figure}[h!]
    \centering
    \includegraphics[width=0.8\linewidth]{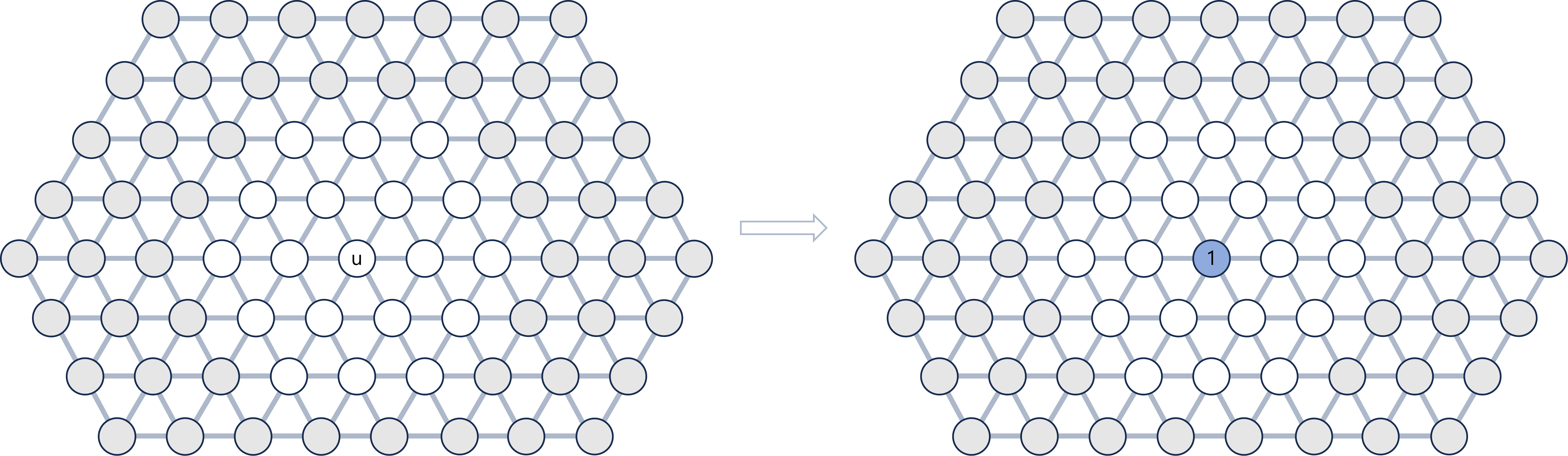}
    \caption{In Case 1, node $u$ in $\mathcal{B}(u, T)$ (the white part) receives color $1$.}
    \label{fig:triangular_grid_case_1}
\end{figure}
    
    \item \emph{If some nodes in $\mathcal{B}(u, T + 1)$ are seen and they are from the same group $C$}, then $\mathcal{B}(u, T)$ is merged into $C$ to form a larger group $C \cup \mathcal{B}(u, T)$. The type of $C \cup \mathcal{B}(u, T)$ is set to be the type of $C$. If $u$ is still uncolored, then we query the oracle to obtain the partition $\mathcal{O}(C \cup \mathcal{B}(u, T))$ and assign a color from $[k]$ to $u$ based on the type of $C$. See \Cref{fig:triangular_grid_case_2} for an illustration.

\begin{figure}[h!]
    \centering
    \includegraphics[width=0.8\linewidth]{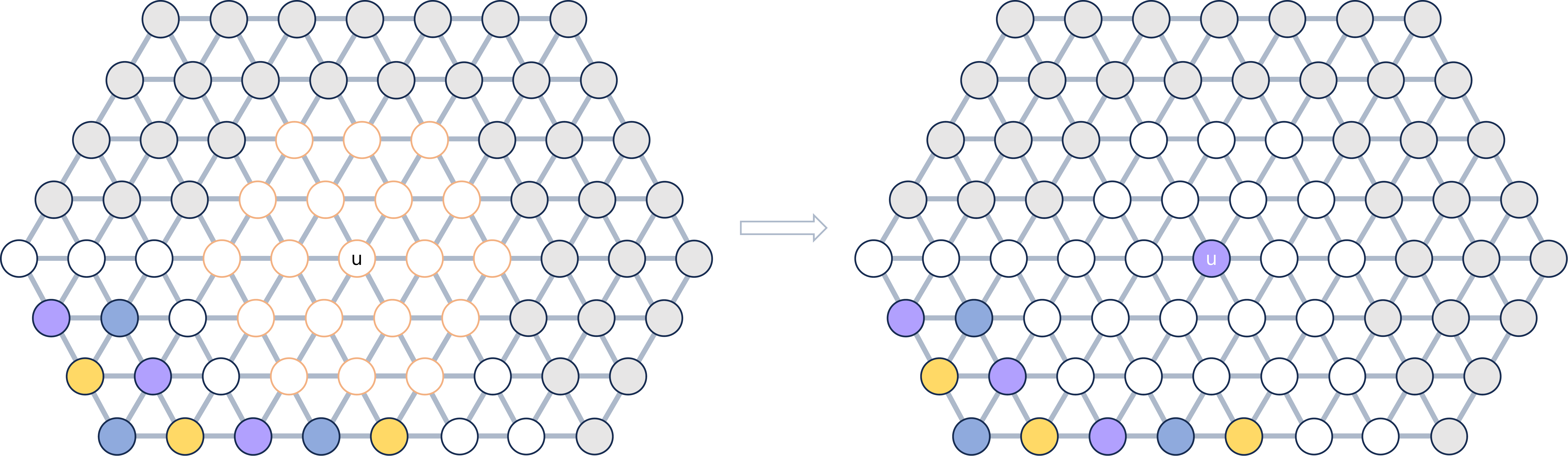}
    \caption{In Case 2, node $u$ in $\mathcal{B}(u, T)$ (the orange part) is colored based on the type of group $C$.}
    \label{fig:triangular_grid_case_2}
\end{figure}

    %\item \textbf{If all nodes in $\mathcal{B}(u, T + 1)$ are unseen}, then $\mathcal{B}(u, T)$ forms a new group. We will query the color partition $\mathcal{O}(\mathcal{B}(u, T))$, assign each component a color in $[k]$, and give this new group a border count of $0$. The color of $u$ will be set to be consistent with the assignment.
    %\item \textbf{If some nodes in $\mathcal{B}(u, T + 1)$ are seen and are from the same group $C$}, then we will merge $C$ and $\mathcal{B}(u, T)$ into a larger group, in a way that preserves the indices of $C$. Specifically, we query the partition of the union $C \cup \mathcal{B}(u, T)$; according to \Cref{lem:oracle_permutation_consistency}, the partition must be \emph{consistent} with the partition $\mathcal{O}(C)$, i.e. if two nodes $x, y$ of $C$ belong to the same component in $\mathcal{O}(C)$, then so are them in $\mathcal{O}(C \cup \mathcal{B}(u, T))$. Therefore, we can naturally extend the color of a component $X$ in $\mathcal{O}(C)$ to a component $Y$ in $\mathcal{O}(C \cup \mathcal{B}(u, T))$ such that $Y$ contains $X$, thus fixing the indices of the new group.  Analogous to the bipartite case, the color of $u$ will be the one it is committed to (if any), and if it is not associated with any color, then it will be colored accordingly to the index of the component it belongs to in $\mathcal{O}(C \cup \mathcal{B}(u, T))$. Lastly, we set the border count of the new group $C \cup \mathcal{B}(u, T)$ to be the same as the border count of $C$.
    \item  \emph{If some nodes in $\mathcal{B}(u, T + 1)$ are seen and they belong to multiple groups $C_1, C_2, \ldots, C_t$}, then $\mathcal{B}(u, T)$ and these groups are merged into a group $C_1 \cup C_2 \cup \cdots \cup C_t \cup \mathcal{B}(u, T)$. We query the oracle to obtain the partition $\mathcal{O}(C_1 \cup C_2 \cup \cdots \cup C_t \cup \mathcal{B}(u, T))$, unify the types of these groups, and assign a color from $[k]$ to $u$ based on the unified type if $u$ is still uncolored. The type of the new group $C_1 \cup C_2 \cup \cdots \cup C_t \cup \mathcal{B}(u, T)$ is set to be the unified type.

    The unification of the types of groups $C_1, C_2, \ldots, C_t$ is done as follows. We reorder the groups so that $|C_1| \geq |C_2| \geq \cdots \geq |C_t|$, and then we  \emph{change} the type of each group $X \in \{C_2, \ldots, C_t\}$ to match the type of $C_1$. In the subsequent discussion, let \[\mathcal{O}(C_1 \cup C_2 \cup \cdots \cup C_t \cup \mathcal{B}(u, T)) = V_1 \cup V_2 \cup \cdots \cup V_k.\]
    Therefore, from now on, we may write the type of a group $X \in \{C_2, \ldots, C_t\}$ as a permutation $\pi: [k] \rightarrow [k]$ of the set $[k]$, where $\pi(i) = j$ specifies that the type assigns color $j$ to part $V_i$. 
    
    Given any two distinct colors $i \in [k]$ and $j \in [k]$, \Cref{alg:swap_indices} allows us to swap two colors $i$ and $j$ in the type of $X$. This is done by fixing a certain coloring of the 3-radius neighborhood of the colored subset of $X$ using colors from $[k+1]$. See \Cref{fig:swap} for an illustration of \Cref{alg:swap_indices}.  
    
    Any permutation of $[k]$ can be transformed into any other permutation of $[k]$ by at most $k-1$ swaps, so we can change the type of $X$ to match the type of $C_1$ by executing \Cref{alg:swap_indices} for at most $k-1$ times. See \Cref{fig:triangular_grid_case_3} for an illustration: In this example, the type of one group is changed by swapping yellow and purple by \Cref{alg:swap_indices}, where three layers of colored nodes are created to perform the swap: $\text{yellow} \to \text{orange}$, $\text{purple} \to \text{yellow}$, and $\text{orange} \to \text{purple}$.

\begin{figure}[h!]
    \centering
    \includegraphics[width=0.8\linewidth]{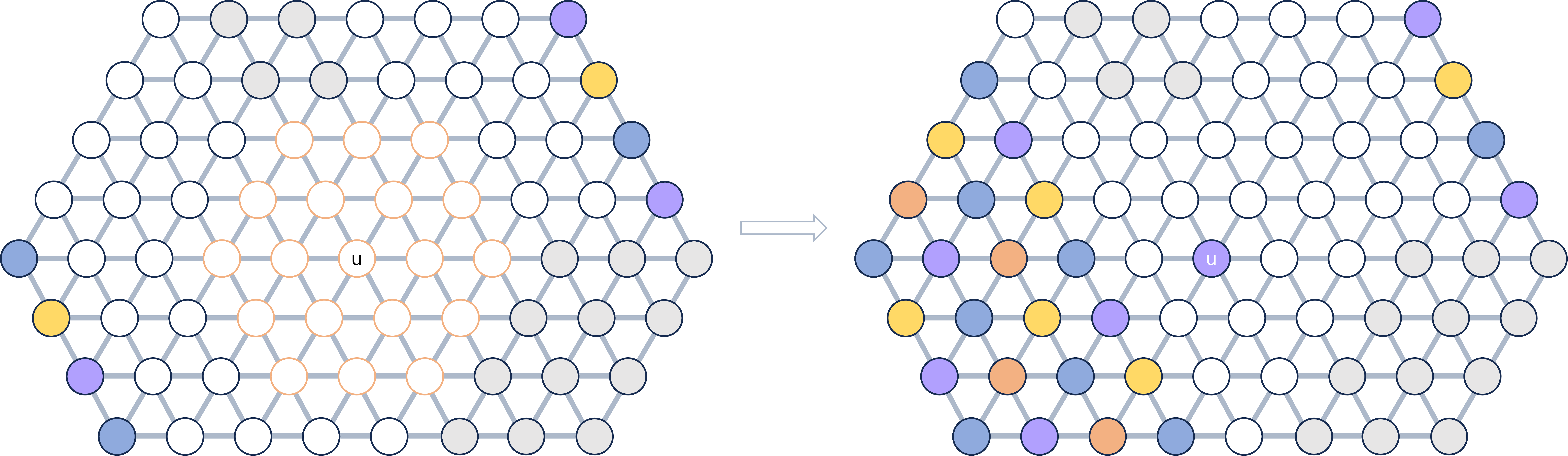}
    \caption{In Case 3,  node $u$ in $\mathcal{B}(u, T)$ (the orange part) is colored based on the type of group $C_1 \cup C_2 \cup \cdots \cup C_t \cup \mathcal{B}(u, T)$ after unifying the types of $C_1, C_2, \ldots, C_t$.}
    \label{fig:triangular_grid_case_3}
\end{figure}
\end{enumerate}

%\begin{remark*}
%    Since the index of a component may change, the nodes of a component with index $i$ may not all be (committed to) colored with $i$.
%\end{remark*}

\paragraph{Locality of the algorithm.} Similar to the bipartite case, for our $(k+1)$-coloring algorithm to work correctly, all the nodes colored during each execution of \Cref{alg:swap_indices} must be within the group under consideration so that all the nodes colored by our algorithm are within the seen region. We show that our choice of $T = 3(k-1) \log n$ as the locality of our algorithm is sufficient. 
From the algorithm description, whenever the type of a group $X \in \{C_2, \ldots, C_t\}$ is changed, $X$ is merged into a group $C_1 \cup C_2 \cup \cdots \cup C_t \cup \mathcal{B}(u, T)$ whose size is at least $|C_1| + |X| \geq 2|X|$, as $|C_1| \geq |C_2| \geq \cdots \geq |C_t|$. Therefore, from the perspective of a node, the total number of type changes is at most $\log n$, as the size of a group cannot exceed $n$.
When the type of a group $X$ is changed, \Cref{alg:swap_indices} is executed for at most $k-1$ times, which induces at most $3(k - 1)$ layers of colored nodes. This means that choosing $T = 3(k-1) \log n \in O(\log n)$ ensures that all nodes colored by \Cref{alg:swap_indices} are within the group under consideration.

The overall locality for implementing our algorithm in the $\onlineLOCAL$ model is $T + \ell \in O(\log n)$, where $\ell \in O(1)$ is the cost of implementing the oracle $\mathcal{O}$. For the rest of the section, we prove \Cref{thm:lpccupper} by showing that our algorithm outputs a proper coloring.

\begin{algorithm}[ht]
    \SetKwFunction{change}{change\_index}
    \SetKwProg{func}{Function}{}{}
    \KwIn{A group $X$, a permutation $\pi: [k] \rightarrow [k]$ representing the type of $X$, and two distinct colors $i_1, i_2 \in [k]$.}
    \KwResult{The two colors $i_1$ and $i_2$ are swapped in $\pi$.}
    $X' \gets \text{The nodes of $X$ committed to a color}$\;\nllabel{ln:A'}
    $\change(i_1, k + 1)$\;
    $\change(i_2, i_1)$\;
    $\change(k + 1, i_2)$\;
    \BlankLine
    \func{$\change{i, j}$}{
        \KwIn{A color $i \in [k + 1]$ used in $\pi$ and a color $j \in [k + 1]$ not  used in $\pi$. %\Comment*[r]{{\small At the beginning and the end of \Cref{alg:swap_indices}, the indices in use are $[k]$.}}
        }
        \KwResult{Color $i$ is replaced by color $j$ in $\pi$.}
        \ForEach{$s \in [k]$ }{
            %$c \gets \text{The component of $\mathcal{O}(X')$ contained by $C$}$ \Comment*[r]{{\small $c$ is unique by \Cref{lem:oracle_permutation_consistency}}}
            %$\gamma \gets \text{The index of $c$}$\;
            \eIf{$\pi(s) = i$}{
                \nllabel{ln:change_index}
                Commit each node of $\left( \mathcal{B}(X', 1) \cap V_s\right) \setminus X'$ to color $j$\;
                $\pi(s) \gets j$\;
            }{
                \nllabel{ln:extend_index}
                Commit each node of $\left( \mathcal{B}(X', 1) \cap V_s\right) \setminus X'$ to color $\pi(s)$\;
            }
        }
        $X' \gets \mathcal{B}(X', 1)$\;\nllabel{ln:end}
    }
    \caption{Swapping colors for a group}\label{alg:swap_indices}
\end{algorithm}

\begin{figure}[ht]
    \begin{subfigure}{\linewidth}
        \centering
        \includegraphics[width=0.6\textwidth]{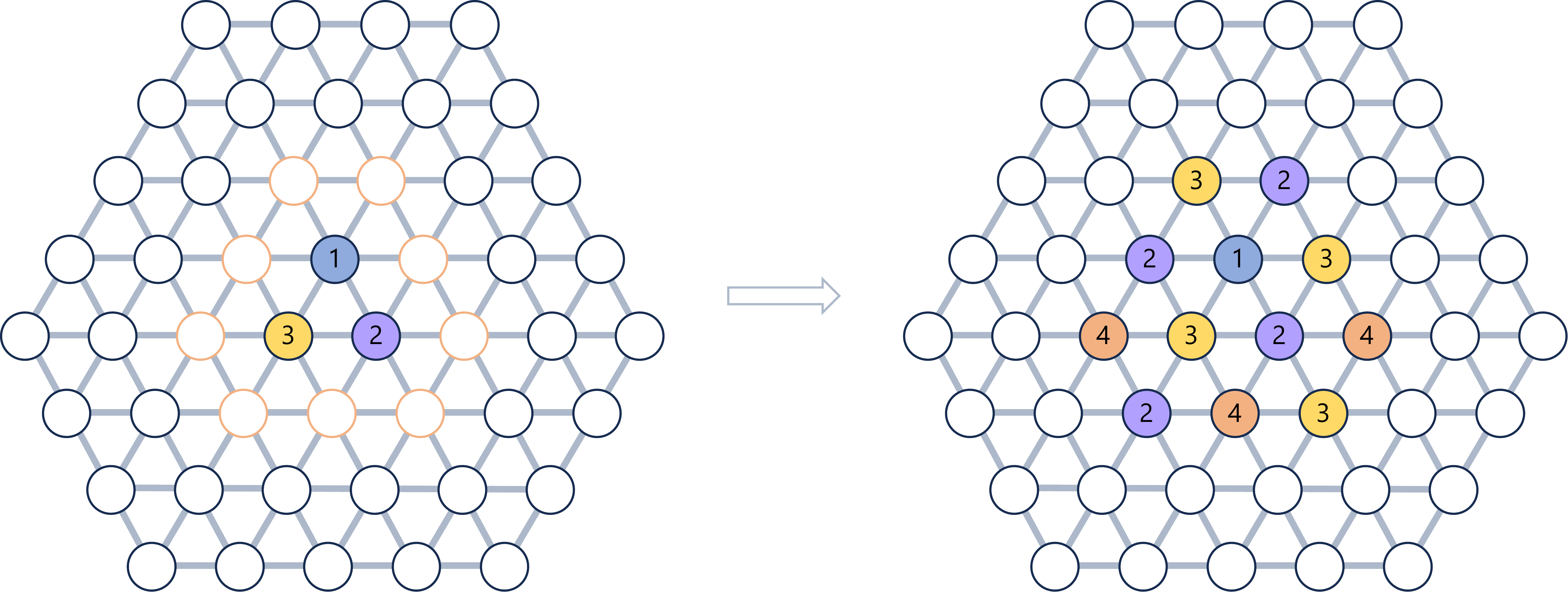}

        \caption{$\change(1, 4)$}
        \label{fig:tri_23}
    \end{subfigure}

    \begin{subfigure}{\linewidth}
        \centering
        \includegraphics[width=0.6\textwidth]{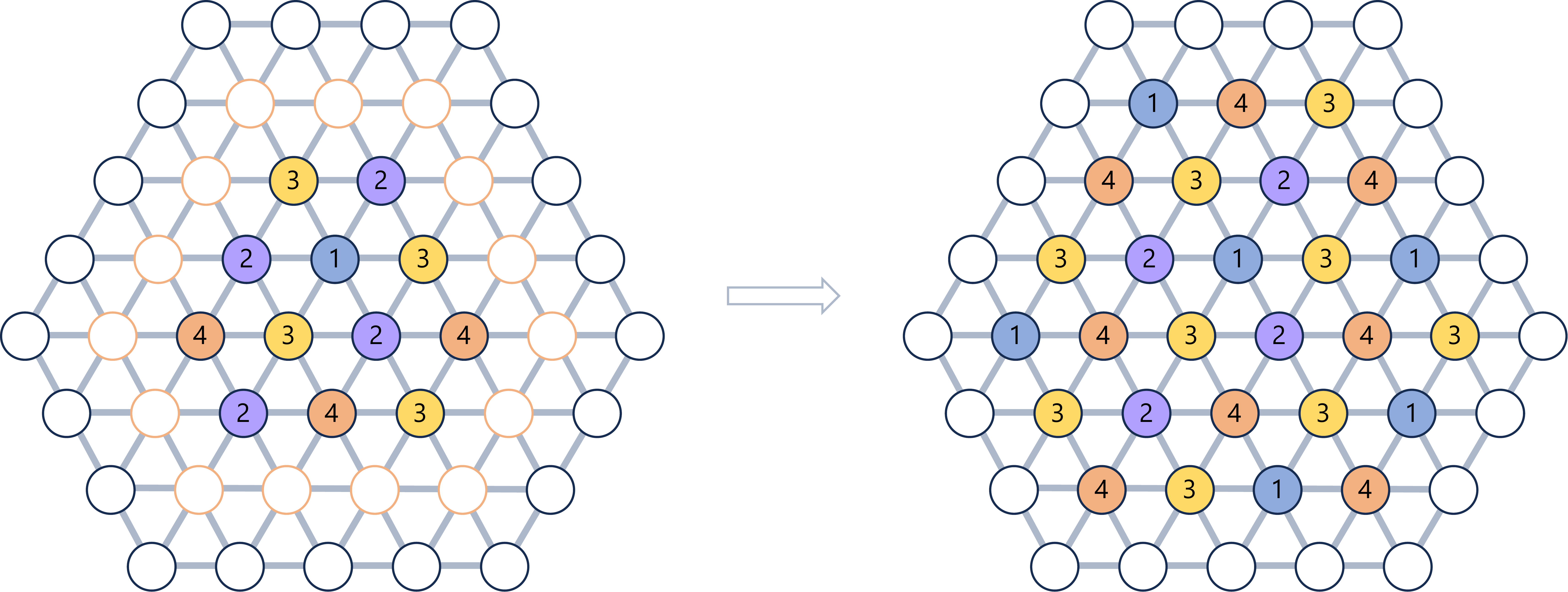}

        \caption{$\change(2, 1)$}
        \label{fig:tri_45}
    \end{subfigure}

    \begin{subfigure}{\linewidth}
        \centering
        \includegraphics[width=0.6\textwidth]{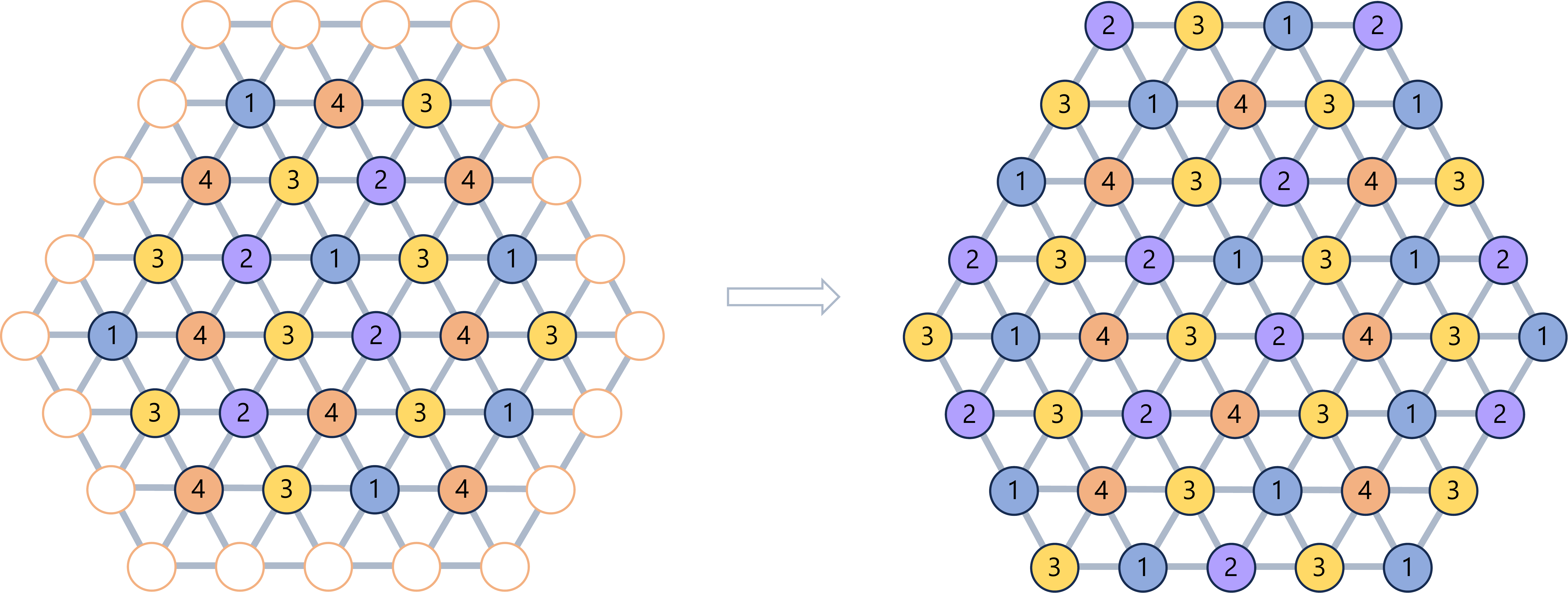}

        \caption{$\change(4, 2)$}
        \label{fig:tri_67}
    \end{subfigure}
    %\vspace{1pt}
    \caption{Swapping colors $1$ and $2$.}\label{fig:swap}
\end{figure}

\lpccupper*
\begin{proof}
%\subsubsection{Correctness of the Algorithm}\label{sec:k+1_correct}
%\yijun{I finished editing the paper up to this point}
%Now, we show the validity of our coloring algorithm described in \Cref{sec:k+1_algo}. In particular, we show that as long as $T$ is sufficiently large, our algorithm will always produce a proper $(k + 1)$-coloring. To see this, 
Since we already know that our algorithm takes $O(\log n)$ locality in the $\onlineLOCAL$ model, we just need to verify that our algorithm outputs a $(k+1)$-coloring correctly. To do so, we show that the following induction hypothesis is maintained throughout the algorithm for each group $C$.
\begin{itemize}
    \item For each colored node $v \in C$ that is adjacent to an uncolored node in $G$, the color of $v$ is consistent with the type of $C$.
\end{itemize}
Consequently, at any time step, the current partial coloring of $C$ can be completed using the colors in $[k]$ by adapting the color assignment given by the type for the remaining uncolored nodes in $C$. Based on this, we verify that all the color assignments made by the algorithm do not induce a monochromatic edge. We check the three cases in the algorithm description. 
\begin{itemize}
    \item In Case 1, $\mathcal{B}(u, T)$ forms a new group, whose type is chosen to be consistent with the color $1$ assigned to node $u$, so the induction hypothesis is maintained. Since $u$ does not have any colored neighbor, the color assignment cannot induce a monochromatic edge.
    \item In Case 2, $\mathcal{B}(u, T)$ is merged into an existing group $C$ to form a larger group $C \cup \mathcal{B}(u, T)$ that inherits the type of $C$. Since the color assigned $u$ is chosen according to the type of $C \cup \mathcal{B}(u, T)$, the induction hypothesis is maintained and the color assignment does not induce a monochromatic edge.
    \item In Case 3, $\mathcal{B}(u, T)$ and the groups $C_1, C_2, \ldots, C_t$ are merged into a new group $C_1 \cup C_2 \cup \cdots \cup C_t \cup \mathcal{B}(u, T)$. Before the merge, we first change the type of each group $X \in \{C_2, \ldots, C_t\}$ to match the type of $C_1$ by iteratively running \Cref{alg:swap_indices}, so the new group can inherit the type of $C_1$. Similar to Case 2, since the color assigned $u$ is chosen according to the type of $C_1 \cup C_2 \cup \cdots \cup C_t \cup \mathcal{B}(u, T)$, the induction hypothesis is maintained and the color assignment does not induce a monochromatic edge. Thus, we only need to check that \Cref{alg:swap_indices} works correctly. Following the algorithm description, consider \[\mathcal{O}(C_1 \cup C_2 \cup \cdots \cup C_t \cup \mathcal{B}(u, T)) = V_1 \cup V_2 \cup \cdots \cup V_k,\]
    so the type of $X \in \{C_2, \ldots, C_t\}$ can be written as a permutation $\pi$ of the set $[k]$, where $\pi(i) = j$ specifies that the type assigns color $j$ to part $V_i$. Given any two distinct colors $i_1 \in [k]$ and $i_2 \in [k]$, \Cref{alg:swap_indices} swaps the two colors in the color assignment $\pi$ in three steps:  $\change(i_1, k + 1)$, $\change(i_2, i_1)$, and
    $\change(k + 1, i_2)$. To prove the correctness of \Cref{alg:swap_indices}, we just need to verify that $\change(i,j)$ {correctly} replaces color $i$ with color $j$ in $\pi$ in the following sense.
              \begin{enumerate}
                  \item The induction hypothesis is maintained: At the end of $\change(i,j)$, for each colored node $v \in X$ that is adjacent to an uncolored node in $G$, the color of $v$ is consistent with the color assignment $\pi'$ that is the result of replacing $i$ with $j$ in $\pi$. \label{item1}
                  \item \Cref{alg:swap_indices} does not induce monochromatic edges. \label{item2}
              \end{enumerate}
              %Those claims are trivially true for each group initially, as $X'$ is empty when the group is created. Next, we use the inductive hypothesis to assume that both claims are true for $X'$ at the beginning of a $\change$ call. 
              Let $X'$ be the set of nodes in $X$ that have been colored at the beginning of $\change(i,j)$.
              For each $s \in [k]$, each node $v$ in $(\mathcal{B}(X', 1) \cap V_s)\setminus X'$ is colored with $j$ if $\pi(s) = i$ and $\pi(s)$ otherwise. Since the color assignment is consistent with $\pi'$, \Cref{item1} holds.
              For the case of $\pi(s) \neq i$, by the induction hypothesis, coloring $v$ with $\pi(s)$ does not induce a monochromatic edge. For the case of $\pi(s) = i$, coloring $v$ with $j$ also does not induce a monochromatic edge because color $j$ is not used in $\pi$, so the induction hypothesis guarantees that $v$ does not have a neighbor that is colored with $j$. Therefore, \Cref{item2} holds. \qedhere
          %Next, notice that after calling $\change(i, j)$, index $i$ becomes unused by $X$. Thus, by the first inductive claim above, there will not be any node of $X'$ committed to color $k + 1$ and has an uncommitted neighbor after each execution of \Cref{alg:swap_indices}. This ensures that we can freely utilize color $k + 1$ for future calls of \Cref{alg:swap_indices} without causing errors. Moreover, after executing \Cref{alg:swap_indices} for indices $i_1$ and $i_2$, their components are swapped, i.e., if we compare the indices of $X$ and $Y$ now, $I_{X, Y}(i_1)$ and $I_{X, Y}(i_2)$ are swapped. Therefore, we can transform $I_{X, Y}$ into the identity permutation by executing \Cref{alg:swap_indices} at most $k - 1$ times, each correcting an index.
          %After all groups are consistent in indices by joining them iteratively, they also become consistent in deciding the index of $\mathcal{B}(u, T)$, and we can decide the color of $u$ just as in the second case. Thus, our algorithm always produces a proper coloring of $G$.
\end{itemize}
\end{proof}

\subsection{Lower Bound}\label{sec:k+1_lb}

In this section, we prove that $(k+1)$-coloring of the graphs in $\mathcal{L}_{k,\ell}$ with $\ell \in O(1)$ requires locality $\Omega(\log n)$ in the $\onlineLOCAL$ model, thereby proving \Cref{thm:lpcclower}. 
Recall that $\mathcal{L}_{k,\ell}$ is the class of $k$-partite graphs that admit locally inferable unique colorings with radius $\ell$ (\Cref{def:lpcc}).
For $k=2$, the claim follows from \Cref{thm:complexity}, as $(\sqrt{n} \times \sqrt{n})$ grid belongs to $\mathcal{L}_{2,0}$.

%Intuitively, as $k$ increases, the bound on locality should be non-decreasing. Starting from $k = 2$, we inductively construct a $(k + 1)$-partite graph $G_{k + 1}$ such that $G_{k + 1} \in \mathcal{L}_{k + 1}$ and $(k + 1)$-coloring $G_{k + 1}$ requires locality $\Omega(\log n)$.

\paragraph{Construction of hard instance graphs $\{G_k\}$.} Given a parameter $n$ such that $\sqrt{n}$ is an integer, we construct a sequence of graphs $G_2, G_3, \ldots$ such that $G_{k} \in \mathcal{L}_{k,\ell}$ and $(k+1)$-coloring of the graphs in $\mathcal{L}_{k,\ell}$ requires locality $\Omega(\log n)$ in the $\onlineLOCAL$ model.
\begin{description}
    \item[Base case:] For $k = 2$, let $G_2$ be the $\left(\sqrt{n} \times \sqrt{n}\right)$ simple grid. Define $H_2 = V(G_2)$.
    \item[Inductive step:] For $k\geq 2$, $G_{k+1}$ is constructed from $G_k$ as follows: For each node $u$ in $G_k$, create a new node $u^\ast$ that is adjacent to $u$ and all the neighbors of $u$ in $G_k$. In the subsequent discussion, we call $u^\ast$ the \emph{duplicate} of $u$ in $G_{k + 1}$. Define $H_{k + 1}$ as the set of the new nodes created in the construction of $G_{k+1}$ from $G_k$. In other words,  $H_{k + 1}$ is the set of all nodes in $G_{k+1}$ that do not belong to $G_{k}$.
\end{description}
Observe that the nodes of $G_{k + 1}$ are partitioned into $k$ layers: $V(G_{k + 1}) = H_2 \cup \dots \cup H_{k + 1}$. We say that a node $v$ is in layer $i$ if $v \in H_i$. We make the following additional observations.

\begin{observation}\label{obs:num_nodes}
    For any  $k\geq 2$, the number of nodes in $G_{k}$ is $n_k=2^{k-2}n$.
\end{observation}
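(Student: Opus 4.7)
The plan is to prove the observation by straightforward induction on $k$, directly mirroring the recursive construction of the graphs $G_k$.

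For the base case $k = 2$, the claim follows immediately from the definition: $G_2$ is the $(\sqrt{n} \times \sqrt{n})$ simple grid, which has exactly $n$ nodes, matching $2^{2-2}\, n = n$. For the inductive step, I would assume $|V(G_k)| = 2^{k-2} n$ and then appeal to the construction of $G_{k+1}$. Since $G_{k+1}$ is formed by taking $G_k$ and adding one fresh duplicate $u^\ast$ for every $u \in V(G_k)$, the sets $V(G_k)$ and $H_{k+1}$ are disjoint with $|H_{k+1}| = |V(G_k)|$. Hence
\[
|V(G_{k+1})| = |V(G_k)| + |H_{k+1}| = 2 \cdot |V(G_k)| = 2 \cdot 2^{k-2}\, n = 2^{(k+1) - 2}\, n,
\]
completing the induction.

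There is no real technical obstacle here; the only thing to be careful about is to confirm from the construction that the duplicates $u^\ast$ are genuinely new nodes (so that the counting is additive and not confused with identifying $u^\ast$ with any existing node), which is explicit in the definition of $H_{k+1}$ as the set of nodes created at the inductive step. With this in hand, the layered partition $V(G_{k+1}) = H_2 \cup \dots \cup H_{k+1}$ noted just before the observation could also be used as an alternative justification: $|H_2| = n$ and $|H_{i+1}| = |V(G_i)| = 2^{i-2} n$ for $i \geq 2$, so summing gives $|V(G_k)| = n + n + 2n + \dots + 2^{k-3} n = 2^{k-2} n$, yielding the same result.
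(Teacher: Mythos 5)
Your proof is correct and matches the paper's argument exactly: base case $G_2$ has $n$ nodes, and each inductive step doubles the node count since $|H_{k+1}| = |V(G_k)|$. The alternative layered-sum justification you mention is a harmless variation of the same counting.
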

\begin{proof}
The number of nodes in $G_2$ is $n$. From the construction, the number of nodes in $G_{k+1}$ is exactly twice that of $G_{k}$. 
\end{proof}

\begin{observation}
    \label{obs:gk_k_partite}
    $G_k$ is a $k$-partite graph.
\end{observation}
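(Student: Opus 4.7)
The plan is to prove the observation by a straightforward induction on $k$, mirroring the recursive construction of the $G_k$'s. The base case $k=2$ is immediate: $G_2$ is the $(\sqrt{n}\times\sqrt{n})$ simple grid, which is bipartite (e.g.\ by two-coloring each node $(i,j)$ with the parity of $i+j$).

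For the inductive step, I would assume $G_k$ admits a proper $k$-coloring $c \colon V(G_k) \to [k]$ and exhibit a proper $(k+1)$-coloring $c' \colon V(G_{k+1}) \to [k+1]$ of $G_{k+1}$ defined by
\[
    c'(v) = c(v) \text{ for } v \in V(G_k), \qquad c'(u^\ast) = k+1 \text{ for every duplicate } u^\ast \in H_{k+1}.
\]
To verify properness, I would classify the edges of $G_{k+1}$ into three types: (i) edges inherited from $G_k$, (ii) edges of the form $\{u, u^\ast\}$, and (iii) edges of the form $\{u^\ast, v\}$ where $v \in N_{G_k}(u)$. Type (i) edges are properly colored because $c$ is, and types (ii) and (iii) are properly colored because one endpoint receives the fresh color $k+1 \notin [k]$ while the other receives a color in $[k]$.

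The only subtle point, which I would make explicit, is that by construction a duplicate node $u^\ast$ is adjacent only to original nodes of $G_k$, namely $u$ itself and the neighbors of $u$ in $G_k$. In particular, there is no edge between two duplicates $u^\ast, v^\ast \in H_{k+1}$, so $H_{k+1}$ is an independent set in $G_{k+1}$ and may safely be assigned the single new color $k+1$. This is the entire content of the argument, and I do not anticipate any obstacle; the observation is essentially a one-line consequence of the fact that each layer $H_{i+1}$ is independent in $G_{i+1}$, so each inductive step of the construction raises the chromatic number by at most one.
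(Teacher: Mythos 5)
Your proof is correct and is essentially the paper's proof, just phrased as an induction rather than given directly: the paper also colors $H_2$ with two colors and assigns each subsequent layer $H_i$ its own fresh color $i$, relying on the same key fact that each $H_i$ with $i \geq 3$ is an independent set.
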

\begin{proof}
A proper $k$-coloring of $G_k$ can be obtained as follows.
We can properly $2$-color $H_2$ using colors $1$ and $2$, as $H_2$ induces a grid. For each layer number $i \in \{3, \ldots, k\}$, we color all nodes 
 in $H_i$ with color $i$. The coloring is proper because $H_i$ is an independent set for each $i \in \{3, \ldots, k\}$.
\end{proof}

\subsubsection{Locally Inferable Unique Coloring}
We first show that $G_k$ admits a locally inferable unique coloring.
%As $G_k$ is constructed iteratively from $G_2$, $v$ ``originates'' from a node $u$ of $H_2$.
Consider any node $v \in H_i$ in layer $i$ of $G_k$, where $i \in \{3, \ldots, k\}$. Since $i \geq 3$, there exists a node $u \in H_2 \cup \cdots \cup H_{i-1}$ such that $v = u^\ast$ is the duplicate of $u$ in the construction of $G_i$. We write $\pi(v) = u$ to denote such a node $u$. By iteratively applying the function $\pi$ to $v$, we eventually obtain a node $w$ in $H_2$. We write $\pi_\diamond(v) = w$ to denote such a node $w$. Intuitively, $\pi(v)$ is the immediate ancestor of $v$, and $\pi_\diamond(v)$ is the ancestor of $v$ in the base layer $H_2$. For each node $v \in H_2$, we define $\pi_\diamond(v) = v$.

%such that $u_j$ is a duplicate of $u_{j - 1} \in H_{j-1}$ in the construction of $G_j$ from $G_{j-1}$, for all $j \in \{3, \ldots, i\}$. We write $\pi_j(v) = u_j$ to denote the ancestor $u_j$ of $v$ in layer $j$. Note that $v = \pi_i(v)$. For notational simplicity, if $i \geq 3$, then we also write $\pi(v) = u_{i-1}$ to denote the immediate ancestor of $v$. 
%Lastly, for $v \in H_2$, we define $v = \pi(v) = \pi_{\diamond}(v)$. <- Yi-Jun: This is very confusing - I hope we don't need this.

    %To prove that $G_k \in \mathcal{L}_{k,\ell}$, we will need to show that $G_k$ admits a unique $k$-color partition, and given a connected subgraph, its $k$-color partition can be locally inferred after examining its $O(1)$-radius neighborhood. 
    
    %We use induction to prove some claims of $G_{k + 1}$; they trivially hold in $G_2$ (except for the third claim), and we suppose they hold for $G_k$ (and thus $G_k'$) where $k \geq 2$:\yijun{To do: state each claim as a claim outside of the proof of the lemma and then prove the lemma in the end.}

\begin{claim}\label{clm1}
  For each $v \in V(G_{k})$, there exists a $k$-clique that contains both $v$ and $\pi_{\diamond}(v)$.  
\end{claim}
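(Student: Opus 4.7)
The plan is to prove the claim by induction on $k$, exploiting the recursive construction: whenever we go from $G_k$ to $G_{k+1}$, the duplicate $u^\ast$ is adjacent to $u$ and every neighbor of $u$ in $G_k$, so any clique of $G_k$ passing through $u$ extends by one vertex in $G_{k+1}$. This observation is the only structural fact needed.

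For the base case $k = 2$, each $v \in H_2 = V(G_2)$ satisfies $\pi_\diamond(v) = v$, so a ``$2$-clique containing $v$ and $\pi_\diamond(v)$'' is just any edge incident to $v$, which exists in the $(\sqrt{n} \times \sqrt{n})$ grid as long as $n \geq 4$. For the inductive step, I assume the claim for $G_k$ and split the vertex $v \in V(G_{k+1}) = V(G_k) \cup H_{k+1}$ into two cases.

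Case (i): $v \in V(G_k)$. Since all ancestors of $v$ lie in $G_k$, the value $\pi_\diamond(v)$ is the same whether computed in $G_k$ or in $G_{k+1}$. The inductive hypothesis supplies a $k$-clique $K \subseteq V(G_k)$ containing $v$ and $\pi_\diamond(v)$; picking an arbitrary $w \in K$, its duplicate $w^\ast \in H_{k+1}$ is adjacent to $w$ and to every neighbor of $w$ in $G_k$, hence to every other vertex of $K$, so $K \cup \{w^\ast\}$ is a $(k+1)$-clique with the required property.

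Case (ii): $v \in H_{k+1}$. Then $v = u^\ast$ for some $u \in V(G_k)$, and by definition $\pi(v) = u$, so $\pi_\diamond(v) = \pi_\diamond(u)$. Applying the induction hypothesis to $u$ gives a $k$-clique $K \subseteq V(G_k)$ containing $u$ and $\pi_\diamond(u) = \pi_\diamond(v)$; because every vertex of $K \setminus \{u\}$ is a neighbor of $u$ in $G_k$, the new vertex $v = u^\ast$ is adjacent to all of $K$, so $K \cup \{v\}$ is the desired $(k+1)$-clique. There is no real obstacle here; the only thing to be careful about is tracking that $\pi_\diamond$ is preserved when moving from $G_k$ into $G_{k+1}$ in Case (i), and that in Case (ii) the vertex $u$ itself appears in the clique provided by the hypothesis, which is exactly what lets $u^\ast$ attach to the full clique.
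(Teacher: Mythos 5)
Your proof is correct and takes essentially the same approach as the paper: induction on $k$ with the same two-case split according to whether $v$ lies in the newest layer or in $V(G_{k})$. The only cosmetic difference is that in the latter case you extend the $k$-clique $K$ by the duplicate $w^\ast$ of an arbitrary $w \in K$, whereas the paper uses $v^\ast$ specifically; both choices work for the identical reason.
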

\begin{proof}
We prove the claim by an induction on $k$. For the base case of $k = 2$, any edge incident to $v = \pi_{\diamond}(v)$ is a desired $k$-clique.
For the inductive step, suppose the claim is true for $k-1$. Consider any $v \in V(G_{k})$. There are two cases.
\begin{itemize}
    \item Suppose $v \in H_{k}$. By the induction hypothesis, $\pi(v)$ and $\pi_{\diamond}(v)$ are contained in a $(k-1)$-clique $K$ of $G_{k-1}$. Since $v$ is adjacent to all neighbors of $\pi(v)$ in $G_k$, $K \cup \{v\}$ is a desired $k$-clique.
    \item Suppose $v \in H_{i}$ for some $i \in \{2, \ldots, k-1\}$. By the induction hypothesis, $v$ and $\pi_{\diamond}(v)$ are contained in a $(k-1)$-clique $K$ of $G_{k-1}$. Observe that the duplicate $v^\ast$ of $v$ in $H_{k}$ is adjacent to every node of $K$. Thus, $K \cup \{v^\ast\}$ is a desired $k$-clique.\qedhere
\end{itemize}
\end{proof}

\begin{claim}\label{clm2}
  For each edge $\{u,v\}$ in $G_{k}$, the two nodes $\pi_{\diamond}(u)$ and $\pi_{\diamond}(v)$ are adjacent.
\end{claim}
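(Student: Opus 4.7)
I would prove Claim 2 by induction on $k$, exploiting the recursive construction of $G_k$ from $G_{k-1}$. The base case $k=2$ is immediate: on $V(G_2) = H_2$, the map $\pi_\diamond$ is the identity, so any edge of $G_2$ already witnesses the required adjacency. This reduces the problem to the inductive step, where I need to understand which new edges appear when passing from $G_{k-1}$ to $G_k$.

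The key structural observation driving the induction is that $H_k$ is an \emph{independent set} in $G_k$: by construction, each newly created duplicate $u^\ast \in H_k$ is adjacent only to $u$ and to the neighbors of $u$ in $G_{k-1}$, all of which lie in $V(G_{k-1})$. Consequently, every edge $\{u,v\} \in E(G_k)$ falls into exactly one of two cases: (i) both endpoints lie in $V(G_{k-1})$, in which case $\{u,v\}$ is already an edge of $G_{k-1}$ and the values $\pi_\diamond(u), \pi_\diamond(v)$ computed in $G_k$ coincide with those computed in $G_{k-1}$, so the inductive hypothesis applies verbatim; or (ii) exactly one endpoint, say $v$, lies in $H_k$, so $v = w^\ast$ for some $w \in V(G_{k-1})$ with $\pi(v) = w$, whence $\pi_\diamond(v) = \pi_\diamond(w)$. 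In case (ii), the definition of the edges incident to $v$ forces either $u = w$ or $\{u,w\} \in E(G_{k-1})$; the latter subcase reduces immediately to the inductive hypothesis applied to the edge $\{u,w\}$ in $G_{k-1}$, yielding that $\pi_\diamond(u)$ is adjacent to $\pi_\diamond(w) = \pi_\diamond(v)$.

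The only delicate point — and the one I would flag as the main obstacle — is the subcase $u = w = \pi(v)$, in which $\pi_\diamond(u)$ and $\pi_\diamond(v)$ are literally the same node of $H_2$ rather than two distinct adjacent ones. I would interpret the claim's conclusion as permitting equality (equivalently, that $\pi_\diamond(u)$ and $\pi_\diamond(v)$ lie in a common closed neighborhood of $H_2$), which is the reading needed downstream: the claim is used to propagate the unique bipartition of the grid $H_2$ through the duplicated layers in order to certify $G_k \in \mathcal{L}_{k,\ell}$ with $\ell \in O(1)$, and two nodes sharing an ancestor in $H_2$ simply receive the same part label, which is consistent with them being non-adjacent. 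Beyond this minor interpretive subtlety, the proof is entirely driven by the independence of $H_k$ in $G_k$ and the commutation of $\pi_\diamond$ with the $G_{k-1} \hookrightarrow G_k$ inclusion, so no further machinery is required.
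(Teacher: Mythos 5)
Your approach mirrors the paper's own: both arguments are inductions that peel off the outermost layer of $G_k$. You induct on $k$ and split according to whether both endpoints of the edge already lie in $V(G_{k-1})$ or exactly one lies in $H_k$; the paper instead inducts on $\ell+\ell'$ where $u\in H_\ell$ and $v\in H_{\ell'}$, assumes $\ell < \ell'$ without loss of generality, and applies the hypothesis to the pair $(u,\pi(v))$. These are two phrasings of the same reduction.

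The ``delicate point'' you flag is a genuine gap --- and it is present in the paper's own proof, not just yours. The paper's argument asserts that ``$\pi(v)$ must be adjacent to $u$ because $v$ is adjacent to $u$ and $v$ is the duplicate of $\pi(v)$,'' which silently ignores the subcase $u=\pi(v)$. For any edge of the form $\{u,u^\ast\}$ one has $\pi_{\diamond}(u^\ast)=\pi_{\diamond}(u)$, so the two image nodes coincide, and a simple graph has no loops; thus \Cref{clm2} is false as literally stated. Your proposed repair --- reading the conclusion as ``$\pi_{\diamond}(u)$ and $\pi_{\diamond}(v)$ are equal or adjacent'' --- is the right one and is exactly what the downstream argument needs: in \Cref{lem:perm-cons} the claim feeds a pair $(\pi_{\diamond}(w_i),\pi_{\diamond}(w_{i+1}))$ into \Cref{clm3}, and in the degenerate case $\pi_{\diamond}(w_i)=\pi_{\diamond}(w_{i+1})$ the two cliques $K_i$ and $K_{i+1}$ already share this common node, so the base case of \Cref{clm3}'s own induction (two $k$-cliques through a common node are $\leftrightarrow$-related) still applies with no change and the rest of that proof goes through. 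So your plan is sound modulo this weakening, and it additionally does the service of surfacing a minor bug that the paper overlooks.
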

\begin{proof}
 Let $u \in H_\ell$ and $v \in H_{\ell^\prime}$. We prove the claim by induction on $\ell + \ell'$. For the base case, when $\ell = \ell' = 2$, $u = \pi_{\diamond}(u)$ is adjacent to $v = \pi_{\diamond}(v)$. Now, consider the inductive step. Since $H_3, H_4, \ldots$ are all independent sets, we must have $\ell \neq \ell'$ unless $\ell = \ell' = 2$. Therefore, without loss of generality, we may assume $\ell < \ell'$. Observe that $\pi(v)$ must be adjacent to $u$ because $v$ is adjacent to $u$ and $v$ is the duplicate of $\pi(v)$. By the induction hypothesis applied to $u$ and  $\pi(v)$, we infer that $\pi_{\diamond}(u)$ and $\pi_{\diamond}(\pi(v)) = \pi_{\diamond}(v)$ are adjacent.
\end{proof}

Let $t \geq 2$. Given two $t$-cliques $K$ and $K'$, we write $K \leftrightarrow K'$ if they share at least $t - 1$ nodes.  
We write $K \overset{\ast}{\leftrightarrow} K'$ if there exists a sequence of $t$-cliques $(K = K_1, \ldots, K_s = K')$ such that $K_1 \leftrightarrow K_2 \leftrightarrow \cdots \leftrightarrow K_{s-1} \leftrightarrow K_s$.  

\begin{claim}\label{clm3}
Let $\{u,v\}$ be any edge in $G_{k}$ such that $u \in H_2$ and $v \in H_2$. For any two $k$-cliques $K$ and $K'$ such that $u \in K$ and $v \in K'$,  $K \overset{\ast}{\leftrightarrow} K'$ in the subgraph of $G_{k}$ induced by  $\mathcal{B}(\{u, v\}, k-1)$.
\end{claim}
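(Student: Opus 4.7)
The plan is to prove Claim~3 by transforming $K$ into $K'$ through an explicit sequence of $k$-cliques all living inside $\mathcal{B}(\{u,v\}, 1) \subseteq \mathcal{B}(\{u,v\}, k-1)$. The first ingredient is a structural characterization of $k$-cliques in $G_k$: a set $\{a, b, x_3, \ldots, x_k\}$ with $x_i \in H_i$ is a $k$-clique if and only if $\{a, b\}$ is an edge of $H_2$ and $\pi(x_i) \in \{a, b, x_3, \ldots, x_{i-1}\}$ for every $i \in \{3, \ldots, k\}$. The ``$\Leftarrow$'' direction follows by induction on $i$ directly from the definition of duplicates: if $\pi(x_i)$ sits in a pre-existing $(i-1)$-clique, then $x_i$ is adjacent to $\pi(x_i)$ and to every neighbor of $\pi(x_i)$ in $G_{i-1}$, in particular to the other $i-2$ clique members. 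For ``$\Rightarrow$'', the fact that $H_2$ is triangle-free and each $H_j$ with $j \geq 3$ is independent forces $|K \cap H_2| = 2$ and $|K \cap H_j| = 1$ for $j \geq 3$; then, since $x_i$'s neighbors at layer $\leq i-1$ are precisely $\{\pi(x_i)\} \cup N_{G_{i-1}}(\pi(x_i))$, the alternative to $\pi(x_i)$ lying in the earlier clique would make $\pi(x_i)$ a common neighbor of $\{a, b, x_3, \ldots, x_{i-1}\}$ in $G_{i-1}$, producing an $i$-clique in $G_{i-1}$ and contradicting the bound that $G_{i-1}$ has clique number $i-1$.

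With this characterization in hand, for each $i \in \{3, \ldots, k\}$ let $u[i] \in H_i$ denote the duplicate of $u$ (so $\pi(u[i]) = u$), and similarly $v[i]$. Both $L_u := \{u, v, u[3], \ldots, u[k]\}$ and $L_v := \{u, v, v[3], \ldots, v[k]\}$ satisfy the characterization trivially and are therefore $k$-cliques. Writing $K = \{a, b, x_3, \ldots, x_k\}$ with $u \in \{a, b\}$ (say $u = a$), I plan to produce $K \overset{\ast}{\leftrightarrow} L_u$ in two phases. Phase~1 replaces $x_i$ by $u[i]$ in descending order $i = k, k-1, \ldots, 3$: each swap is a valid single-node change because $\pi(u[i]) = u \in \{a, b\}$ lies in the ``earlier'' set $\{a, b, x_3, \ldots, x_{i-1}\}$, and every $u[j]$ already placed for $j > i$ retains parent $u$ in the updated earlier set. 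Phase~2 swaps $b$ for $v$ if $b \neq v$, producing $L_u$: the new $H_2$-edge is $\{u, v\}$, and each $u[i]$ still has parent $u$ inside it. A symmetric sequence gives $K' \overset{\ast}{\leftrightarrow} L_v$. Finally, I bridge $L_u$ to $L_v$ by replacing $u[i]$ with $v[i]$ in descending order $i = k, k-1, \ldots, 3$; each swap is valid because $\pi(v[i]) = v$ belongs to the common $H_2$-edge $\{u, v\}$, and every $v[j]$ already placed for $j > i$ keeps parent $v$ in the updated earlier set.

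To check locality, observe that every vertex appearing in any intermediate $k$-clique is one of: the endpoint $u$ or $v$; a vertex of $K$ or $K'$ (hence adjacent to $u$ or $v$ in $G_k$); or a duplicate $u[i]$ or $v[i]$ (adjacent to $u$ or $v$ by definition). All such vertices lie in $\mathcal{B}(\{u,v\}, 1)$, so the entire $\overset{\ast}{\leftrightarrow}$-path sits inside $G_k[\mathcal{B}(\{u,v\}, 1)] \subseteq G_k[\mathcal{B}(\{u,v\}, k-1)]$, as required.

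The main point to verify carefully is the ``$\Rightarrow$'' direction of the clique characterization, whose crux is the clique-number bound that $G_{i-1}$ has no $i$-clique. Once this is in place, the remainder of the proof is bookkeeping: checking that each prescribed single-node swap preserves the characterization (using $\pi(u[i]) = u$ and $\pi(v[i]) = v$) and that all vertices touched lie within $\mathcal{B}(\{u,v\}, 1)$.
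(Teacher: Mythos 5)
Your proof is correct, but it takes a genuinely different and in fact stronger route than the paper's. The paper proves \cref{clm3} by induction on $k$: it projects $K$ and $K'$ to $(k-1)$-cliques $K_\bullet, K_\bullet'$ in $G_{k-1}$, invokes the induction hypothesis to obtain a chain of $(k-1)$-cliques inside $\mathcal{B}(\{u,v\},k-2)$, and then lifts each single-swap step of that chain to a short chain of four $k$-cliques $Q_a \leftrightarrow Q_b \leftrightarrow Q_c \leftrightarrow Q_d$ by attaching suitable duplicates in $H_k$; each lift expands the working radius by one, which is why the paper ends up inside $\mathcal{B}(\{u,v\},k-1)$. You instead bypass the induction by first proving a clean structural characterization of the $k$-cliques of $G_k$ — a set $\{a,b,x_3,\dots,x_k\}$ with $a,b\in H_2$, $x_i\in H_i$ is a $k$-clique iff $\{a,b\}\in E(H_2)$ and $\pi(x_i)$ lies among the earlier clique members for every $i$ — and then writing down an explicit $O(k)$-step swap sequence $K \overset{\ast}{\leftrightarrow} L_u \overset{\ast}{\leftrightarrow} L_v \overset{\ast}{\leftrightarrow} K'$ through the ``canonical'' cliques $L_u, L_v$ built from iterated duplicates of $u$ and $v$. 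I checked the characterization carefully: the $\Leftarrow$ direction is immediate from how duplicates inherit adjacencies, and for $\Rightarrow$ the triangle-freeness of $H_2$ and independence of each $H_j$ ($j\ge 3$) force exactly two $H_2$-nodes and one node per higher layer, while the observation that $x_i$'s only neighbors in layers $\le i-1$ are $\{\pi(x_i)\}\cup N_{G_{i-1}}(\pi(x_i))$ combined with $G_{i-1}$ being $(i-1)$-partite (\cref{obs:gk_k_partite}) rules out $\pi(x_i)$ falling outside the earlier clique. Each of your swap steps preserves the characterization, and every node that appears (a node of $K$ or $K'$, or a duplicate $u[i]$, $v[i]$) is adjacent to $u$ or $v$, so your whole chain lives inside $\mathcal{B}(\{u,v\},1)\subseteq\mathcal{B}(\{u,v\},k-1)$. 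Your argument is therefore both more direct and quantitatively sharper (radius $1$ rather than $k-1$), at the cost of having to prove the clique characterization; the paper's induction avoids that characterization but pays with a geometrically growing radius. Both prove the claim as stated; yours would propagate to a small simplification of \cref{lem:perm-cons} (one could take $\ell=2$ instead of $\ell=k$ there).
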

\begin{proof}
We prove the claim by induction on $k$.
For the base case of $k = 2$, the claim holds because $K \leftrightarrow \{u,v\} \leftrightarrow K'$ in the 1-radius neighborhood of $\{u,v\}$.
For the inductive step, suppose the claim holds for $k-1$.
Observe that any $k$-clique of $G_k$ contains exactly two nodes from $H_2$ and exactly one node from $H_i$ for each $i \in \{3, \ldots, k\}$, since $H_2$ is triangle-free and $H_i$ is an independent set for each $i \in \{3, \ldots, k\}$.
Therefore, $K_\bullet = K \setminus H_k$ and $K_\bullet' = K' \setminus H_k$ are two $(k-1)$-cliques in $G_{k-1}$.

If $K_\bullet = K_\bullet'$, then we already have $K \leftrightarrow K'$, so from now on we assume $K_\bullet \neq K_\bullet'$.
By the induction hypothesis, there exists a sequence of distinct $(k-1)$-cliques $(K_\bullet = K_1, K_2, \ldots, K_t = K_\bullet')$ in the subgraph of $G_{k-1}$ induced by $\mathcal{B}(\{u, v\}, k-2)$ such that $K_1 \leftrightarrow K_2 \leftrightarrow \cdots \leftrightarrow K_{t-1} \leftrightarrow K_{t}$. 
        
Now, we focus on the two $(k-1)$-cliques $K_i$ and $K_{i + 1}$ for any $i \in [t-1]$. 
\begin{itemize}
    \item Consider the $(k-2)$-clique $\kappa = K_i \cap K_{i + 1}$.
    \item Let $u_1$ be the unique node in  $K_i \setminus \kappa$.
    \item Let $v_1 \in H_{k}$ be the duplicate of $u_1$ in the construction of $G_k$ from $G_{k-1}$ (i.e.,  $\pi(v_1) = u_1$).
    \item Let $u_2$ be the unique node in  $K_{i+1} \setminus \kappa$.
    \item Let $v_2 \in H_{k}$ be the duplicate of $u_2$ in the construction of $G_k$ from $G_{k-1}$ (i.e.,  $\pi(v_2) = u_2$).
    \item Select $w \in H_{k}$ to be the duplicate of any node in $\kappa$. 
\end{itemize}
%Suppose the set of $k - 1$ nodes they share are $\kappa$. We denote the only node in $V(K_i) \setminus \kappa$ by $u_1$ and the only node in $V(K_{i + 1}) \setminus \kappa$ by $u_2$. Also, we let $v_1 \in H_{k + 1}$ be the duplicate of $u_1$ and $v_2 \in H_{k + 1}$ be the duplicate of $u_2$, i.e., $\pi(v_1) = u_1$ and $\pi(v_2) = u_2$. Lastly, let $w$ be the copy of a node in $\kappa$. 
Observe that $v_1$ is adjacent to $u_1$ and all nodes in $\kappa$, $v_2$ is adjacent to $u_2$ and  all nodes in $\kappa$, and $w$ is adjacent to $u_1$, $u_2$, and all nodes in $\kappa$. Now, consider the following four $k$-cliques (see \Cref{fig:gklk}):
\begin{align*}
   Q_{i,1} = Q_a &= \kappa \cup \{v_1, u_1\},  &&& Q_b &=\kappa \cup \{u_1, w\},\\
   Q_c &=\kappa \cup \{w, u_2\}, &&& Q_{i,2} = Q_d &= \kappa \cup \{u_2, v_2\}.
\end{align*}
We have 
\[Q_{i,1} =  Q_a \leftrightarrow Q_b \leftrightarrow Q_c \leftrightarrow Q_d = Q_{i,2},\]
and these $k$-cliques have the following properties.
        \begin{itemize}
            \item $Q_{i, 1}$ is the union of $K_i$ and the duplicate of a node in $K_i$.
            \item $Q_{i, 2}$ is the union of $K_{i + 1}$ and the duplicate of a node in $K_{i + 1}$.
            \item All the nodes in $Q_a$, $Q_b$, $Q_c$, and $Q_d$ are confined to $\mathcal{B}(\{u, v\}, k-1)$.
        \end{itemize}
        \begin{figure}
            \centering
            \includegraphics[width=0.35\linewidth]{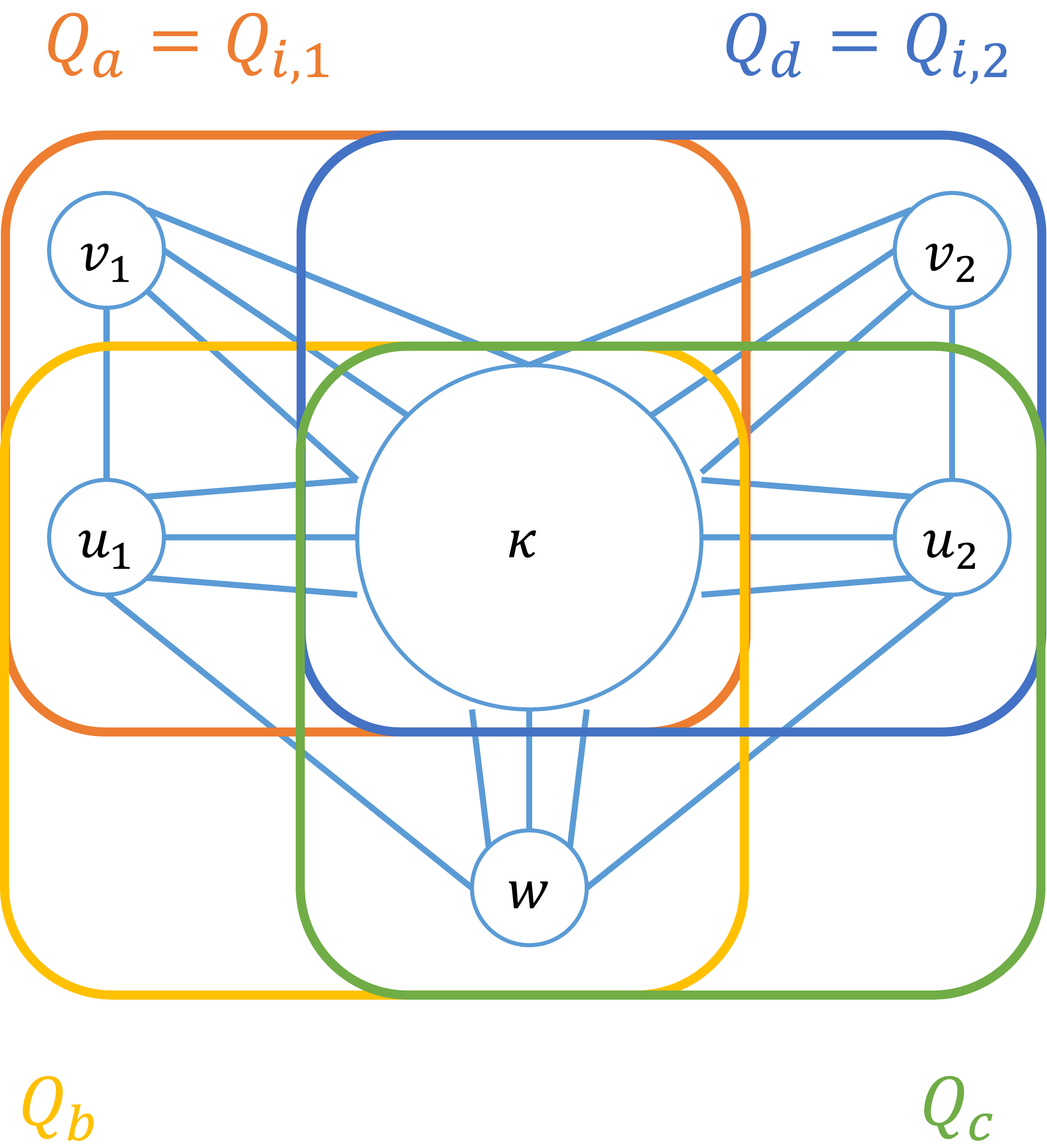}
            \caption{The $k$-cliques $Q_a$, $Q_b$, $Q_c$, and $Q_d$.}
            \label{fig:gklk}
        \end{figure}
By concatenating $K$, the sequence of $k$-cliques $Q_{i,1} =  Q_a \leftrightarrow Q_b \leftrightarrow Q_c \leftrightarrow Q_d = Q_{i,2}$ from $i=1$ to $i=t-1$, and $K'$ into a sequence, we obtain that
        \[
            K \leftrightarrow Q_{1, 1} \overset{\ast}{\leftrightarrow} Q_{1, 2} \leftrightarrow Q_{2, 1} \overset{\ast}{\leftrightarrow} Q_{2, 2} \leftrightarrow \cdots \leftrightarrow Q_{t- 1, 1} \overset{\ast}{\leftrightarrow} Q_{t- 1, 2} \leftrightarrow K', 
        \]
        and all of the $k$-cliques in the sequence are confined to $\mathcal{B}(\{u, v\}, k-1)$.
\end{proof}

We are ready to show that $G_k$ admits a locally inferable unique coloring.

\begin{lemma}\label{lem:perm-cons}
    For any constant $k\geq 2$,  $G_k \in \mathcal{L}_{k,\ell}$ with $\ell \in O(1)$.
\end{lemma}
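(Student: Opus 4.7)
I plan to show that $G_k \in \mathcal{L}_{k, k+1}$, so $\ell = k+1 \in O(1)$ since $k$ is a constant. Fix any connected subgraph $G' \subseteq G_k$ with vertex set $V'$, and fix two proper $k$-colorings $c_1, c_2$ of $G_k[\mathcal{B}(V', k+1)]$. My goal is to produce a single permutation $\phi$ of $[k]$ with $c_2|_{V'} = \phi \circ c_1|_{V'}$. The key reduction is to compare $c_1$ and $c_2$ on $k$-cliques: by \Cref{clm1}, each $v \in V'$ lies in a $k$-clique $K(v)$, and since any clique has diameter $1$ in $G_k$, we have $K(v) \subseteq \mathcal{B}(V', 1) \subseteq \mathcal{B}(V', k+1)$. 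Hence both $c_1|_{K(v)}$ and $c_2|_{K(v)}$ are bijections onto $[k]$, and differ by a unique permutation $\phi_v$.

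The next step is a swap-invariance lemma: if two $k$-cliques $K = \kappa \cup \{b\}$ and $K' = \kappa \cup \{b'\}$ share $k-1$ nodes (i.e.\ $K \leftrightarrow K'$) and both lie in $\mathcal{B}(V', k+1)$, then $\phi_K = \phi_{K'}$. Indeed, in any proper $k$-coloring, $b$ and $b'$ are both forced to take the unique color of $[k]$ missing from $\kappa$, so $c_j(b) = c_j(b')$ for $j = 1, 2$. By induction along a $\overset{\ast}{\leftrightarrow}$-chain contained in $\mathcal{B}(V', k+1)$, the permutation is invariant over the whole chain.

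It then suffices, fixing a reference node $r \in V'$, to exhibit for each $v \in V'$ a $\overset{\ast}{\leftrightarrow}$-chain inside $\mathcal{B}(V', k+1)$ connecting $K(r)$ to $K(v)$, by traversing a path $r = v_0, v_1, \ldots, v_t = v$ in $G'$. For consecutive $v_i, v_{i+1}$, \Cref{clm2} gives that either $\pi_\diamond(v_i) = \pi_\diamond(v_{i+1})$ or they form an edge in $H_2$. In the edge case, I apply \Cref{clm3} directly with this edge: since $\pi_\diamond(v_j) \in K(v_j) \subseteq \mathcal{B}(V', 1)$, the resulting chain lies in $\mathcal{B}(\{\pi_\diamond(v_i), \pi_\diamond(v_{i+1})\}, k-1) \subseteq \mathcal{B}(V', k)$. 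In the equal case, with common image $w$, I pick any $H_2$-neighbor $w'$ of $w$ (which exists since $H_2$ is a grid of side $\sqrt{n} \geq 2$) and, by \Cref{clm1}, any $k$-clique $K''$ containing $w'$; then two applications of \Cref{clm3} with edge $\{w, w'\}$ yield $K(v_i) \overset{\ast}{\leftrightarrow} K''$ and $K'' \overset{\ast}{\leftrightarrow} K(v_{i+1})$, with both chains confined to $\mathcal{B}(\{w, w'\}, k-1) \subseteq \mathcal{B}(V', k+1)$ since $w \in \mathcal{B}(V', 1)$ and $w' \in \mathcal{B}(V', 2)$. Applying the swap-invariance lemma along the concatenated chain yields $\phi_{v_i} = \phi_{v_{i+1}}$, hence $\phi_v = \phi_r$ for every $v \in V'$; since $v \in K(v)$, this gives $c_2(v) = \phi_r(c_1(v))$ on all of $V'$.

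The main obstacle is the distance bookkeeping to guarantee that every invoked clique and every intermediate clique in the $\overset{\ast}{\leftrightarrow}$-chains stays within $\mathcal{B}(V', \ell)$, so that $c_1, c_2$ are actually defined there. The equal case is the tightest: because the auxiliary anchor $w'$ lies one step further from $V'$ than $\pi_\diamond(v_i)$, this is what forces $\ell = k+1$ rather than $\ell = k$. Everything else is essentially mechanical given \Cref{clm1,clm2,clm3}.
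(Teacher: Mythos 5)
Your proof is correct and follows the same skeleton as the paper's (\Cref{clm1}, \Cref{clm2}, \Cref{clm3} together with $\overset{\ast}{\leftrightarrow}$-chains of $k$-cliques along a path in $G'$), but you are more careful than the paper in one real respect. The paper sets $\ell = k$ and argues: take a $u$-$v$ path $(w_1,\ldots,w_t)$ in $G'$, let $K_i$ be a $k$-clique through $w_i$ and $\pi_\diamond(w_i)$ via \Cref{clm1}, then cite \Cref{clm2} to say $\pi_\diamond(w_i)$ and $\pi_\diamond(w_{i+1})$ are adjacent and feed this into \Cref{clm3}. However, \Cref{clm2} as stated is actually false when one endpoint is an ancestor of the other: if $v = u^\ast$ is the duplicate of $u$, then $\{u,v\}$ is an edge of $G_k$ but $\pi_\diamond(u) = \pi_\diamond(v)$, which cannot be adjacent to itself (the induction in the paper's proof of \Cref{clm2} silently assumes $u \neq \pi(v)$). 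The correct statement is ``$\pi_\diamond(u) = \pi_\diamond(v)$ or they are adjacent,'' and the paper's proof of \Cref{lem:perm-cons} does not address the equal case, so it cannot invoke \Cref{clm3} there. Your detour through an auxiliary $H_2$-neighbor $w'$ of the common image $w$ and an intermediate clique $K''$ precisely closes this gap; the price is one extra hop of distance, which is why you end up with $\ell = k+1$ rather than the paper's $\ell = k$. Since both are $O(1)$ for constant $k$, this does not affect the theorem. Your permutation bookkeeping ($\phi_v$ on each $K(v)$, plus the swap-invariance observation that two $k$-cliques sharing $k-1$ nodes force the same permutation because permutations of $[k]$ agreeing on $k-1$ points agree everywhere) is a slightly more explicit rendering of the paper's ``fixing the coloring of $K_1$ determines the coloring of $K_t$,'' and your distance accounting is sound throughout.
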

\begin{proof}
Set $\ell = k \in O(1)$.
  Given a connected subgraph $G'=(V', E')$ of $G_k$, we show that all $k$-colorings of the $\ell$-radius neighborhood of $V'$ in $G_k$ restricting to $V'$ yield the \emph{same} $k$-coloring of $V'$ up to permutation. Consider any proper $k$-coloring of the $\ell$-radius neighborhood of $V'$ in $G_k$, and let $u$ and $v$ be two arbitrary nodes in $G'$. We will find a $k$-clique $K$ with $u \in K$ such that the color of $v$ is \emph{uniquely determined} given the coloring of $K$.

  Since $G'$ is connected, there exists a $u$-$v$ path $(u = w_1, w_2, \ldots, w_t = v)$ in $G'$. By \Cref{clm1}, there exists a  $k$-clique $K_i$ in $G_k$ such that $\{w_i, \pi_{\diamond}(w_i) \} \subseteq K_i$ for each $i \in [t]$.  By \Cref{clm2},  $\pi_{\diamond}(w_i)$ and $\pi_{\diamond}(w_{i + 1})$ are adjacent in $G_k$ for each $i \in [t-1]$. By \Cref{clm3}, we have
  \[K_1 \overset{\ast}{\leftrightarrow}  K_2 \overset{\ast}{\leftrightarrow} 
 \cdots  \overset{\ast}{\leftrightarrow} K_{t-1} \overset{\ast}{\leftrightarrow} K_t\]
  in the subgraph of $G_k$  induced by the $(k-1)$-radius neighborhood of $\{\pi_{\diamond}(w_1), \pi_{\diamond}(w_2), \ldots, \pi_{\diamond}(w_t)\}$, which is within the $k$-radius neighborhood of the $u$-$v$ path $(u = w_1, w_2, \ldots, w_t = v)$.

  For any two $k$-cliques $Q$ and $Q'$ such that $Q \leftrightarrow Q'$, fixing the $k$-coloring of $Q$ uniquely determines the $k$-coloring of $Q'$. Thus, fixing the coloring of the first clique $K_1$, which contains $u$, uniquely determines the coloring of the last clique $K_{t}$, which contains $v$.
    %\begin{enumerate}
     %   \item Let $u, v \in V(G')$. Since $G'$ is connected, there exists a $u$-$v$ path $u = w_1, w_2, \dots, w_t, w_{t + 1} = v$ in $G'$.
      %  \item For any $i \in [t]$, we consider the nodes $w_i$ and $w_{i + 1}$. By the first claim, we can find two $k$-cliques $K_i$ and $K_{i + 1}$ such that $w_i, \pi_{\diamond}(w_i) \in V(K_i)$ and $w_{i + 1}, \pi_{\diamond}(w_{i + 1}) \in V(K_{i + 1})$. By the second claim, $\pi_{\diamond}(w_i)$ and $\pi_{\diamond}(w_{i + 1})$ are adjacent. By the fourth claim, we have $K_i \overset{\ast}{\leftrightarrow} K_{i + 1}$ and all $k$-cliques connecting them are visible within an $O(1)$-radius neighborhood of $G'$.
      %  \item For any two $k$-cliques $Q$ and $Q'$ such that $Q \leftrightarrow Q'$, fixing the $k$-coloring of $Q$ also uniquely determines the $k$-coloring of $Q'$. Thus, the coloring of $K_i$ uniquely determines the coloring of $K_{i + 1}$. This implies that fixing the coloring of $K_1$ uniquely determines the coloring of $K_{t + 1}$.
      %  \item Thus, fixing the coloring of $u$ and $K_1$ uniquely determines the color of every node in $G'$.
    %\end{enumerate}
    %Lastly, by choosing $G' = G_k$, we obtain that the $k$-color partition of $G_k$ is unique. In conclusion, we have $G_k \in \mathcal{L}_{k,\ell}$.
\end{proof}

\subsubsection{Locality Lower Bound}
Next, we establish a lower bound on the locality of $(k+1)$-coloring $G_k$.
%with the number of nodes in $G_k$ and $k$.
\begin{lemma}\label{lem:k+1-k-lower}
    Let $k \geq 2$. In the $\onlineLOCAL$ model, the locality of $(k+1)$-coloring $G_k$ is $\Omega\left( \log \frac{n_k}{2^{k-2}} \right)$, where $n_k$ is the number of nodes in $G_{k}$.
\end{lemma}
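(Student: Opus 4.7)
The plan is to prove the lemma by induction on $k$. The base case $k=2$ is immediate from \Cref{thm:complexity}: $G_2$ is a $\left(\sqrt{n}\times\sqrt{n}\right)$ grid with $n_2 = n$, so $\log(n_2/2^{0}) = \log n$, and $3$-coloring $G_2$ requires locality $\Omega(\log n)$ in the $\onlineLOCAL$ model.

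For the inductive step, I would reduce $(k+1)$-coloring $G_k$ to $(k+2)$-coloring $G_{k+1}$. Given any $\onlineLOCAL$ algorithm $\cA$ that $(k+2)$-colors $G_{k+1}$ with locality $T$, construct an $\onlineLOCAL$ algorithm $\cA'$ that $(k+1)$-colors $G_k$ with locality $T+1$ as follows. When the adversary on $G_k$ reveals a node $v$, the simulator internally reveals both $v$ and its duplicate $v^\ast \in H_{k+1}$ to $\cA$ running on $G_{k+1}$. Since $v^\ast$ is adjacent to $v$ in $G_{k+1}$, and one readily checks that $\dist_{G_{k+1}}$ restricted to $V(G_k)$ agrees with $\dist_{G_k}$ (any path through a duplicate can be rerouted through $V(G_k)$ without increasing its length), the $T$-radius neighborhood of $\{v,v^\ast\}$ in $G_{k+1}$ is contained in $\cB_{G_{k+1}}(v, T+1)$, which $\cA'$ reconstructs from $\cB_{G_k}(v, T+1)$ by attaching duplicates. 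Letting $c \colon V(G_{k+1}) \to [k+2]$ denote the coloring produced by $\cA$, define the output $f(v) = c(v)$ if $c(v) \neq k+2$, and $f(v) = c(v^\ast)$ otherwise.

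The key correctness check is that $f$ is a proper $(k+1)$-coloring of $G_k$. Because $v \sim v^\ast$ in $G_{k+1}$, we have $c(v^\ast) \neq c(v) = k+2$ in the exceptional case, so $f(v) \in [k+1]$ for all $v$. For any edge $\{u,v\} \in E(G_k)$, the construction of $G_{k+1}$ makes $u^\ast$ adjacent to $v$ and $v^\ast$ adjacent to $u$; note that $u^\ast$ and $v^\ast$ are \emph{not} adjacent, so the induced subgraph on $\{u,v,u^\ast,v^\ast\}$ is $K_4$ minus one edge rather than a full clique, but the reduction only relies on the edges $u^\ast v$ and $v^\ast u$. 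A short case analysis then gives $f(u) \neq f(v)$: both $c(u)$ and $c(v)$ cannot equal $k+2$ since $u \sim v$; if neither equals $k+2$, then $f(u) = c(u) \neq c(v) = f(v)$; and if exactly one, say $c(u)$, equals $k+2$, then $f(u) = c(u^\ast) \neq c(v) = f(v)$ via the edge $u^\ast v$.

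Finally, the size relation $n_{k+1} = 2 n_k$ from \Cref{obs:num_nodes} closes the induction: if there were an $\onlineLOCAL$ algorithm that $(k+2)$-colors $G_{k+1}$ with locality $T \in o(\log(n_{k+1}/2^{k-1}))$, the reduction would yield an algorithm that $(k+1)$-colors $G_k$ with locality $T+1 \in o(\log(n_k/2^{k-2}))$, contradicting the inductive hypothesis. I do not expect any substantial obstacle here; the subtlest point is the adjacency bookkeeping on $\{u,v,u^\ast,v^\ast\}$, which must be carried out without invoking the missing edge $u^\ast v^\ast$.
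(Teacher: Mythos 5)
Your proof is correct and follows essentially the same strategy as the paper: induction on $k$ with the base case from \Cref{thm:complexity}, and a black-box simulation in the inductive step where each $v\in V(G_k)$ is colored by $c(v)$ unless $\mathcal{A}$ assigned $k+2$, in which case it inherits $c(v^\ast)$, with correctness resting on the edges $v^\ast u$ and $u^\ast v$. The only cosmetic differences are that the paper queries $\mathcal{A}$ on $v^\ast$ lazily (only when $c(v)=k+2$) rather than always, and it observes that the locality stays exactly $T$ rather than your conservative $T+1$ (your own distance remark that $\dist_{G_{k+1}}$ restricted to $V(G_k)$ equals $\dist_{G_k}$, together with $\mathcal{B}_{G_{k+1}}(v^\ast,T)\subseteq\{w,w^\ast : w\in\mathcal{B}_{G_k}(v,T)\}$, already gives the sharper bound); either way the asymptotic contradiction is unaffected.
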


\begin{proof}
    We prove the lemma by an induction on $k$.
    %We aim to demonstrate an adversarial strategy to select the ordering of nodes in $G_{k}$ such that achieving proper coloring of $G_k$ becomes impossible for any given $\onlineLOCAL$ algorithm $\mathcal{A}$ with locality $o\left( \log \frac{n_k}{2^{k-2}} \right)$.
    For the base case of $k = 2$, $G_2$ is the $\left(\sqrt{n} \times \sqrt{n}\right)$ simple grid, so the lower bound $\Omega\left( \log \frac{n_k}{2^{k-2}} \right) = \Omega(\log n)$ immediately follows from \Cref{thm:complexity}.
    %by considering $A_2$ as the adversary's order of node selection, as utilized in the proof of \Cref{thm:complexity}.

    For the inductive step, suppose the lower bound $\Omega\left( \log \frac{n_k}{2^{k-2}} \right)$ has been established for $G_{k}$. To prove the lemma by contradiction, suppose there exists an algorithm $\mathcal{A}$ that can properly color $G_{k+1}$ using colors in $[k+2]$ and with locality $T \in o\left( \log \frac{n_{k+1}}{2^{k-1}} \right)$.

    %there exists an adversarial selection of order $A_k$ for nodes in $G_k$ such that no algorithm can properly color $G_k$ with $(k+1)$ colors, say $[k+1]$ unless the locality is $\Omega\left( \log \frac{n_k}{2^{k-2}} \right)$. In order to prove the claim for $G_{k+1}$, by contradiction, assume that there exists an algorithm $\mathcal{A}$ that can properly color $G_{k+1}$ using colors $[k+2]$ and with locality $T=o\left( \log \frac{n_{k+1}}{2^{k-1}} \right)$.

    Using $\mathcal{A}$ as a black box, we demonstrate an algorithm  $\mathcal{A}'$ to properly color the nodes in $G_k$ with colors in $[k+1]$, as follows. When a node $u$ in $G_k$ is revealed by the adversary, we first ask the algorithm $\mathcal{A}$ to color the same node $u$ in $G_{k+1}$. Let $c \in [k + 2]$ be the color chosen for $u$ in $G_{k+1}$. If $c \in [k+1]$, then we color node $u$ in $G_k$ with $c$. If $c = k + 2$, then we ask the algorithm $\mathcal{A}$ to color its duplicate $u^\ast$ in $G_{k+1}$ and color node $u$ in $G_k$ with the same color $c'$ used by $u^\ast$. Observe that $c' \in [k+1]$, as $u$ in $G_{k+1}$ has already been colored $c = k+2$ by  $\mathcal{A}$. Therefore, $\mathcal{A}'$ only uses colors from $[k+1]$.

    %When a node $u$ in $G_k$ is revealed by the adversary, we first ask the algorithm $\mathcal{A}$ to color its duplicate $u^\ast$ in $G_{k+1}$.
    
    We show that $\mathcal{A}'$ is correct. 
    Suppose $\mathcal{A}'$ produces a monochromatic edge $\{u,v\}$ in $G_k$.
    Let $c$ be the color used by both $u$ and $v$.
    By the algorithm description, we know that the colors of $u$ and $v$ in $G_{k+1}$ given by $\mathcal{A}$  must come from $\{c, k+2\}$. By the correctness of $\mathcal{A}$,  $\{u,v\}$ cannot be monochromatic in the coloring of $G_{k+1}$ by $\mathcal{A}$. Therefore, without loss of generality, we may assume that $u$ is colored $c$ and  $v$ is colored $k+2$ in $G_{k+1}$ by $\mathcal{A}$.  By the algorithm description, the duplicate $v^\ast$ of $v$ must be colored $c$ in $G_{k+1}$ by $\mathcal{A}$, so $\{u, v^\ast\}$ is a monochromatic edge, contradicting the correctness of  $\mathcal{A}$. Therefore, $\mathcal{A}'$ outputs a proper $(k+1)$-coloring of $G_k$.

%suppose a node $v$ in $G_k$ is assigned color $c \in [k + 1]$, then either $v_1$ has color $c$ or $v_1$ has color $k + 2$ while $v_2$ has color $c$. When we want to decide the color of a neighbor $u$ of $v$, then in the first case, $u$ will not receive color $c$ since $v_1$ is adjacent to $u_1$ and $u_2$, and in the second case, $u_1$ cannot have color $k + 2$ or $c$ (since it is adjacent to $v_1$ and $v_2$), so the color of $u$ is also not $c$. 
     
     The locality of algorithm $\mathcal{A}'$ is the same as that of algorithm $\mathcal{A}$, which is $T \in o\left( \log \frac{n_{k+1}}{2^{k-1}} \right)=o\left( \log \frac{n_{k}}{2^{k-2}} \right)$, due to \Cref{obs:num_nodes}, contradicting the induction hypothesis, so we have the desired locality lower bound $\Omega\left( \log \frac{n_{k+1}}{2^{k-1}} \right)$ for $G_{k+1}$. 
\end{proof}

We are ready to prove \Cref{thm:lpcclower}.

\lpcclower*
\begin{proof}
 By \Cref{lem:perm-cons}, $G_k \in \mathcal{L}_{k,\ell}$ with $\ell \in O(1)$. Hence the theorem follows from \Cref{lem:k+1-k-lower} along with the assumption that $k$ is a constant.
\end{proof}

%\newpage

\printbibliography

\end{document}